\def\p@subsection{}
\def\p@subsubsection{}
\newtheorem{theorem}{Theorem}
\newtheorem{lemma}[theorem]{Lemma}
\theoremstyle{remark}
\theoremstyle{definition}
\newtheorem{definition}[theorem]{Definition}
\newtheorem{corollary}[theorem]{Corollary}
\newcommand{\hmin}[1]{H^{#1}_{\mathrm{min}}}
\newcommand{\hmax}[1]{H^{#1}_{\mathrm{max}}}
\newcommand{\linearTermEAT}{t}
\renewcommand{\tr}{\mathrm{Tr}}
\newcounter{protocol}
\newenvironment{protocol}[1]
{
    \par\addvspace{\topsep}
    \noindent
    \tabularx{\linewidth}{@{} X @{}}
    \hline
    \refstepcounter{protocol}
    \textbf{Protocol \theprotocol} #1 \\
    \hline
}
{ \\
    \hline
    \endtabularx
    \par\addvspace{\topsep}
}
\begin{document}
\title{Entropy Accumulation under Post-Quantum Cryptographic Assumptions}
\author{Ilya Merkulov}
\author{Rotem Arnon-Friedman}
\affiliation{The Center for Quantum Science and Technology, Department of Physics of Complex Systems, Weizmann Institute of Science, Rehovot, Israel}

\date{\today}

\begin{abstract}
    In device-independent (DI) quantum protocols, the security statements are oblivious to the characterization of the quantum apparatus -- they are based solely on the classical interaction with the quantum devices as well as some well-defined assumptions.
    The most commonly known setup is the so-called non-local one, in which two devices that cannot communicate between themselves present a violation of a Bell inequality. 
    In recent years, a new variant of DI protocols, that requires only a single device, arose.
    In this novel research avenue, the no-communication assumption is replaced with a computational assumption, namely, that the device cannot solve certain post-quantum cryptographic tasks.   
    The protocols for, e.g., randomness certification, in this setting that have been analyzed in the literature used ad hoc proof techniques and the strength of the achieved results is hard to judge and compare due to their complexity. 
    
    Here, we build on ideas coming from the study of non-local DI protocols and develop a modular proof technique for the single-device computational setting.
    We present a flexible framework for proving the security of such protocols by utilizing a combination of tools from quantum information theory, such as the entropic uncertainty relation and the entropy accumulation theorem.
    This leads to an insightful and simple proof of security, as well as to explicit quantitative bounds.   
    Our work acts as the basis for the analysis of future protocols for DI randomness generation, expansion, amplification and key distribution based on post-quantum cryptographic assumptions.  
\end{abstract}

\maketitle

    \section{Introduction}\label{sec:introduction}
    
The fields of quantum and post-quantum cryptography are rapidly evolving. 
In particular, the device-independent (DI) approach for quantum cryptography is being investigated in different setups and for various protocols. 
Consider a cryptographic protocol and a \emph{physical device} that is being used to implement the protocol. The DI paradigm states that when proving the security of the protocol, one should treat the device itself as an untrusted one, prepared by an adversarial entity.
Only limited, well-defined assumptions regarding the inner-workings of the device are placed. 
In such protocols, the honest party, called the verifier here, interacts with the untrusted device in a black-box manner, using classical communication. The security proofs are then based on properties of the  transcript of the interaction, i.e., the classical data collected during the execution of the protocol, and the underlying assumptions. 

The most well-studied DI setup is the so called ``non-local setting''~\cite{brunner2014bell,scarani2019bell}. There, the protocols are being implemented using (at least) \emph{two} untrusted devices and the assumption being made is that the devices cannot communicate between themselves during the execution of the protocol (or parts of it). 
In recent years, another variant has been introduced: Instead of working with two devices, the protocol requires only a \emph{single} device and the no-communication assumption is replaced by an assumption regarding the computational power of the device. More specifically, one assumes that during the execution of the protocol, the device is unable to solve certain \emph{computational problems}, such as Learning With Errors (LWE)~\cite{regev2009lattices}, which are believed to be hard for a quantum computer. (The exact setup and assumptions are explained in Section~\ref{sec:rand}).
Both models are DI, in the sense that the actions of the quantum devices are uncharacterized.
Figure~\ref{fig:setups} schematically presents the two scenarios.

    \begin{figure}
        \includegraphics[scale=0.45]{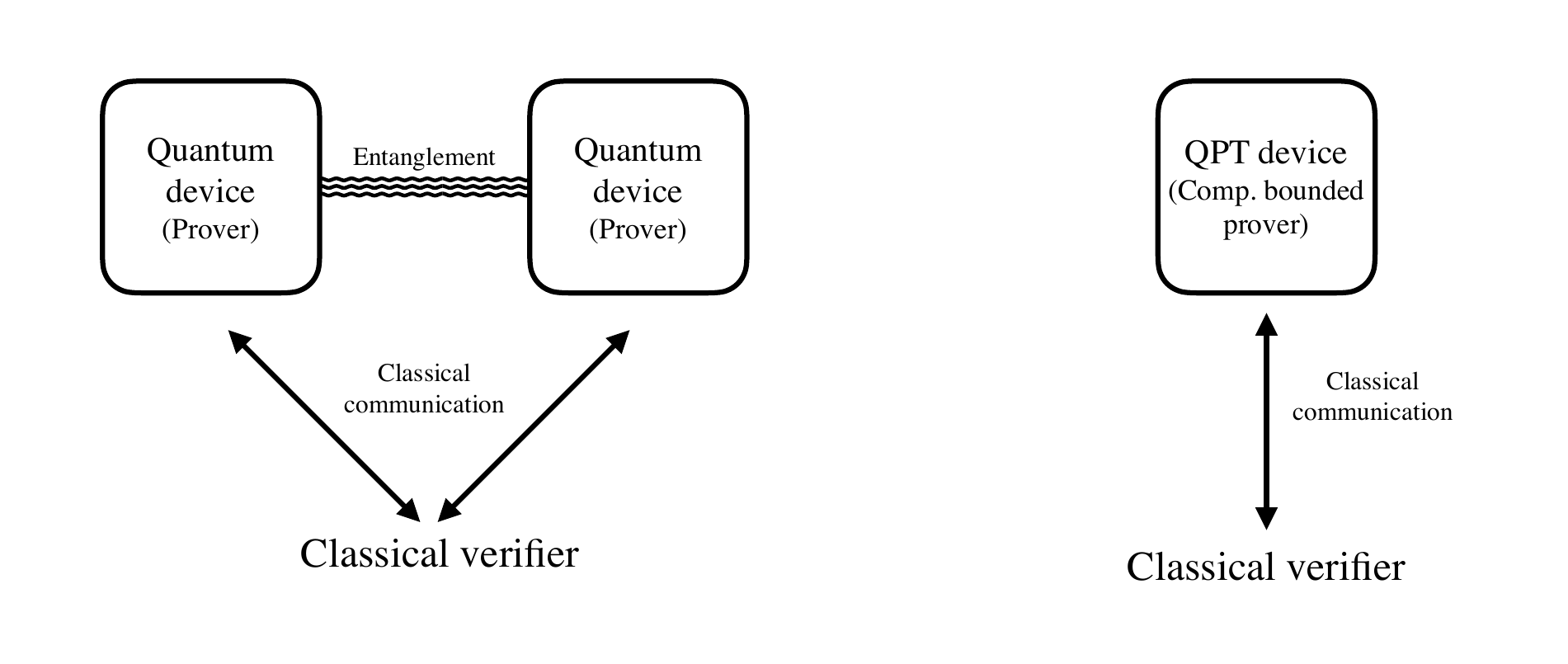}
    \centering
    \caption{Two setups for device-independent protocols.
    On the left, a classical verifier is interacting classically with two non-communicating but otherwise all powerful quantum devices (also called provers) that can share entanglement.
    On the right, the verifier is interacting with a single polynomial-time quantum computer.}
    \label{fig:setups}
    \end{figure}

As in practice it might be challenging to assure that two quantum devices do not communicate, as required in the non-local setting, the incentive to study what can be done using only a single device is high. 
Indeed, after the novel proposals made in~\cite{mahadev2018classical,brakerski2021cryptographic} for DI protocols for verification of computation and randomness certification, many more protocols for various tasks and computational assumptions were investigated; see, e.g.,~\cite{gheorghiu2019computationally,brakerski2020simpler,metger2021self,metger2021device,vidick2021classical,kahanamoku2022classically,liu2022depth,gheorghiu2022quantum,brakerski2023simple,natarajan2023bounding,aaronson2023certified}.  
Experimental works also followed, aiming at verifying quantum computation in the model of a single computationally restricted device~\cite{zhu2021interactive,stricker2022towards}.
With this research agenda advancing, more theoretical and experimental progress needs to be done-- a necessity before one could estimate whether this avenue is of relevance for future quantum technologies or of sheer theoretical interest. 

In this work we are interested in the task of generating randomness.\footnote{Unless otherwise written, when we discuss the generation of randomness we refer to a broad family of tasks: randomness certification, expansion and amplification as well as, potentially, quantum key distribution. In the context of the current work, the differences are minor.} 
We consider a situation in which the device is prepared by a quantum adversary and then given to the verifier (the end-user or costumer). The verifier wishes to use the device in order to produce a sequence of random bits. In particular, the bits should be random also from the perspective of the quantum adversary. 
The protocol that indicates the form in which the verifier interacts with the device (interchangeably also called the prover) should be constructed in a way that guarantees that either the verifier aborts with high probability if the device is not to be trusted or that, indeed, we can certify that the bits produced by the device are random and unknown to the adversary. 

Let us slightly formalize the above. 
The story begins with the adversary preparing a quantum state $\rho^{\text{in}}_{PE}$; the marginal $\rho^{\text{in}}_{P}$ is the initial state of the device given to the verifier while the adversary keeps the quantum register $E$ for herself. 
In addition to the initial state $\rho^{\text{in}}_{P}$, the device is described by the quantum operations, e.g., measurements, that it performs during the execution of the protocol. 
We assume that the \emph{device} is a quantum polynomial time (QPT) device and thus can only perform efficient operations. Namely, the initial state $\rho^{\text{in}}_{P}$ is of polynomial size and the operations can be described by a polynomial size quantum circuit; these are formally defined in Section~\ref{sec:pre}. 
The verifier can perform only (efficient) classical operations.
Together with the verifier, the initial state of all parties is denoted by $\rho^{\text{in}}_{PVE}$.\footnote{In a simple scenario one can consider $\rho^{\text{in}}_{PVE}=\rho^{\text{in}}_{PE}\otimes\rho^{\text{in}}_{V}$, i.e., the verifier is initially decoupled from the device and the adversary. This is, however, not necessarily the case in, e.g., randomness amplification protocols~\cite{kessler2020device}. We therefore allow for this flexibility with the above more general notation.} 
The verifier then executes the considered protocol using the device.

All protocols consist of what we call ``test rounds'' and ``generation rounds''. 
The goal of a test round is to allow the verifier to check that the device is performing the operations that it is asked to apply by making it pass a test that only certain quantum devices can pass. The test and its correctness are based on the chosen computational assumption.\footnote{In the non-local setting, the test is based on the violation of a Bell inequality, or winning a non-local game with sufficiently high winning probability. Here the computational assumption ``replaces'' the Bell inequality.} 
For example, in~\cite{brakerski2021cryptographic}, the cryptographic scheme being used is ``Trapdoor Claw-Free Functions'' (TCF)-- a family of pairs of injective functions~$f_{0},f_{1}:\{0,1\}^n \rightarrow \{0,1\}^n$.
Informally speaking, it is assumed that, for every image $y$: (a) Given a ``trapdoor'' one could classically and efficiently compute two pre-images~$x_0 ,x_1$ such that $f_0(x_0)=f_1(x_1)=y$; (b) Without a trapdoor, even a quantum computer cannot come up with~$x_0 ,x_1$ such that $f_0(x_0)=f_1(x_1)=y$ (with high probability). 
While there exists no efficient quantum algorithm that can compute both pre-images~$x_0, x_1$ for a given image~$y$ without a trapdoor, a quantum device can nonetheless hold a \emph{superposition} of the pre-images by computing the function over a uniform superposition of all inputs to receive~$\sum_{y\in\{0,1\}^n}\left(\ket{0,x_0} + \ket{1,x_1}\right)\ket{y}$.
These insights (and more-- see Section~\ref{sec:pre_post_quantum} for details) allow one to define a test based on TCF such that a quantum device that creates $\sum_{y\in\{0,1\}^n}\left(\ket{0,x_0} + \ket{1,x_1}\right)\ket{y}$ can win while other devices that do not hold a trapdoor cannot. Moreover, the verifier, holding the trapdoor, will be able to check classically that the device indeed passes the test.

Let us move on to the generation rounds. In these rounds, the device produces the \emph{output bits} $O$, which are supposed to be random.
During the execution of the protocol some additional information, such as a chosen public key for example (or whatever is determined by the protocol), may be publicly announced or leaked; we denote this \emph{side information} by $S$. 

After executing all the test and generation rounds the verifier checks if the average winning probability in the test rounds is higher than some  pre-determined threshold probability $\omega\in(0,1)$. If it is, then the protocol continues and otherwise aborts.
Let the final state of the entire system, conditioned on not aborting the protocol, be~$\rho_{|\Omega}$. 
To show that randomness has been produced, one needs to lower-bound the conditional smooth min-entropy~$H^{\varepsilon}_{\min}(O|SE)_{\rho_{|\Omega}}$~\cite{tomamichel2010duality} (the formal definitions are given in Section~\ref{sec:pre}). Indeed, this is the quantity that tightly describes the amount of \emph{information-theoretically} uniform bits that can be extracted from the output~$O$, given~$S$ and~$E$, using a quantum-proof randomness extractor~\cite{renner2005universally,de2012trevisan}. 
The focus of our work is to supply explicit lower-bounds on $H^{\varepsilon}_{\min}(O|SE)_{\rho_{|\Omega}}$ in a general and modular way. 

An important remark is in order before continuing. Notice that in the above the \emph{device} is computationally bounded and yet we ask to get output bits $O$ that are information-theoretically secure with respect to the \emph{adversary}. This means that the adversary may keep her system $E$ and perform on it, using the knowledge of $S$, \emph{any}, not necessarily efficient, operation. By proving such a strong statement, it is implied that the computational assumption only needs to hold during the execution of the protocol in order to provide ever-lasting security of the final output. This also means that the outcomes are random with respect to the adversary even if the computational assumption, e.g., hardness of LWE, is broken after the creation of the bits. This property is termed ``security lifting'' and is a fundamental and crucial feature of all DI protocols based on computational assumptions.

\subsection{Motivation of the current work}

    The setup of two non-communicating devices has naturally emerged from the study of quantum key distribution~(QKD) protocols and non-local games (or Bell inequalities). As such, the quantum information theoretic toolkit for proving the security of protocols such as DIQKD, DI randomness certification and alike was developed over many years and used for the analysis of numerous quantum protocols (see, e.g., the survey~\cite{primaatmaja2022security}). 
    The well-established toolkit includes powerful techniques that allow bounding the conditional smooth min-entropy $H^{\varepsilon}_{\min}(O|SE)_{\rho_{|\Omega}}$ mentioned above. Examples for such tools are the entropic uncertainty relations~\cite{berta2010uncertainty}, the entropy accumulation theorem~\cite{dupuis2020entropy,metger2022generalised} and more. 
    The usage of these tools allows one to derive quantitatively strong lower bounds on $H^{\varepsilon}_{\min}(O|SE)_{\rho_{|\Omega}}$ as well as handle realistically noisy quantum devices~\cite{arnon2018practical,zapatero2023advances}.

    Unlike the protocols in the non-local setup, the newly developed protocols for, e.g., randomness certification with a single device restricted by its computational power, are each analyzed using ad hoc proof techniques. 
    On the qualitative side, such proofs make it harder to separate the wheat from the chaff, resulting in a less modular and insightful claims. 
    Quantitatively, the strength of the achieved statements is hard to judge-- they are most likely not strong enough to lead to practical applications and it is unknown whether this is due to a fundamental difficulty or a result of the proof technique. As a consequence, it is unclear whether such protocols are of relevance for  future technology. 

    In this work, we show how to combine the information theoretic toolkit with assumptions regarding the computational power of the device. More specifically, we prove lower bounds on the quantity $H^{\varepsilon}_{\min}(O|SE)_{\rho_{|\Omega}}$ by exploiting post-quantum cryptographic assumptions and quantum information-theoretic techniques, in particular the entropic uncertainty relation and the entropy accumulation theorem. Prior to our work, it was believed that such an approach cannot be taken in the computational setting (see the discussion in~\cite{brakerski2021cryptographic}). 
    Once a bound on $H^{\varepsilon}_{\min}(O|SE)_{\rho_{|\Omega}}$ is proven, the security of the considered protocols then follows from our bounds using standard tools.
    The developed framework is general and modular. We use the original work of~\cite{brakerski2021cryptographic} as an explicit example; the same steps can be easily applied to, e.g., the protocols studied in the recent works~\cite{brakerski2023simple,natarajan2023bounding}.

\subsection{Main ideas and results}

    The main tool which is used to lower bound the conditional smooth min-entropy in DI protocols in the non-local setting is the entropy accumulation theorem (EAT)~\cite{dupuis2020entropy,arnon2019simple}. The EAT deals with sequential protocols, namely, protocols that proceed in rounds, one after the other, and in each round some bits are being output. Roughly speaking, the theorem allows us to relate the total amount of entropy that accumulates throughout the execution of the protocol to, in some sense, an ``average worst-case entropy of a single round'' (see Section~\ref{sec:pre} for the formal statements). 
    It was previously unclear how to use the EAT in the context of computational assumptions and so~\cite{brakerski2021cryptographic} used an ad hoc proof to bound the total entropy. 

    The general structure of a protocol that we consider is shown in Figure~\ref{fig:eat}. 
    The initial state of the system is $\rho^{\text{in}}_{PVE}$. The protocol, as mentioned, proceeds in rounds. Each round includes interaction between the verifier and the prover and, overall, can be described by an efficient quantum channel $\mathcal{M}_i$ for round $i\in[n]$. The channels output some outcomes~$O_i$ and side information $S_i$. In addition, the device (as well as the verifier) may keep quantum and classical memory from previous rounds-- this is denoted in the figure by the registers $R_i$. We remark that there is only one device and the figure merely describes the way that the protocol proceeds. That is, in each round $i\in[n]$ the combination of the actions of the device and the verifier, together, define the maps $\mathcal{M}_i$. The adversary's system $E$ is untouched by the protocol.\footnote{One could also consider more complex protocols in which the adversary's information does change in some ways. This can be covered using the generalized EAT~\cite{metger2022generalised}. We do not do so in the current manuscript since all protocols so far fall in the above description but the results can be extended to the setup of~\cite{metger2022generalised}.}

    \begin{figure}
        \includegraphics[scale=0.5]{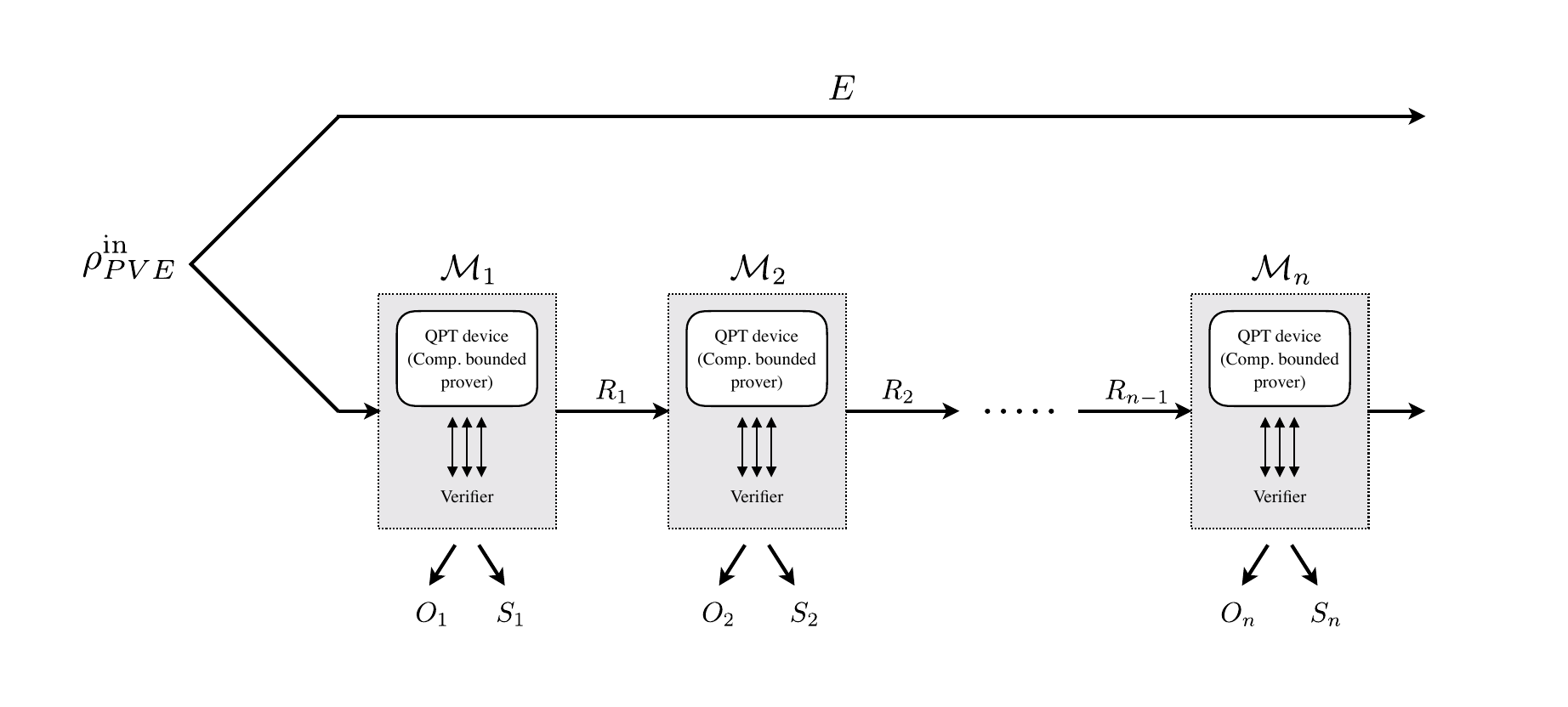}
    \centering
    \caption{The general structure of a protocol that we consider. The initial state of the entire system is $\rho^{in}_{PVE}$. The protocol proceeds in rounds: Each round includes interaction between the verifier and the prover, as shown by the gray boxes in the figure, and can be described by an efficient quantum channel $\mathcal{M}_i$ for every round $i\in[n]$. The channels output outcomes $O_i$ and side information $S_i$. The device may keep quantum memory from previous rounds using the registers $R_i$. The adversary’s system $E$ is untouched by the protocol. This structure fits the setup of the entropy accumulation theorem~\cite{dupuis2020entropy} and can easily be extended to that of the generalized entropy accumulation~\cite{metger2022generalised}. }\label{fig:eat}
    \end{figure}
    
    We are interested in bounding the entropy that accumulated by the end of the protocol: $H^{\varepsilon}_{\min}(\mathbf{O}|\mathbf{S}E)_{\rho_{|\Omega}}$, where $\mathbf{O}=O_1,\cdots,O_n$ and similarly $\mathbf{S}=S_1,\cdots,S_n$. 
    The EAT tells us that, under certain conditions, this quantity is lower bounded, to first order in $n$, by $\linearTermEAT n$ with $\linearTermEAT$  of the form 
    \begin{equation}\label{eq:t_intro}
        \linearTermEAT = \min_{\sigma\in \Sigma(\omega)} H(O_i|S_iE)_{\mathcal{M}_i(\sigma)} \;,
    \end{equation}
    where $\omega$ is the average winning probability of the device in the test rounds and $\Sigma(\omega)$ is the set of \emph{all} (including inefficiently constructed) states of polynomial size that achieve the winning probability $\omega$. 

    To lower bound the entropy appearing in Equation~\eqref{eq:t_intro} one needs to use the computational assumption being made. This can be done in various ways. 
    In the current work, we show how statements about anti-commuting operators, such as those proven in~\cite{brakerski2021cryptographic,brakerski2023simple,natarajan2023bounding}, can be brought together with the entropic uncertainty relation~\cite{berta2010uncertainty}, another tool frequently used in quantum information theory, to get the desired bound. 
    In combination with our usage of the EAT the bound on $H^{\varepsilon}_{\min}(\mathbf{O}|\mathbf{S}E)_{\rho_{|\Omega}}$ follows.

    The described techniques are being made formal in the rest of the manuscript. We use~\cite{brakerski2021cryptographic} as an explicit example,\footnote{For readers who are familiar with~\cite{brakerski2021cryptographic}, our work exploits a few lemmas from~\cite[Sections~6 and~7]{brakerski2021cryptographic}, which define ``the heart'' of the computational assumption in the context of randomness generation and then we replace~\cite[Section~8]{brakerski2021cryptographic} completely.} also deriving quantitative bounds. We discuss the implications of the quantitative results in Section~\ref{sec:conc} and their importance for future works. 
    Furthermore, our results can be used directly to prove full security of the DIQKD protocol of~\cite{metger2021device}.

    We add that on top of the generality and modularity of our technique, its simplicity contributes to a better understanding of the usage of post-quantum cryptographic assumptions in randomness generation protocols. 
    Thus, apart from being a tool for the analysis of protocols, we also shed light on what is required for a protocol to be strong and useful.
    For example, we can see from the explanation given in this section that the computational assumptions enter in three forms in Equation~\eqref{eq:t_intro}:
    \begin{enumerate}
        \item The channels $\mathcal{M}_i$ must be efficient.
        \item The states $\sigma$ must be of polynomial size.
        \item Due to the minimization, the states $\sigma$ that we need to consider may also be inefficient to construct, even though the device is efficient.
    \end{enumerate}
    Points 1 and 2 are the basis when considering the computational assumption and constructing the test; In particular, they allow one to bound $H(O_i|S_iE)_{\mathcal{M}_i(\sigma)}$, up to a negligible function $\eta(\lambda)$, where~$\lambda$ is a security parameter defined by the computational assumption. 
    
    Point 3 is slightly different. The set over which the minimization is taken, determines the strength of the computational assumption that one needs to consider. For example, it indicates that the protocol in~\cite{brakerski2021cryptographic} requires that the LWE assumption is hard even with a potentially-inefficient polynomial-size advice state. This is a (potentially) stronger assumption than the ``standard'' formulation of LWE. 
    Note that the stronger assumption is required even though the initial state of the device, $\rho^{\text{in}}_{P}$ \emph{is} efficient to prepare (i.e., it does not act as an advice state in this case). 
    In~\cite{brakerski2021cryptographic}, this delicate issue arises only when analyzing everything that can happen to the initial state throughout the entire protocol and conditioning on not aborting.
    In the current work, we directly see and deal with the need of allowing advice states from the minimization in Equation~\eqref{eq:t_intro}.

    \section{Preliminaries}\label{sec:pre}
    Throughout this work, we use the symbol~$\mathbbm{1}$ as the identity operator and as the characteristic function interchangeably;  the usage is clear due to context.
We denote~$x\leftarrow X$ when~$x$ is sampled uniformly from the set~$X$ or~$x\leftarrow\mathcal{D}$ when~$x$ is sampled according to a distribution~$\mathcal{D}$.
The Bernoulli distribution with~$p(0)=\gamma$ is denoted as~$\text{Bernoulli}(1-\gamma)$
The Pauli operators  are denoted by~$\sigma_{x},\sigma_{y},\sigma_{z}$.
The set~$\qty{1,\cdots,n}$ is denoted as~$\qty[n]$.

\subsection{Mathematical background}
    \begin{definition}[Negligible function]
        A function~$\eta:\mathbb{N}\rightarrow\mathbb{R}^{+}$ is to be negligible if for every~$c \in \mathbb{N}$ there exists~$N\in\mathbb{N}$ such that for every~$n>N$,~$\eta(n)<n^{-c}$.
    \end{definition}

    \begin{lemma}
        Let $\eta:\mathbb{N}\rightarrow\mathbb{R}^{+}$ be a negligible function. The function $\eta\qty(n)\ln\qty(1/\eta\qty(n))$ is also negligible.
    \end{lemma}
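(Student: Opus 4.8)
The plan is to reduce the claim to the negligibility of $\sqrt{\eta}$, which follows immediately from the definition, by showing that the logarithmic factor $\ln\qty(1/\eta(n))$ can be absorbed into a fractional power of $\eta(n)$. First I would record two elementary facts. Since $\eta$ is negligible, in particular $\eta(n)<n^{-1}<1$ for all $n$ larger than some $N_0$; restricting attention to such $n$, the quantity $\ln\qty(1/\eta(n))$ is positive, so $\eta(n)\ln\qty(1/\eta(n))$ is well-defined and non-negative. Second, I would use the elementary inequality $\ln u\le u$, valid for all $u>0$.

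The key step is the bound. Writing $1/\eta(n)=\qty(\eta(n)^{-1/2})^{2}$ and applying $\ln u\le u$ with $u=\eta(n)^{-1/2}$ gives
\begin{equation*}
\ln\qty(\frac{1}{\eta(n)})=2\ln\qty(\eta(n)^{-1/2})\le 2\,\eta(n)^{-1/2},
\end{equation*}
and hence, for every $n>N_0$,
\begin{equation*}
\eta(n)\ln\qty(\frac{1}{\eta(n)})\le 2\,\eta(n)\cdot\eta(n)^{-1/2}=2\sqrt{\eta(n)}.
\end{equation*}
It then remains only to verify that $2\sqrt{\eta}$ is negligible. Given an arbitrary $c\in\mathbb{N}$, negligibility of $\eta$ applied with the exponent $2(c+1)$ yields some $N_1$ with $\eta(n)<n^{-2(c+1)}$ for $n>N_1$, so that $2\sqrt{\eta(n)}<2n^{-(c+1)}<n^{-c}$ once also $n>2$. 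Taking $N=\max\{N_0,N_1,2\}$ shows $\eta(n)\ln\qty(1/\eta(n))<n^{-c}$ for all $n>N$, which is exactly the definition of a negligible function.

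I do not expect a genuine obstacle here, but the one point requiring care is the worry that $\ln\qty(1/\eta(n))$ might grow too quickly to be controlled --- for instance when $\eta$ decays doubly-exponentially, in which case the logarithm is itself exponentially large. The resolution, and the only real idea in the argument, is that any fixed fractional power of $\eta$ decays fast enough to swallow the logarithm; this is precisely what the inequality $\ln u\le u$ encodes after the substitution $u=\eta^{-1/2}$. Any other fixed power $\eta^{\alpha}$ with $0<\alpha<1$ would work equally well (bounding $\ln\qty(1/\eta)$ by a constant times $\eta^{-(1-\alpha)}$), so the choice of the square root is merely for concreteness.
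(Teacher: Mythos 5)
Your proof is correct, but it takes a genuinely different route from the paper. The paper's argument introduces the exponent function $g(n)=\max\{k\in\mathbb{N} \mid \eta(n)<n^{-k}\}$, sandwiches $\eta(n)$ between $n^{-(g(n)+1)}$ and $n^{-g(n)}$ to bound the logarithm by $(g(n)+1)\ln n$, and then uses the fact that negligibility forces $g(n)\to\infty$ to beat any fixed power $n^{-c}$; it also assumes w.l.o.g.\ monotonicity of $\eta$. You instead absorb the logarithm into a fractional power via $\ln u\le u$ with $u=\eta(n)^{-1/2}$, obtaining $\eta(n)\ln\left(1/\eta(n)\right)\le 2\sqrt{\eta(n)}$, and then reduce everything to the (immediate) negligibility of $\sqrt{\eta}$. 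Your route is shorter and more self-contained: it needs no auxiliary exponent function, no monotonicity reduction, and no argument that $g(n)$ diverges, and it sidesteps the minor well-definedness care that $g(n)$ requires (e.g.\ that the maximum exists for each $n$). What the paper's approach buys is a direct quantitative picture of how fast $\eta$ decays relative to every polynomial, which is conceptually close to the definition, but as a proof of this lemma your fractional-power trick is the more economical argument and generalizes verbatim to any factor that is polylogarithmic in $1/\eta(n)$.
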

    \begin{proof}
        Assume, w.l.o.g, monotonicity and positivity of some negligible function~$\eta$.
        Given $c\in\mathbb{N}$, we wish to find $N\in\mathbb{N}$ s.t. $\forall n>N$, $\eta\qty(n)\ln\qty(1/\eta\qty(n)) < n^{-c}$.
        We denote the following function
        \begin{equation*}
            g\qty(n) \coloneqq \max{\qty{\left.k\in\mathbb{N}\right|\eta\qty(n)<n^{-k}}} \; .
        \end{equation*}
        This means
        \begin{equation*}
            \begin{array}{ccccccl}
                & n^{-(g(n)+1)} & \leq & \eta(n) & < & n^{-g(n)} & , \\
                \Rightarrow & -(g(n)+1)\ln{n} & \leq & \ln{\eta(n)} & < & -g(n)\ln{n} & , \\
                \Rightarrow & (g(n)+1)\ln{n} & \geq & \ln{\qty(1/\eta(n))} & > & g(n)\ln{n} & ,
            \end{array}
        \end{equation*}
        yielding
        \begin{equation*}\eta\qty(n)\ln\qty(1/\eta\qty(n)) <
            n^{-g(n)}(g(n)+1)\ln{n} =
            \frac{g(n)+1}{n^{g(n)-1}}\frac{\ln{n}}{n} <
            \frac{g(n)+1}{n^{g(n)-1}} =
            \frac{g(n)+1}{n^{g(n)-1-c}}\frac{1}{n^c} \; .
        \end{equation*}
        By the negligibility property of $\eta(n)$, the function $g(n)$ diverges to infinity.
        Therefore, $\exists N\in\mathbb{N}$ s.t. $\forall n > N$
        \begin{equation*}
            \frac{g(n)+1}{n^{g(n)-1-c}} \leq 1\;,
        \end{equation*}
        and for all~$n>N$
        \begin{equation*}\eta\qty(n)\ln\qty(1/\eta\qty(n)) <
            \frac{g(n)+1}{n^{g(n)-1-c}}\frac{1}{n^c} \leq 
            \frac{1}{n^c}\;. \qedhere
        \end{equation*}
    \end{proof}
    \begin{corollary}
        Let~$\eta$ be a negligible function and let~$h(x)=-x\log x - (1-x)\log(1-x)$ be the binary entropy function. There exists a negligible function~$\xi$ for which the following holds.
        \begin{equation}\label{eq:binary_entropy_egligible_function_arg}
            h(x - \eta(n)) \geq h(x) - \xi(n) \;.
        \end{equation}
    \end{corollary}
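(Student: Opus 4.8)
The plan is to show that the quantity $h(x) - h(x-\eta(n))$ is bounded, uniformly in $x$, by $h(\eta(n))$, and then to argue that $\xi(n) := h(\eta(n))$ is itself negligible using the preceding lemma. Rearranging $h(x) - h(x-\eta(n)) \le \xi(n)$ yields exactly the claimed inequality~\eqref{eq:binary_entropy_egligible_function_arg}, and since the bound does not depend on $x$ the same $\xi$ works for every admissible $x \in [\eta(n), 1]$.

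First I would establish the uniform bound. Write $\delta = \eta(n)$ and consider, for $a \in [0, 1-\delta]$, the function $F(a) := h(a+\delta) - h(a)$, so that $h(x) - h(x-\delta) = F(x-\delta)$. Since $h$ is concave, its derivative $h'(t) = \log\frac{1-t}{t}$ is non-increasing, whence $F'(a) = h'(a+\delta) - h'(a) \le 0$; thus $F$ is non-increasing and attains its maximum at $a = 0$. Therefore
\begin{equation*}
    h(x) - h(x-\eta(n)) = F(x - \eta(n)) \le F(0) = h(\eta(n)) - h(0) = h(\eta(n)) \;,
\end{equation*}
using $h(0) = 0$. Intuitively, the difference is largest when the shifted argument is pushed against the endpoint $0$, where the slope of $h$ diverges. (An equivalent route is to note that, by concavity, the secant slope of $h$ over $[x-\delta, x]$ is at most the secant slope over $[0, \delta]$, which equals $h(\delta)/\delta$.)

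It then remains to check that $\xi(n) = h(\eta(n))$ is negligible. Expanding,
\begin{equation*}
    h(\eta(n)) = \eta(n)\log\frac{1}{\eta(n)} + (1-\eta(n))\log\frac{1}{1-\eta(n)} \;.
\end{equation*}
The first term is negligible by the previous lemma (the base of the logarithm only contributes a constant factor relative to the $\eta(n)\ln(1/\eta(n))$ handled there). For the second term, $\log\frac{1}{1-\eta(n)} = O(\eta(n))$ as $\eta(n)\to 0$, so it is bounded by a constant multiple of the negligible function $\eta(n)$. Since negligible functions are closed under addition and multiplication by constants, $\xi = h\circ\eta$ is negligible, which completes the argument.

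The main obstacle is that $h$ is \emph{not} Lipschitz on $[0,1]$ --- its derivative blows up at the endpoints --- so a naive mean-value bound of the form $\lvert h(x)-h(x-\eta(n))\rvert \le C\,\eta(n)$ is unavailable, and one must genuinely control the behaviour near $0$. This is precisely why the sharp bound $\xi(n) = h(\eta(n))$ carries the extra $\eta(n)\log(1/\eta(n))$ factor, and why the preceding lemma is exactly the input needed to close the argument.
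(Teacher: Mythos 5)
Your proof is correct, and it follows the route the paper intends: the paper states this corollary without an explicit proof, as an immediate consequence of the preceding lemma, and your argument supplies exactly the missing details --- the uniform concavity bound $h(x)-h(x-\eta(n))\le h(\eta(n))$ (worst case at $x=\eta(n)$, where the slope of $h$ diverges), followed by the observation that $h(\eta(n))=\eta(n)\log\bigl(1/\eta(n)\bigr)+O(\eta(n))$ is negligible by that lemma. Nothing further is needed.
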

    
    \begin{definition}[Hellinger distance]\label{def:hellinger}
        Given two probability distributions~$P=\qty{p_i}_{i}$ and $Q=\qty{q_i}_{i}$, the Hellinger distance between~$P$ and~$Q$ is defined as
        \begin{equation*}\label{eq:hellinger}
            H(P,Q) = \frac{1}{\sqrt{2}}\sqrt{\sum_{i} \qty(\sqrt{p_i} + \sqrt{q_i})^2}\;.
        \end{equation*}
    \end{definition}

    \begin{lemma}[Jordan's lemma extension]\label{lemma:jordans_extension}
        Let~$U_1,U_2$ be two self adjoint unitary operators acting on a Hilbert space~$\mathcal{H}$, of a countable dimension.
        Let~$L$ be a normal operator acting on the same space such that~
        $\qty[L,U_{1}]=\qty[L,U_{2}]=0$.
        There exists a decomposition of the Hilbert space into a direct sum of orthogonal subspaces~$\mathcal{H} = \oplus_{\alpha} \mathcal{H}_{\alpha}$
        such that for all~$\alpha$, $\dim{\qty(\mathcal{H}_{\alpha})} \leq 2$ and given~$\ket{\psi} \in
        \mathcal{H}_\alpha$ all 3 operators satisfy~
        $U_{1}\ket{\psi},U_{2}\ket{\psi},L\ket{\psi}\in\mathcal{H}_{\alpha}$.
    \end{lemma}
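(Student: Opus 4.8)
The plan is to manufacture a single self-adjoint operator out of $U_1$ and $U_2$ that lies simultaneously inside the algebra generated by the two reflections \emph{and} inside the commutant of $L$, so that one joint spectral decomposition disentangles all three operators at once. Concretely, I would set $A \coloneqq U_1U_2 + U_2U_1$. Using $U_i = U_i^\dagger$ and $U_i^2 = \mathbbm{1}$ one checks directly that $A^\dagger = A$ and that $U_1 A U_1 = A = U_2 A U_2$, i.e. $[A,U_1]=[A,U_2]=0$. Since $[L,U_1]=[L,U_2]=0$ by hypothesis, taking adjoints gives $[L^\dagger,U_1]=[L^\dagger,U_2]=0$ as well (the $U_i$ being self-adjoint), so both $L$ and $L^\dagger$ commute with every polynomial in $U_1,U_2$ and in particular with $A$. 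Thus $\{A,L,L^\dagger\}$ is a commuting family of normal operators, and $U_1,U_2$ lie in its commutant.

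Next I would invoke the spectral theorem for this commuting family to split $\mathcal{H}$ into joint spectral subspaces $\mathcal{H}=\bigoplus_j W_j$ on which $A$ and $L$ act as scalars, $A\big|_{W_j}=\mu_j\mathbbm{1}$ and $L\big|_{W_j}=\ell_j\mathbbm{1}$. Because $U_1,U_2$ commute with $A,L,L^\dagger$, each $W_j$ is invariant under $U_1$ and $U_2$. The decisive gain is that $L$ is now a scalar on each $W_j$: \emph{any} subspace of $W_j$ is automatically $L$-invariant, so it remains only to decompose $W_j$ into $U_1,U_2$-invariant pieces of dimension at most two.

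Inside a fixed $W_j$ the reflections satisfy $U_1U_2+U_2U_1=\mu_j\mathbbm{1}$. I would study the unitary $R\coloneqq U_1U_2$: the scalar anticommutator gives $R+R^{-1}=\mu_j\mathbbm{1}$, hence $R^2-\mu_j R+\mathbbm{1}=0$, so $R$ has at most the two eigenvalues $e^{\pm i\phi_j}$ with $\mu_j=2\cos\phi_j$. Moreover $U_1RU_1=R^{-1}$, so $U_1$ exchanges the two $R$-eigenspaces $W_j^{\pm}$. For each unit vector $v\in W_j^{+}$ the span $\operatorname{span}\{v,U_1v\}$ is then two-dimensional and invariant under $R$ and $U_1$, hence under $U_2=U_1R$; choosing an orthonormal basis of $W_j^{+}$ and pairing each basis vector with its $U_1$-image yields mutually orthogonal two-dimensional blocks whose span is $W_j^{+}\oplus W_j^{-}=W_j$. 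The degenerate cases $\mu_j=\pm2$ (where $R=\pm\mathbbm{1}$ and thus $U_2=\pm U_1$) give commuting reflections and split into one-dimensional blocks. Every such block lies in $W_j$ and is therefore $L$-invariant, so collecting the blocks over all $j$ produces the desired orthogonal decomposition $\mathcal{H}=\bigoplus_\alpha\mathcal{H}_\alpha$ with $\dim\mathcal{H}_\alpha\le2$ and $U_1,U_2,L$ each preserving every $\mathcal{H}_\alpha$.

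The conceptual obstacle is exactly the interaction of the \emph{non-commuting} pair $U_1,U_2$ with the extra operator $L$: one cannot simply run ordinary Jordan's lemma on $U_1,U_2$ and hope the resulting blocks respect $L$. Routing everything through $A$ removes this difficulty, since $A$ encodes the Jordan structure of the reflections while still commuting with $L$. The remaining technical care concerns the spectral step in countable dimension: when $A$ or $L$ has continuous spectrum the joint decomposition must be read through the joint spectral measure (a direct-integral rather than a literal orthogonal sum), but the fibrewise argument above is uniform in $j$ and assembles measurably, and it is in any case a genuine finite direct sum on the finite-dimensional (polynomial-size) spaces relevant to our application.
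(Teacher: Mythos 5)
Your proof is correct, but it takes a genuinely different route from the paper's. The paper treats the lemma as a one-line modification of the standard proof of Jordan's lemma (which it cites from the literature): that proof builds the $2\times 2$ blocks from eigenvectors of the unitary $U_2U_1$, and since $L$ commutes with $U_2U_1$, one simply chooses the diagonalizing basis of $U_2U_1$ to be a \emph{mutual} eigenbasis of $U_2U_1$ and $L$; each block $\mathrm{span}\{v, U_1 v\}$ is then automatically $L$-invariant because $U_1 v$ inherits the $L$-eigenvalue of $v$. You instead introduce the self-adjoint anticommutator $A = U_1U_2 + U_2U_1$, pass to the joint spectral subspaces of the commuting family $\{A, L, L^\dagger\}$, and only then diagonalize $R = U_1U_2$, whose spectrum inside each subspace is pinned to the two conjugate points $e^{\pm i\phi_j}$ by the quadratic relation $R^2 - \mu_j R + \mathbbm{1} = 0$, pairing each eigenvector $v$ with $U_1 v$. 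The two constructions produce essentially the same blocks, but yours is self-contained (it reproves Jordan's lemma rather than citing it) and makes the orthogonality of the blocks and the fact that they exhaust the space explicit, including the degenerate cases $U_2 = \pm U_1$; what the paper's route buys is brevity. One caveat is shared by both arguments: they require the relevant normal operators to have pure point spectrum so that the joint eigenspace decomposition is a literal orthogonal sum. In a general countably-infinite-dimensional space this can fail ($U_2U_1$, or your $A$, may have continuous spectrum, in which case no decomposition into finite-dimensional invariant subspaces exists at all), so the lemma as stated really needs the finite-dimensional or pure-point setting of the paper's application; you flag this explicitly, whereas the paper's proof glosses over it.
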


    Jordan's lemma appears, among other places, in~\cite[Appendix G.4]{scarani2019bell}.
    Lemma~
    The sole change in the proof of the extension is in the choice of diagonalizing basis of the unitary operator~$U_2 U_1$.
    The chosen basis is now a mutual diagonalizing basis of~$U_2 U_1$ and~$L$ which exists due to commutative relations.
    
    

    \begin{lemma}\label{lemma:jordans_projections}
        Let~$\Pi, M, K$ be Hermitian projections acting on a Hilbert space~$\mathcal{H}$ of a countable dimension such
        that~$[K,\Pi]=[K,M]=0$.
        There exists a decomposition of the Hilbert space into a direct sum of orthogonal subspaces such that~$\Pi,M$ and~$K$ are 2
        by 2 block diagonal.
        In addition, in subspaces~$\mathcal{H}_{\alpha}$ of dimension 2,~$\Pi$ and $M$ take the forms
        \begin{equation*}
            \begin{array}{ccc}
                \Pi_{\alpha}=\left(
                    \begin{array}{cc}
                        1     \\
                        \phantom{0}  & 0
                    \end{array}
                \right)
    
                & &
    
                M_{\alpha}=\left(
                    \begin{array}{cc}
                        c_{\alpha}^{2}       & c_{\alpha}s_{\alpha}\\
                        c_{\alpha}s_{\alpha} & s_{\alpha}^{2}
                    \end{array}
                \right)\end{array}
        \end{equation*}
        where~$c_\alpha = \cos{\theta_\alpha},s_\alpha=\sin{\theta_\alpha}$ for some~$\theta_{\alpha}$.
    \end{lemma}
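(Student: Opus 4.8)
The plan is to reduce the statement to the extension of Jordan's lemma (Lemma~\ref{lemma:jordans_extension}) via the standard correspondence between projections and reflections. First I would set $U_1 = 2\Pi - \mathbbm{1}$ and $U_2 = 2M - \mathbbm{1}$. Since $\Pi$ and $M$ are Hermitian projections, a direct computation gives $U_j^{\dagger} = U_j$ and $U_j^2 = \mathbbm{1}$, so each $U_j$ is a self-adjoint unitary. Moreover $K$ is Hermitian, hence normal, and the hypotheses $[K,\Pi]=[K,M]=0$ immediately yield $[K,U_1]=[K,U_2]=0$. Applying Lemma~\ref{lemma:jordans_extension} with $L=K$ then produces an orthogonal decomposition $\mathcal{H}=\oplus_{\alpha}\mathcal{H}_{\alpha}$ with $\dim\mathcal{H}_{\alpha}\le 2$, every $\mathcal{H}_{\alpha}$ invariant under $U_1,U_2,K$. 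Because $\Pi = \tfrac12(U_1+\mathbbm{1})$ and $M = \tfrac12(U_2+\mathbbm{1})$, each $\mathcal{H}_{\alpha}$ is also invariant under $\Pi$ and $M$, so all three operators are simultaneously $2\times2$ block diagonal. This reflection trick is the conceptual heart of the argument; once it is in place the remaining work is elementary two-dimensional linear algebra.

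It then remains to bring each two-dimensional block into the claimed normal form. Fix a two-dimensional $\mathcal{H}_{\alpha}$ and let $\Pi_{\alpha},M_{\alpha}$ denote the restricted (projection) operators. In the generic case where $\Pi_{\alpha}$ has rank one, I would pass to the orthonormal eigenbasis of $\Pi_{\alpha}$, which is a unitary conjugation giving $\Pi_{\alpha}=\mathrm{diag}(1,0)$. If $M_{\alpha}$ likewise has rank one, write $M_{\alpha}=|v\rangle\langle v|$ for a unit vector with components $c\,e^{i\phi_1}$ and $s\,e^{i\phi_2}$, where $c,s\ge 0$ and $c^2+s^2=1$. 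Conjugating by the diagonal unitary $D=\mathrm{diag}(e^{-i\phi_1},e^{-i\phi_2})$ removes the phases and yields $M_{\alpha}=\left(\begin{smallmatrix} c^2 & cs \\ cs & s^2\end{smallmatrix}\right)$, while $D$, being diagonal, commutes with $\Pi_{\alpha}$ and therefore leaves it unchanged. Setting $c=\cos\theta_{\alpha}$ and $s=\sin\theta_{\alpha}$ gives exactly the stated forms; the restriction $K_{\alpha}$ remains an arbitrary $2\times2$ block, which is all that the statement demands of it.

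Finally I would dispose of the degenerate two-dimensional blocks, where $\Pi_{\alpha}$ or $M_{\alpha}$ is a scalar multiple of the identity (rank $0$ or $2$), since in those cases the required rank-one form for $M_{\alpha}$ cannot hold directly. The point is that such a block always splits into two one-dimensional invariant subspaces. Indeed, if $M_{\alpha}$ is scalar then $\Pi_{\alpha}$ and $K_{\alpha}$ are commuting Hermitian operators (their commutator is the restriction of $[\Pi,K]=0$) and hence simultaneously diagonalizable; their common eigenbasis refines $\mathcal{H}_{\alpha}$ into two one-dimensional pieces on which all of $\Pi,M,K$ act as scalars. The symmetric argument using $[M,K]=0$ handles a scalar $\Pi_{\alpha}$. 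After this refinement every surviving two-dimensional block has both projections of rank one and is in the normal form of the previous paragraph. I expect the main obstacle to be precisely this degenerate-case bookkeeping: one must verify that the refinement can always be performed while keeping $K$ block diagonal, which is exactly where the commutation hypotheses $[K,\Pi]=[K,M]=0$ are used a second time, beyond their role in invoking Lemma~\ref{lemma:jordans_extension}.
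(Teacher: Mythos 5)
Your proposal is correct and follows the same core route as the paper: the reflection trick $U_1=2\Pi-\mathbbm{1}$, $U_2=2M-\mathbbm{1}$ with $L=K$ fed into Lemma~\ref{lemma:jordans_extension}, followed by elementary two-dimensional linear algebra. Where you differ is in the finish, and your version is actually the more careful one. The paper asserts that in the two-dimensional blocks the reflections come out in the anti-diagonal form $\sigma_x$ and $\cos\theta\,\sigma_x+\sin\theta\,\sigma_y$ — a fact that relies on the internal structure of the proof of Jordan's lemma rather than on the statement of Lemma~\ref{lemma:jordans_extension}, and which silently presupposes that both restricted projections have rank one. You instead use only the stated conclusion of the extension lemma (invariance and $\dim\le 2$), obtain the normal form by diagonalizing $\Pi_\alpha$ and stripping phases from $M_\alpha$ with a diagonal unitary, and — importantly — you explicitly handle the degenerate blocks where $\Pi_\alpha$ or $M_\alpha$ is scalar, splitting them into one-dimensional invariant pieces via the second use of $[K,\Pi]=[K,M]=0$. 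This degenerate-case bookkeeping is exactly what the paper's proof glosses over, so your argument fills a small gap rather than leaving one.
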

    \begin{proof}
        Given a Hermitian projection~$\Pi$, Consider the unitary operator,
        ~$2\Pi-\mathbbm{1}$, satisfying~$\qty(2\Pi-\mathbbm{1})^{2} = \mathbbm{1}$.
        Therefore, the operators~$2\Pi-\mathbbm{1}, 2M-\mathbbm{1}$ satisfy the conditions for Lemma~\ref{lemma:jordans_extension} and there exists decomposition of~$\mathcal{H}$ to a direct sum of orthogonal subspaces~$\mathcal{H}_{\alpha}$ such that in these subspaces
        \begin{equation*}
            \begin{array}{ccc}
                2\Pi_{\alpha} - \mathbbm{1}_{\alpha}=\left(
                    \begin{array}{cc}
                            & 1\\
                        1 &
                    \end{array}
                \right)
    
                & &
    
                2M_{\alpha} - \mathbbm{1}_{\alpha}=\left(
                    \begin{array}{cc}
                                        & \omega \\
                        \bar{\omega} &
                    \end{array}
                \right)\end{array} \;.
        \end{equation*}
        One can recognize the operators as~$\sigma_x$ and $ \cos{\theta}\sigma_x + \sin{\theta}\sigma_y$, respectively, for
        some angle~$\theta$.
        Therefore, there exists a basis of~$\mathcal{H}_{\alpha}$ such that the operators are~$\sigma_z$ and $
        \cos{\phi}\sigma_z + \sin{\phi}\sigma_x$, respectively, for some angle~$\phi$.
        Consequently, in this subspace, the projections take the form
        \begin{equation*}
            \begin{array}{ccc}
                \Pi_{\alpha}=\frac{1}{2}\qty(
                    \qty(
                        \begin{array}{cc}
                            1 & \\
                                & -1
                        \end{array}
                    )
                    +
                    \qty(
                        \begin{array}{cc}
                            1 & \\
                                & 1
                        \end{array}
                    )
                )
                =
                \qty(
                    \begin{array}{cc}
                        1 & \\
                            & 0
                    \end{array}
                )
                \\ \\
                M_{\alpha}=\frac{1}{2}\qty(
                    \qty(
                        \begin{array}{cc}
                            \cos{\phi} & \sin{\phi}\\
                            \sin{\phi }& -\cos{\phi}
                        \end{array}
                    )
                    +
                    \qty(
                        \begin{array}{cc}
                            1 & \\
                                & 1
                        \end{array}
                    )
                )
                =
                \qty(
                    \begin{array}{cc}
                        c_{\alpha}^2         & c_{\alpha}s_{\alpha}\\
                        c_{\alpha}s_{\alpha} & c_{\alpha}^2
                    \end{array}
                )
                \end{array} \;,
        \end{equation*}
        with~$\theta_{\alpha} = \phi/2$.
    \end{proof}

    \begin{lemma}[Jensen's inequality extension]\label{lemma:jensen_extension}
        Let~$f:U\rightarrow\mathbb{R}$ be a convex function over a convex set~$U\in\mathbb{R}^{n}$ such that for every~$u\in U$ there exists a subgradient.
        Let~$\qty(X_i)_{i\in[n]}$ be a sequence of random variables with support over~$U$.
        Then, $\mathbb{E}\qty[f\qty(X_1, ..., X_n)] \geq f\qty(\mathbb{E}\qty[X_1],...,\mathbb{E}\qty[X_n])$.
    \end{lemma}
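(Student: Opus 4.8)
The plan is to reduce this multivariate statement to the defining supporting-hyperplane property of convex functions, evaluated at the mean. Write $X=(X_1,\dots,X_n)$ for the vector-valued random variable and let $\mu=\mathbb{E}[X]=(\mathbb{E}[X_1],\dots,\mathbb{E}[X_n])$ be its mean. The first step is to verify that $\mu\in U$, so that the hypothesis guarantees a subgradient there. Since every realization of $X$ lies in the convex set $U$, and the expectation is (a limit of) convex combinations of such realizations, the barycenter $\mu$ lies in $U$ as well. For the finitely supported random variables relevant to the applications in this manuscript this is immediate from the definition of convexity; in general it is the standard fact that the barycenter of a probability measure concentrated on a convex set lies in that set.

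Next I would invoke the hypothesis at the specific point $\mu$: let $g\in\mathbb{R}^n$ be a subgradient of $f$ at $\mu$, which exists precisely because $\mu\in U$. The subgradient inequality then reads
\begin{equation*}
    f(x)\ \geq\ f(\mu)+\langle g,\, x-\mu\rangle \quad\text{for all } x\in U .
\end{equation*}
Because $X$ takes values in $U$ almost surely, I may substitute $x=X$ and take the expectation of both sides. By linearity of expectation the right-hand side collapses to
\begin{equation*}
    \mathbb{E}\big[f(\mu)+\langle g,\,X-\mu\rangle\big]
    = f(\mu)+\langle g,\,\mathbb{E}[X]-\mu\rangle
    = f(\mu)+\langle g,\,0\rangle
    = f(\mu),
\end{equation*}
giving $\mathbb{E}[f(X)]\geq f(\mu)=f(\mathbb{E}[X_1],\dots,\mathbb{E}[X_n])$, which is exactly the claim.

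The only genuinely delicate point is the very first step, namely confirming $\mu\in U$ so that a subgradient is available at the mean; everything afterward is the routine ``substitute and integrate'' argument enabled by the linearity of expectation. For the finite-support case this membership is trivial, and the hypothesis that a subgradient exists at \emph{every} $u\in U$ conveniently removes any concern about $\mu$ lying on the boundary of $U$, where a convex function need not be differentiable but can still admit a supporting hyperplane.
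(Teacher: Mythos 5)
The paper states this lemma without proof, treating it as a standard tool, so there is no in-paper argument to compare against; your proposal supplies the proof the hypotheses are clearly tailored for. Your argument is correct: it is the standard supporting-hyperplane proof of multivariate Jensen's inequality, substituting $x=X$ into the subgradient inequality at $\mu=\mathbb{E}[X]$ and taking expectations. You also correctly isolate and resolve the one genuinely delicate point, namely that the barycenter of a probability measure supported on a convex set $U\subseteq\mathbb{R}^n$ lies in $U$ itself (true in finite dimension, e.g.\ by induction on dimension using a supporting hyperplane at a relative-boundary barycenter), which is what licenses invoking the subgradient hypothesis at $\mu$.
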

    
\subsection{Tools in quantum information theory}

We state here the main quantum information theoretic definitions and techniques appearing in previous work, on which we build in the current manuscript. 

    \begin{definition}[von Neumann entropy]
        Let~$\rho_{AB}$ be a density over the Hilbert space~$\mathcal{H}_A \otimes \mathcal{H}_B $. The von Neumann entropy of the ~$\rho_A$ is defined as
        \begin{equation*}
            H(A)_{\rho} = - \tr(\rho_A \log \rho_A) \; .
        \end{equation*}
        The conditional von Neumann entropy of~$A$ given~$B$ is defined to be
        \begin{equation*}
            H(A|B)_\rho = H(AB)_\rho - H(B)_\rho \; .
        \end{equation*}
    \end{definition}
    
    \begin{definition}[Square overlap~{\cite{berta2010uncertainty}}]\label{def:square_overlap}
        Let~$\Pi$ and~$M$ be two observables, described by orthonormal bases~$\qty{\ket{\pi_i}}_i$ and~$\qty{\ket{m_j}}_j$ on a~$d$-dimensional Hilbert space~$\mathcal{H}_A$.
        The measurement processes are then described by the completely positive maps
        \begin{equation}\label{eq:measurement_map}
            \mathcal{P}:\rho\rightarrow\sum_{i} \bra{\pi_i} \rho \ket{\pi_i} \dyad{\pi_i}
            \qquad , \qquad
            \mathcal{M}:\rho\rightarrow\sum_{j} \bra{m_j} \rho \ket{m_j} \dyad{m_j} .
        \end{equation}
        The square overlap of~$\Pi$ and~$M$ is then defined as
        \begin{equation*}
            c\coloneqq \max_{i,j}
                {
                    \left|\braket{\pi_i}{m_j}\right|^2
                } 
            \; .
        \end{equation*}
    \end{definition}

    \begin{definition}
        Let~$M$ be an observable on a~$d$ dimensional Hilbert space~$\mathcal{H}_A$ and let~$\mathcal{M}$ be its corresponding map as appearing in Equation\eqref{eq:measurement_map}.
        Given a state~$\rho\in\mathcal{H}_A\otimes\mathcal{H}_B$, we define the conditional entropy of the measurement~$M$ given the side information~$B$ as
        \begin{equation}\label{eq:observable_entropy}
            H(M|B)_\rho = H(A|B)_{(\mathcal{M}\otimes\mathbbm{1}_B)(\rho)} = H\qty(AB)_{(\mathcal{M}\otimes\mathbbm{1}_B) (\rho)} - H\qty(B)_{(\mathcal{M}\otimes\mathbbm{1}_B)(\rho)} \; .
        \end{equation}
    \end{definition}

    \begin{lemma}[Entropic uncertainty principle~{\cite[Supplementary -- Corollary 2]{berta2010uncertainty}}]
        \label{lemma:uncertainty_relations}
        Let~$\Pi$ and~$M$ be two observables on a Hilbert space~$\mathcal{H}_P$ and let~$c$ be their square overlap.
        For any density operator~$\rho\in\mathcal{H}_V\otimes\mathcal{H}_P\otimes\mathcal{H}_E$,
        \begin{equation*}
            H(\Pi|E) \geq \log (1/c) - H(M|V) \; .
        \end{equation*}
    \end{lemma}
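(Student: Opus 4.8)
The statement is the tripartite form of the Berta--Christandl--Colbeck--Renes--Renner entropic uncertainty relation, in which the two measurements $\Pi$ and $M$ on $P$ are read out against the \emph{complementary} side systems $E$ and $V$. The plan is to first strip away the generality of the state and then reduce the whole inequality to a statement about two quantum relative entropies. Since conditioning on additional registers can only decrease the conditional von Neumann entropy, I would purify $\rho_{VPE}$ by adjoining a register $E'$, obtaining a pure state $\ket{\psi}_{VPEE'}$, and use $H(\Pi|E)\ge H(\Pi|EE')$. It therefore suffices to prove $H(\Pi|EE') + H(M|V)\ge\log(1/c)$ for the \emph{pure} four-party state, with the enlarged memory $EE'$ on the $\Pi$-side and $V$ on the $M$-side; writing $E$ for this enlarged memory, I may treat $\rho_{VPE}$ as pure in what follows.

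Next I would convert each measured entropy into a relative entropy. Let $\mathcal{P}$ and $\mathcal{M}$ be the pinchings in the $\Pi$- and $M$-eigenbases (as in Equation~\eqref{eq:measurement_map}), acting on $P$ with identity on the memory. A direct computation, using that a pinched state is block diagonal so that only its diagonal blocks contribute to $\tr(\rho\log\mathcal{P}(\rho))$, yields the identities
\[
    H(\Pi|E)_\rho = H(P|E)_\rho + D\!\left(\rho_{PE}\,\middle\|\,\mathcal{P}(\rho_{PE})\right), \qquad H(M|V)_\rho = H(P|V)_\rho + D\!\left(\rho_{PV}\,\middle\|\,\mathcal{M}(\rho_{PV})\right),
\]
where $D(\cdot\|\cdot)$ is the quantum relative entropy. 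Because the global state $\rho_{VPE}$ is pure, complementary subsystems have equal entropy, so $H(PE)=H(V)$ and $H(PV)=H(E)$; hence $H(P|E)+H(P|V)=\bigl(H(PE)-H(E)\bigr)+\bigl(H(PV)-H(V)\bigr)=0$. Summing the two identities collapses the left-hand side to a sum of relative entropies,
\[
    H(\Pi|E) + H(M|V) = D\!\left(\rho_{PE}\,\middle\|\,\mathcal{P}(\rho_{PE})\right) + D\!\left(\rho_{PV}\,\middle\|\,\mathcal{M}(\rho_{PV})\right),
\]
so, after rearranging, the claim reduces to showing this sum is at least $\log(1/c)$.

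The crux -- and the step I expect to be the main obstacle -- is to lower bound the sum of these two relative entropies by $\log(1/c)$; this is the only point at which the non-commutativity of $\Pi$ and $M$, encoded in $c=\max_{i,j}|\braket{\pi_i}{m_j}|^2$, actually enters. Here I would invoke the data-processing inequality for quantum relative entropy: applying the conjugate pinching $\mathcal{M}$ to both arguments of the first term gives $D(\rho_{PE}\|\mathcal{P}(\rho_{PE}))\ge D(\mathcal{M}(\rho_{PE})\|\mathcal{M}\mathcal{P}(\rho_{PE}))$, and the rank-one operator bound $\bigl\|\,\ket{m_j}\!\bra{m_j}\,\ket{\pi_i}\!\bra{\pi_i}\,\bigr\|^2=|\braket{\pi_i}{m_j}|^2\le c$ is what lets the overlap bound the second argument $\mathcal{M}\mathcal{P}(\rho_{PE})$; tracking this factor of $c$ through the relative entropy (together with the symmetric manipulation of the second term) produces the additive constant $\log(1/c)$ and closes the argument. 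This final estimate is precisely the content of Corollary~2 in the supplementary material of~\cite{berta2010uncertainty}, which may alternatively be cited verbatim; the reduction above is what specializes the abstract relation to the complementary-memory form $H(\Pi|E)\ge\log(1/c)-H(M|V)$ used here.
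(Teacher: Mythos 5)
The paper itself offers no proof of this lemma: it is imported verbatim from the literature, with the citation to Corollary~2 of the supplementary material of~\cite{berta2010uncertainty}, so your attempt has to be measured against the cited source rather than against anything in the manuscript. Your reduction steps are correct and follow the standard simplified route (via relative entropies, rather than the original smooth-entropy argument of Berta et al.): purifying and using $H(\Pi|E)\geq H(\Pi|EE')$, the identities $H(\Pi|E)_\rho=H(P|E)_\rho+D(\rho_{PE}\|\mathcal{P}(\rho_{PE}))$ and $H(M|V)_\rho=H(P|V)_\rho+D(\rho_{PV}\|\mathcal{M}(\rho_{PV}))$, and the pure-state cancellation $H(P|E)+H(P|V)=0$ are all valid, and they correctly reduce the lemma to showing $D(\rho_{PE}\|\mathcal{P}(\rho_{PE}))+D(\rho_{PV}\|\mathcal{M}(\rho_{PV}))\geq\log(1/c)$.

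The gap sits exactly at the step you flagged as the crux, and the way you propose to close it would fail. Data processing under $\mathcal{M}$ together with the operator bound $\mathcal{M}\mathcal{P}(\rho_{PE})\leq c\,\mathbbm{1}_P\otimes\rho_E$ and operator monotonicity of the logarithm does give $D(\rho_{PE}\|\mathcal{P}(\rho_{PE}))\geq\log(1/c)-H(M|E)$, and the ``symmetric manipulation of the second term'' gives $D(\rho_{PV}\|\mathcal{M}(\rho_{PV}))\geq\log(1/c)-H(\Pi|V)$; summing these yields $2\log(1/c)-H(M|E)-H(\Pi|V)$, so to conclude you would need $H(M|E)+H(\Pi|V)\leq\log(1/c)$. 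That is the \emph{reverse} of the uncertainty relation itself (with the two memories exchanged) and is false in general: take $\Pi=M$ (so $c=1$ and the target is $\geq 0$) and $\rho_{VPE}=\dyad{\phi^+}_{PE}\otimes\dyad{0}_V$; then $H(M|E)=0$ and $H(\Pi|V)=1$, so the symmetric route only certifies the useless bound $\geq -1$. The correct completion is asymmetric: apply the overlap/data-processing argument to one term only, and for the other term prove $D(\rho_{PV}\|\mathcal{M}(\rho_{PV}))=H(M|V)-H(P|V)\geq H(M|E)$. This remaining inequality (in fact an equality for pure $\rho_{VPE}$ and projective $M$) is itself a nontrivial second use of pure-state duality: introducing the coherent-copy isometry $\ket{m_j}_P\mapsto\ket{j}_X\ket{m_j}_P$, the duality $H(X|V)=-H(X|PE)$ and the chain rule give $H(M|V)=H(M|E)-H(P|E)=H(M|E)+H(P|V)$. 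With that piece in place your argument closes; without it, no symmetric manipulation of the two relative entropies can produce the constant $\log(1/c)$. Finally, note that your fallback is not a citation of a ``final estimate'': Corollary~2 of~\cite{berta2010uncertainty} \emph{is} the full lemma, so invoking it verbatim makes the whole reduction moot --- which is precisely what the paper does.
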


    In order to provide a clear understanding of the quantum uncertainty that arises from two measurements, as in Lemma~\ref{lemma:uncertainty_relations}, it is beneficial to examine the square overlap between those measurements (Definition~\ref{def:square_overlap}).
    The Bloch sphere representation, which pertains to Hilbert spaces of two dimensions, offers a lucid illustration of this concept, as shown in Figure~\ref{fig:uncertainty_bloch}.

    \begin{figure}
        \centering
        \begin{subfigure}[b]{0.4\textwidth}
            \centering
            \includegraphics[scale=0.15]{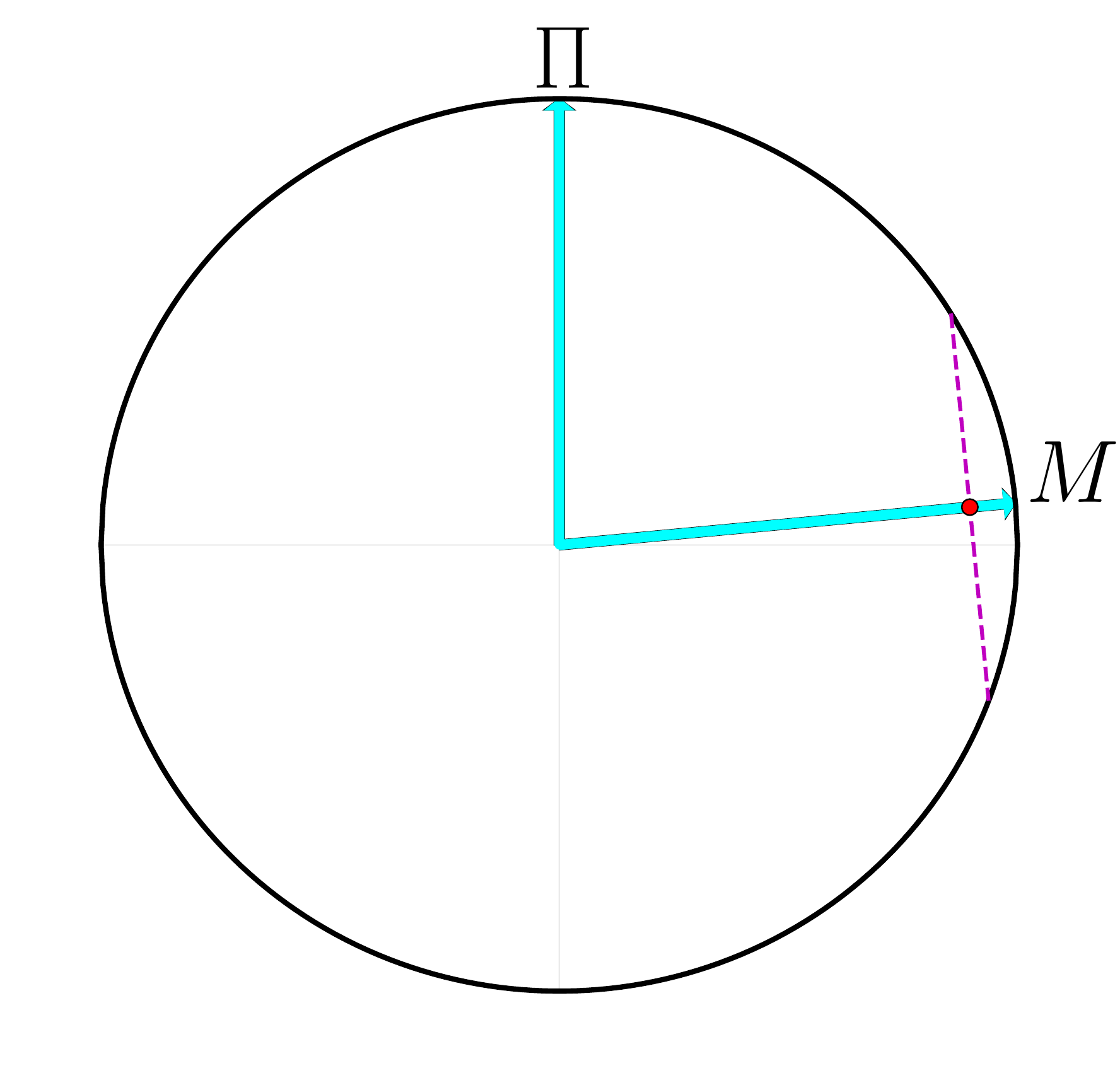}
            \caption
            {
                All possible states with~$\Pr(M=0)$ on the magenta dashed line.
            }
            \label{sub_fig:a}
        \end{subfigure}
        \begin{subfigure}[b]{0.4\textwidth}
            \centering
            \includegraphics[scale=0.15]{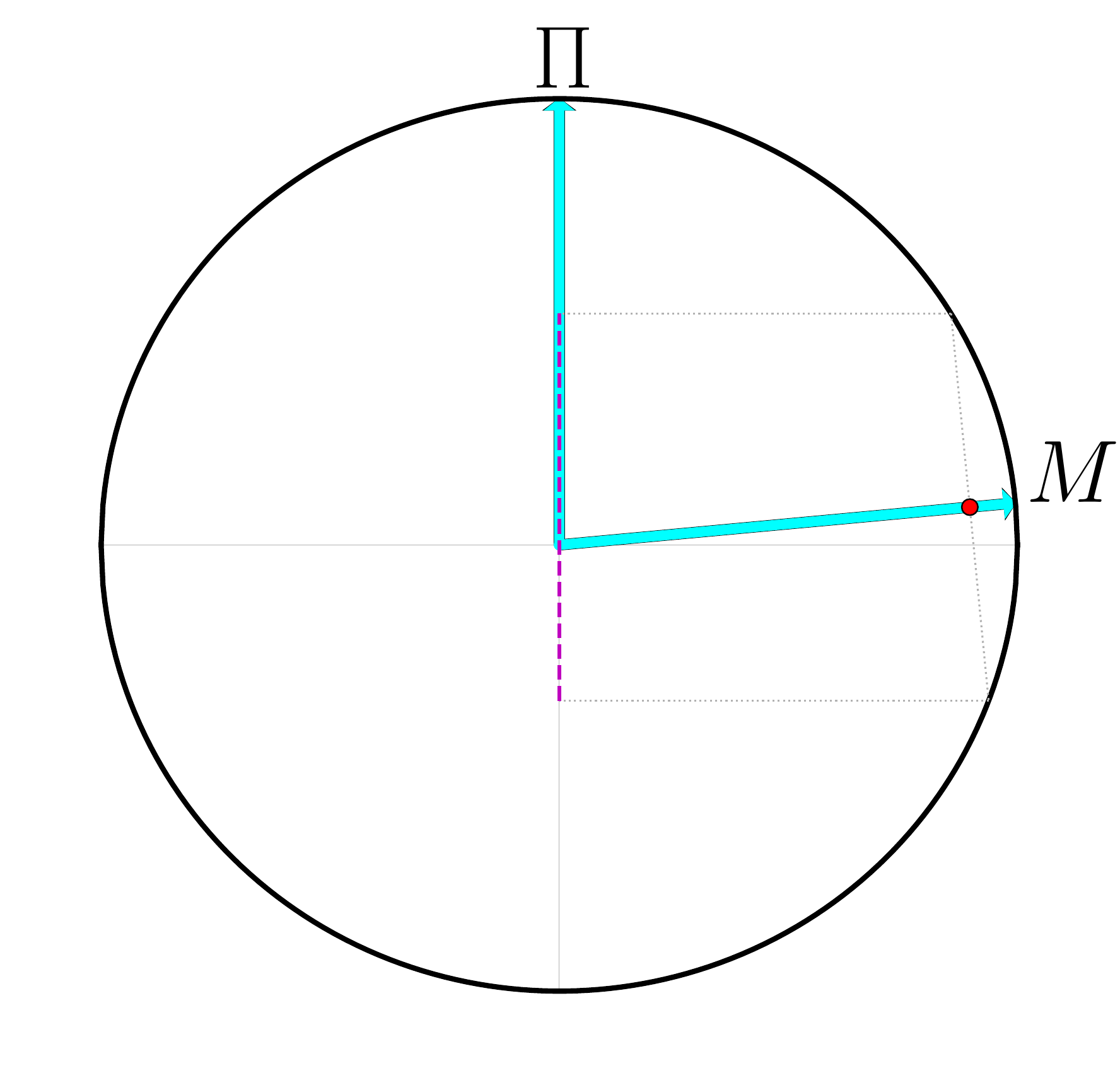}
            \caption
            {
                All possible distributions of~$\Pr(\Pi=0)$ on the magenta dashed line.
            }
            \label{sub_fig:b}
        \end{subfigure}
        \caption
        {
            A Bloch sphere representation of the uncertainty relations.
            A small square overlap between~$\Pi$ and~$M$ corresponds to a larger angle between the two.
            Each point on the arrow~$M$ corresponds to a probability~$\Pr(M=0)$ of some unknown state~$\rho$ that must lie on the corresponding magenta dashed line in (a).
            All the states described in (a)  allow distributions of~$\Pr(\Pi=0)$ that correspond to the points on the dashed magenta line described in (b). Small square overlap and high values of~$\Pr(M=0)$ mean that the allowed distributions of~$\Pr(\Pi=0)$ are close to uniform.
        }
        \label{fig:uncertainty_bloch}
    \end{figure}

    \begin{lemma}\label{claim:eigenvalue_lemma}
        Given two non-trivial Hermitian projections
        \begin{equation}\label{eq:qubit_projections}
            \Pi=\frac{1}{2}\qty(\mathbbm{1}+\sigma_{z})~~;~~\\
            M=\frac{1}{2}\qty(\mathbbm{1}+\cos{\qty(\theta)}\sigma_{z}+\sin{\qty(\theta)}\sigma_{x})
        \end{equation}
        acting on a 2-dimensional Hilbert space~$\mathcal{H}$, the eigenvalues of the operator $\Pi M \Pi + (\mathbbm{1} - \Pi) M (\mathbbm{1} - \Pi)$ are $\cos^{2}{\qty(\theta/2)}$ and $\sin^{2}{\qty(\theta/2)}$.
        In addition, the square overlap of the operators in Equation~\eqref{eq:qubit_projections} is~$c=\max\qty{\cos^2{(\theta/2)},\sin^2{(\theta/2)}}$.
    \end{lemma}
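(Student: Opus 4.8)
The plan is to prove both claims by direct computation in the eigenbasis of $\Pi$, in which every operator becomes an explicit $2\times 2$ matrix. I would take the standard basis $\{\ket{0},\ket{1}\}$ to be the eigenbasis of $\sigma_z$, so that $\Pi=\dyad{0}$ and $\mathbbm{1}-\Pi=\dyad{1}$. Writing $M$ in this basis and applying the half-angle identities $\tfrac{1+\cos\theta}{2}=\cos^2(\theta/2)$, $\tfrac{1-\cos\theta}{2}=\sin^2(\theta/2)$ and $\tfrac{\sin\theta}{2}=\cos(\theta/2)\sin(\theta/2)$ gives
\begin{equation*}
    M=\begin{pmatrix} \cos^2(\theta/2) & \cos(\theta/2)\sin(\theta/2) \\ \cos(\theta/2)\sin(\theta/2) & \sin^2(\theta/2) \end{pmatrix} .
\end{equation*}

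For the first claim, the key observation is that the map $X\mapsto \Pi X\Pi+(\mathbbm{1}-\Pi)X(\mathbbm{1}-\Pi)$ is precisely the pinching (dephasing) channel associated with the projective measurement $\{\Pi,\mathbbm{1}-\Pi\}$: since both projections are rank one in the $\sigma_z$ basis, the map annihilates the off-diagonal entries of $X$ and preserves the diagonal ones. Hence $\Pi M\Pi+(\mathbbm{1}-\Pi)M(\mathbbm{1}-\Pi)$ is simply the diagonal matrix $\mathrm{diag}\!\big(\cos^2(\theta/2),\sin^2(\theta/2)\big)$, whose eigenvalues are read off immediately as $\cos^2(\theta/2)$ and $\sin^2(\theta/2)$. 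No characteristic polynomial is needed.

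For the square overlap I would diagonalize $M$ explicitly: its eigenvector for eigenvalue $1$ is $\ket{m_+}=\cos(\theta/2)\ket{0}+\sin(\theta/2)\ket{1}$, with orthogonal eigenvector $\ket{m_-}=-\sin(\theta/2)\ket{0}+\cos(\theta/2)\ket{1}$ for eigenvalue $0$. Taking overlaps against $\ket{\pi_0}=\ket{0}$ and $\ket{\pi_1}=\ket{1}$ yields the four values $|\braket{\pi_i}{m_j}|^2\in\{\cos^2(\theta/2),\sin^2(\theta/2)\}$, so that $c=\max_{i,j}|\braket{\pi_i}{m_j}|^2=\max\{\cos^2(\theta/2),\sin^2(\theta/2)\}$, as claimed. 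Since both parts reduce to explicit $2\times 2$ algebra, I do not anticipate a genuine obstacle; the only place to be careful is the consistent application of the half-angle identities and the orientation of the eigenvectors of $M$, so that its off-diagonal entry $\cos(\theta/2)\sin(\theta/2)$ matches the pinched matrix and the overlap values come out as stated.
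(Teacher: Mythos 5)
Your proof is correct. The paper states this lemma without providing a proof (it is treated as a routine verification), and your argument -- writing both projections as explicit $2\times 2$ matrices in the $\sigma_z$ eigenbasis, recognizing $\Pi M\Pi + (\mathbbm{1}-\Pi)M(\mathbbm{1}-\Pi)$ as the pinching of $M$ so that its eigenvalues are the diagonal entries $\cos^2(\theta/2)$ and $\sin^2(\theta/2)$, and then diagonalizing $M$ to read off the overlaps $|\braket{\pi_i}{m_j}|^2$ -- is exactly the direct computation the paper implicitly relies on, with all half-angle identities applied correctly.
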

   
    \begin{lemma}[Continuity of conditional entropies~{\cite[Lemma~2]{winter2016continuity}}]
        \label{lemma:entropy_continuity}
        For states~$\rho$ and~$\sigma$ on a Hilbert space~$\mathcal{H}_A\otimes\mathcal{H}_B$, if~$\frac{1}{2}\norm{\rho-\sigma}_1 \leq \epsilon \leq 1$, then
        \begin{equation}\label{eq:entropy_continuity}
        |H(A|B)_{\rho} - H(A|B)_{\sigma}| \leq
            \epsilon \log_2 (|A|) + (1+\epsilon) h\qty(\frac{\epsilon}{1+\epsilon}) \; .
        \end{equation}
    \end{lemma}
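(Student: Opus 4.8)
The plan is to prove the statement via the flag-register construction underlying the Alicki--Fannes--Winter method. First I would reduce to the case $\frac{1}{2}\norm{\rho-\sigma}_1=\epsilon$: the right-hand side of \eqref{eq:entropy_continuity} is monotonically increasing in $\epsilon$ on $[0,1]$, so it suffices to prove the bound with $\epsilon$ replaced by $\epsilon'\coloneqq\frac{1}{2}\norm{\rho-\sigma}_1\le\epsilon$. Using the Hahn decomposition $\rho-\sigma=\Delta_{+}-\Delta_{-}$ with $\Delta_{\pm}\ge 0$ of orthogonal support, one has $\tr\Delta_{+}=\tr\Delta_{-}=\epsilon'$. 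I would then introduce the two normalized ``error states'' $\omega\coloneqq\Delta_{+}/\epsilon'$ and $\omega'\coloneqq\Delta_{-}/\epsilon'$ and record the crucial identity $\rho+\epsilon'\omega'=\sigma+\epsilon'\omega$, so that the single state $\tau\coloneqq\frac{1}{1+\epsilon'}(\rho+\epsilon'\omega')=\frac{1}{1+\epsilon'}(\sigma+\epsilon'\omega)$ is a common convex interpolation reachable from both $\rho$ and $\sigma$.

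Next I would attach a classical flag $X\in\{0,1\}$ and define the classical-quantum states $\hat\rho_{XAB}=\frac{1}{1+\epsilon'}\bigl(\dyad{0}_X\otimes\rho_{AB}+\epsilon'\,\dyad{1}_X\otimes\omega'_{AB}\bigr)$ and $\hat\sigma_{XAB}=\frac{1}{1+\epsilon'}\bigl(\dyad{0}_X\otimes\sigma_{AB}+\epsilon'\,\dyad{1}_X\otimes\omega_{AB}\bigr)$, both of which share the same $AB$-marginal $\tau$. Three facts then drive the estimate. (i) For a classically flagged state the conditional entropy splits, $H(A|BX)_{\hat\rho}=\frac{1}{1+\epsilon'}H(A|B)_{\rho}+\frac{\epsilon'}{1+\epsilon'}H(A|B)_{\omega'}$, and likewise for $\hat\sigma$ with $\omega$. (ii) Since the $AB$-marginals equal $\tau$, we have $H(A|B)_{\hat\rho}=H(A|B)_{\hat\sigma}=H(A|B)_{\tau}$. (iii) The difference $H(A|B)_{\hat\rho}-H(A|BX)_{\hat\rho}$ equals the conditional mutual information $I(A:X|B)_{\hat\rho}$, which by strong subadditivity is nonnegative and, since $X$ is classical, is bounded above by $H(X)_{\hat\rho}=h\bigl(\frac{\epsilon'}{1+\epsilon'}\bigr)$.

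Combining (i)--(iii) yields, for the $\rho$-side, the sandwich $0\le(1+\epsilon')H(A|B)_{\tau}-H(A|B)_{\rho}-\epsilon'H(A|B)_{\omega'}\le(1+\epsilon')h\bigl(\frac{\epsilon'}{1+\epsilon'}\bigr)$, and the analogous chain with $\sigma$ and $\omega$. Subtracting the upper bound of one chain from the lower bound of the other cancels the common $(1+\epsilon')H(A|B)_{\tau}$ term and leaves $H(A|B)_{\rho}-H(A|B)_{\sigma}\le\epsilon'\,[H(A|B)_{\omega}-H(A|B)_{\omega'}]+(1+\epsilon')h\bigl(\frac{\epsilon'}{1+\epsilon'}\bigr)$, together with the symmetric inequality obtained by exchanging $\rho$ and $\sigma$. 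The residual conditional entropies of the error states are then controlled by the universal bounds $-\log_{2}\abs{A}\le H(A|B)_{\omega},H(A|B)_{\omega'}\le\log_{2}\abs{A}$, which produce the $\log_{2}\abs{A}$ term, and monotonicity from the first step restores $\epsilon$ in place of $\epsilon'$. The main obstacle, and the conceptual heart of the argument, is item (iii): one must recognize $H(A|B)_{\hat\rho}-H(A|BX)_{\hat\rho}$ as a conditional mutual information and invoke strong subadditivity to sandwich it between $0$ and $H(X)$, which is exactly what converts the naive dimension-times-trace-distance estimate into the tight $(1+\epsilon)h\bigl(\frac{\epsilon}{1+\epsilon}\bigr)$ correction. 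Everything else is routine bookkeeping with the Hahn decomposition and the classical-quantum splitting of the conditional entropy.
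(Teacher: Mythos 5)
Your construction is exactly the Alicki--Fannes--Winter argument behind the cited source (the paper itself gives no proof of this lemma, it only cites Winter's Lemma~2), and your steps (i)--(iii) are all sound: the Hahn decomposition with $\tr\Delta_{+}=\tr\Delta_{-}=\epsilon'$, the flag-register states with common marginal $\tau$, and the sandwich $0\le I(A{:}X|B)_{\hat\rho}\le H(X)=h\qty(\frac{\epsilon'}{1+\epsilon'})$ via strong subadditivity plus classicality of $X$. The gap is in the very last step. Your two chains give $\qty|H(A|B)_{\rho}-H(A|B)_{\sigma}|\le\epsilon'\,\qty|H(A|B)_{\omega}-H(A|B)_{\omega'}|+(1+\epsilon')h\qty(\frac{\epsilon'}{1+\epsilon'})$, and the universal bounds $-\log_2|A|\le H(A|B)_{\omega},H(A|B)_{\omega'}\le\log_2|A|$ that you invoke bound the bracket by $2\log_2|A|$, not by $\log_2|A|$. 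So your argument proves the bound with first term $2\epsilon\log_2|A|$ --- which is in fact what Winter's Lemma~2 literally states --- but not Equation~\eqref{eq:entropy_continuity} as written. Moreover, no bookkeeping can close this for general quantum $A$: take $\rho$ maximally entangled on $A\otimes B$ with $|A|=|B|=d$ and $\sigma$ maximally mixed, so that $H(A|B)_{\rho}=-\log_2 d$, $H(A|B)_{\sigma}=\log_2 d$, and $\frac{1}{2}\norm{\rho-\sigma}_1=1-1/d^2=\epsilon$; then the left-hand side is $2\log_2 d$ while $\epsilon\log_2 d+(1+\epsilon)h\qty(\frac{\epsilon}{1+\epsilon})\le\log_2 d+2$, a contradiction for $d\ge 8$.

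The missing ingredient is the hypothesis under which the factor $2$ can be dropped (Winter's remark following his Lemma~2): if $A$ is \emph{classical} in both $\rho$ and $\sigma$, then $\rho-\sigma$ is also classical on $A$, hence so are its normalized positive and negative parts $\omega,\omega'$; for such states conditional entropy is nonnegative, so $0\le H(A|B)_{\omega},H(A|B)_{\omega'}\le\log_2|A|$ and the bracket is bounded by $\log_2|A|$, yielding the stated $\epsilon\log_2|A|$ term. This is precisely the regime in which the paper applies the lemma --- the conditioned register is the classical outcome of the measurement $\tilde{\Pi}$, whence the $\sqrt{1-\omega_p}\log_2 3$ term in Lemma~\ref{lemma:entropy_bound_proof} --- so your proof becomes correct for the paper's use once you add the classical-$A$ hypothesis and the observation that $\omega,\omega'$ inherit it; as stated for arbitrary states, both your final step and the displayed constant require the factor $2$.
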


    \begin{lemma}[Good-subspace projection{~\cite[Lemma 7.2]{brakerski2021cryptographic}}]\label{lemma:good_subspace_projection}
        Let~$\Pi$ and~$M$ be two Hermitian projections on~$\mathcal{H}$ and~$\phi$ a state on~$\mathcal{H}$.
        Let~$\omega = \Tr(M\phi)$ and
        \begin{equation*}
            \mu = \qty
                |
                    \frac{1}{2} - \Tr(M\Pi\phi\Pi) - \Tr(M(\mathbbm{1}-\Pi)\phi(\mathbbm{1}-\Pi))
                | \;.
        \end{equation*}
        Let~$c\in(1/2,1]$.
        Let~$B_j$ be the orthogonal projection on the~$j$'th~$2\times2$ block, as given in Lemma~\ref{lemma:jordans_projections} and denote~$c_j$ as the square overlap of~$\Pi$ and~$M$ in the corresponding  block.
        Define~$\Gamma$ be the orthogonal projection on all blocks such that the square overlap is bound by c:
        \begin{equation}
            \Gamma \coloneqq \sum_{j:c_j \leq c} B_j \; .
        \end{equation}
        Then,
        \begin{equation}
            \tr((\mathbbm{1}-\Gamma)\phi) \leq \frac{2\mu + 10\sqrt{1 - \omega}}{{(2c - 1)}^2} \; .
        \end{equation}
    \end{lemma}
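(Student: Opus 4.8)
The plan is to work entirely in the $2\times 2$ block decomposition $\mathcal{H}=\oplus_\alpha\mathcal{H}_\alpha$ supplied by Lemma~\ref{lemma:jordans_projections}. Since $\Pi$, $M$, every block projector $B_\alpha$ (hence $\Gamma$), and the pinching channel $\mathcal{P}(\cdot)=\Pi(\cdot)\Pi+(\mathbbm{1}-\Pi)(\cdot)(\mathbbm{1}-\Pi)$ are all block diagonal in this decomposition, each trace in the statement depends only on the block-diagonal part of $\phi$. I would therefore set the block weights $q_\alpha=\Tr(B_\alpha\phi)$ (so $\sum_\alpha q_\alpha=1$), the normalized block states $\tilde\phi_\alpha = B_\alpha\phi B_\alpha/q_\alpha$, and parametrize each by its Bloch vector $\tilde\phi_\alpha=\tfrac12(\mathbbm{1}+\vec{r}^{(\alpha)}\cdot\vec\sigma)$ with $|\vec{r}^{(\alpha)}|\le 1$. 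In block $\alpha$, Lemma~\ref{lemma:jordans_projections} gives $\Pi_\alpha=\tfrac12(\mathbbm{1}+\sigma_z)$ and $M_\alpha=\tfrac12(\mathbbm{1}+\hat m_\alpha\cdot\vec\sigma)$ with $\hat m_\alpha=(\sin 2\theta_\alpha,\,0,\,\cos 2\theta_\alpha)$; the one-dimensional blocks have square overlap $1$, are always ``bad'', and are absorbed as degenerate limiting cases.

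\textbf{Translating the three scalars.} Using $\Tr(\sigma_i\sigma_j)=2\delta_{ij}$ I would compute $\omega=\sum_\alpha q_\alpha\omega_\alpha$ with $\omega_\alpha=\tfrac12(1+\vec{r}^{(\alpha)}\cdot\hat m_\alpha)$; that $\mathcal{P}$ kills the off-diagonal (transverse) Bloch components, so $\Tr(M\mathcal{P}(\phi))=\tfrac12+\tfrac12\sum_\alpha q_\alpha\cos 2\theta_\alpha\, r_z^{(\alpha)}$ and hence $\mu=\tfrac12\big|\sum_\alpha q_\alpha\cos 2\theta_\alpha\, r_z^{(\alpha)}\big|$; and, by Lemma~\ref{claim:eigenvalue_lemma}, $c_\alpha=\max\{\cos^2\theta_\alpha,\sin^2\theta_\alpha\}$, so the overlap test $c_\alpha\le c$ is equivalent to $|\cos 2\theta_\alpha|\le 2c-1$. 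Thus the bad set is $B=\{\alpha:c_\alpha>c\}=\{\alpha:|\cos 2\theta_\alpha|>2c-1\}$ and the target quantity is $\Tr((\mathbbm{1}-\Gamma)\phi)=\sum_{\alpha\in B}q_\alpha=:P_B$.

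\textbf{Per-block estimate and aggregation.} Write $d_\alpha=\cos 2\theta_\alpha\, r_z^{(\alpha)}$ and $\eta_\alpha=1-\omega_\alpha$. Decomposing $\vec{r}^{(\alpha)}$ along $\hat m_\alpha$ gives $\vec{r}^{(\alpha)}\cdot\hat m_\alpha=1-2\eta_\alpha$ and $|\vec{r}^{(\alpha)}_\perp|^2\le 1-(1-2\eta_\alpha)^2=4\eta_\alpha(1-\eta_\alpha)\le 4\eta_\alpha$. Projecting the identity $r_z^{(\alpha)}=(1-2\eta_\alpha)\cos 2\theta_\alpha+(\vec{r}^{(\alpha)}_\perp)_z$ onto $\cos 2\theta_\alpha$ yields $d_\alpha=(1-2\eta_\alpha)\cos^2 2\theta_\alpha+\cos 2\theta_\alpha(\vec{r}^{(\alpha)}_\perp)_z$, from which I read off two bounds valid in \emph{every} block: (i) $d_\alpha\ge -2\eta_\alpha-2\sqrt{\eta_\alpha}$, and (ii) for $\alpha\in B$, using $\cos^2 2\theta_\alpha>(2c-1)^2$, $d_\alpha\ge (2c-1)^2-2\eta_\alpha-2\sqrt{\eta_\alpha}$. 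Now I would aggregate with the two global identities $\sum_\alpha q_\alpha\eta_\alpha=1-\omega$ and (by concavity of $\sqrt{\cdot}$) $\sum_\alpha q_\alpha\sqrt{\eta_\alpha}\le\sqrt{1-\omega}$. Summing (ii) over $B$ gives $\sum_{\alpha\in B}q_\alpha d_\alpha\ge (2c-1)^2 P_B-2(1-\omega)-2\sqrt{1-\omega}$. For the matching upper bound I split $\sum_{\alpha\in B}q_\alpha d_\alpha=\sum_\alpha q_\alpha d_\alpha-\sum_{\alpha\notin B}q_\alpha d_\alpha$, bound the first term by $\big|\sum_\alpha q_\alpha d_\alpha\big|=2\mu$ and the second from below by (i), giving $\sum_{\alpha\in B}q_\alpha d_\alpha\le 2\mu+2(1-\omega)+2\sqrt{1-\omega}$. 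Combining the two and using $1-\omega\le\sqrt{1-\omega}$ yields $(2c-1)^2 P_B\le 2\mu+8\sqrt{1-\omega}\le 2\mu+10\sqrt{1-\omega}$, which is exactly the claim (the stated constant $10$ is not tight).

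\textbf{Main obstacle.} The delicate part is the per-block estimate: I must obtain the transverse Bloch component scaling as $\sqrt{\eta_\alpha}$ rather than $\eta_\alpha$ (this square-root loss is precisely what forces the $\sqrt{1-\omega}$ in the final bound and is unavoidable), and I must arrange the good/bad bookkeeping so that the good blocks contribute only an $O(\sqrt{1-\omega})$ correction through bound (i), rather than the $O(2c-1)$ loss produced by the naive estimate $|d_\alpha|\le 2c-1$. One also has to verify that bound (i) survives the regime $\eta_\alpha>1/2$, where $1-2\eta_\alpha<0$ so the first summand in $d_\alpha$ may be negative; there the inequality $(1-2\eta_\alpha)\cos^2 2\theta_\alpha\ge 1-2\eta_\alpha\ge -2\eta_\alpha$ still holds and closes the argument.
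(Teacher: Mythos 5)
Your proof is correct, and its handling of the noisy case is genuinely different from the paper's. Both arguments start identically: Jordan's lemma (Lemma~\ref{lemma:jordans_projections}) gives the $2\times 2$ block structure, the overlap condition $c_\alpha\le c$ is translated to $|\cos 2\theta_\alpha|\le 2c-1$, and in the exact case ($\eta_\alpha=0$ for all blocks, i.e.\ $\omega=1$) your identity $2\mu=\sum_\alpha q_\alpha(2c_\alpha-1)^2$ coincides with the paper's computation $2\mu=\sum_j p_j\,(2c_j-1)^2$ via $a_j^4+b_j^4=\tfrac12+\tfrac12(2c_j-1)^2$. The divergence is in how $\omega<1$ is treated. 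The paper reduces to the exact case: it replaces $\phi$ by $\phi'=M\phi M/\Tr(M\phi)$, invokes the gentle measurement lemma \cite[Lemma 9.4.1]{wilde2013gentle} to get $\|\phi'-\phi\|_1\le 2\sqrt{1-\omega}$, propagates this perturbation through $\mu$ (costing $4\sqrt{1-\omega}$) and through the final trace (costing another $2\sqrt{1-\omega}$). You instead keep the arbitrary state, parametrize each block by its Bloch vector, bound the transverse component by $2\sqrt{\eta_\alpha}$, and aggregate with $\sum_\alpha q_\alpha\sqrt{\eta_\alpha}\le\sqrt{1-\omega}$ (Jensen). Your route is self-contained (no gentle measurement lemma), makes the source of the square-root loss geometrically explicit --- it is exactly the transverse Bloch component --- and yields the slightly sharper constant $8$ in place of $10$; you also correctly dispose of the $\eta_\alpha>1/2$ regime and the one-dimensional blocks, which are the two places such a direct argument could silently break. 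The paper's reduction, on the other hand, is more modular: the exact-case statement is proved once and the perturbation argument is generic, which is convenient if one wants to swap in a different exact-case analysis. Either proof establishes the lemma as stated.
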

    The above lemma was proven in~\cite{brakerski2021cryptographic}.
    Since in plays a crucial part in the work we include the proof for completeness.
    \begin{proof}
        Using Jordan's lemma we find a basis of~$\mathcal{H}$ in which
        \begin{equation}
            M = \oplus_j
            \begin{pmatrix}
                a_j^2 & a_j b_j \\
                a_j b_j & b_j ^2
            \end{pmatrix}
            ~~~~ \text{and} ~~~~
            \Pi = \oplus_j
            \begin{pmatrix}
                1 &   \\
                  & 0
            \end{pmatrix}
        \end{equation}
        where~$a_j = \cos \theta_j$, $b_j = \sin \theta_j$, for some angles~$\theta_j$.
        Let~$\Gamma$ be the orthogonal projection on those 2-dimensional blocks such that~$\max(a_j^2, b_j^2)\equiv c_j \leq c$. 
        Note that: 
        \begin{enumerate}
            \item $\Gamma$ commutes with both~$M$ and~$\Pi$, but not necessarily with~$\phi$.
            \item  $\Gamma$ is the projection on~$2\times2$ blocks where the eigenvalues of the operator $\Pi M\Pi + (\mathbbm{1}-\Pi)M(\mathbbm{1}-\Pi)$ are in the range~$[1-c,c]$ as a consequence of Lemma~\ref{claim:eigenvalue_lemma}.
        \end{enumerate}

        Assume that~$\omega = 1$.
        This implies that~$\phi$ is supported on the range of~$M$.
        For any block~$j$, let~$B_j$ be the projection on the block and~$p_j = \Tr(B_j \phi)$.
        It follows from the decomposition of~$M$ and~$\Pi$ to 2 dimensional blocks and the definition of~$\mu$ that\footnote{This is easily seen in the Bloch sphere representation.}
        \begin{equation}\label{eq:mu_lower}
            \qty|
                \sum_{j:c_j \leq c} p_j \qty(a_j^4 + b_j^4) + 
                \sum_{j:c_j > c} p_j \qty(a_j^4 + b_j^4) - 
                \frac{1}{2}
                | = \mu \; .
        \end{equation}
        Using that for~$j$ with~$c_j > c$ such that~$\max(a_j^2, b_j^2) \geq c$ and the fact that the polynomial $1-2x(1-x)$ is strictly rising in the regime~$x>1/2$, we get
        \begin{equation*}
            a_j^4 + b_j^4 = 1 - 2 a_j^2  b_j^2 =
            1 - 2 \max(a_j^2, b_j^2)(1 - \max(a_j^2, b_j^2))
            \geq 1 - 2c(1-c) =
            \frac{1}{2} + \frac{1}{2}\qty(2c-1)^2 \; .
        \end{equation*}
        From Equation~\eqref{eq:mu_lower} and~$a_j^4 + b_j^4 \geq 1/2$ it follows that
        \begin{align*}
        2 \mu & = 
        \sum_{j:c_j \leq c} p_j \qty(1+(2c_j-1)^2) + 
        \sum_{j:c_j > c} p_j \qty(1+(2c_j-1)^2) - 1 \\
        & = \sum_{j:c_j \leq c} p_j \qty(2c_j-1)^2 + \sum_{j:c_j > c} p_j \qty(2c_j-1)^2
        \geq \sum_{j:c_j > c} p_j \qty(2c-1)^2\; .
        \end{align*}
        Consequently, 
        \begin{equation}\label{eq:no_omega_proj}
            \tr((\mathbbm{1}-\Gamma)\phi) \leq \frac{2\mu}{{(2c - 1)}^2} \; .
        \end{equation}
        This concludes the case where~$\omega=1$.
        
        Consider~$\omega < 1$ and~$\Tr(M\phi) > 0$, as otherwise the lemma is trivial.
        Let~$\phi' = M\phi M/ \Tr(M \phi)$.
        By the gentle measurement lemma~\cite[Lemma 9.4.1]{wilde2013gentle},
        \begin{equation}\label{eq:gentle_projection}
            \norm{\phi' -\phi}_1 \leq 2\sqrt{1 - \omega} \; .
        \end{equation}
        Using the definition of~$\mu$ it follows that
        \begin{equation*}
            \qty
                |
                    \frac{1}{2} - \Tr(M\Pi \phi' \Pi) - \Tr(M(\mathbbm{1}-\Pi)\phi'(\mathbbm{1}-\Pi))
                |
            \leq
                \mu + 4\sqrt{1 - \omega}
                \; .
        \end{equation*}
        Following the same steps as those used for the case~$\gamma = 0$ yields an analogue of~\eqref{eq:no_omega_proj}, with~$\phi'$ instead of~$\phi$ on the left-hand side and~$\mu+4\sqrt{1 - \omega}$ instead of~$\mu$ on the right-hand side.
        Applying again Equation~\eqref{eq:gentle_projection} the same bound transfers to~$\phi$ with an additional loss of~$2\sqrt{1 - \omega}$.
    \end{proof}
.
    A main tool that we are going to use is the entropy accumulation theorem (EAT)~\cite{dupuis2020entropy,arnon2018practical,dupuis2019entropy}. 
    For our quantitive results we used the version appearing in~\cite{dupuis2020entropy}; One can use any other version of the EAT in order to optimize the randomness rates, e.g.,~\cite{dupuis2019entropy} as well as the generalization in~\cite{metger2022generalised}.
    We do not explain the EAT in detail; the interested reader is directed to~\cite{arnon2020device} for a pedagogical explanation.
        
    \begin{definition}[EAT channels]\label{def:eat_channels} 
        Quantum channels~$\qty{\mathcal{M}_i :R_{i-1}\rightarrow R_i O_i S_i Q_i}_{i\in[n]}$ are said to be EAT channels if the following requirements hold:
        \begin{enumerate}
            \item $\qty{O_i}_{i\in[n]}$ are finite dimensional quantum systems of dimension~$d_O$ and~$\qty{Q_i}_{i\in[n]}$ are finite-dimensional classical systems (RV).~$\qty{S_i}_{i\in[n]}$ are arbitrary quantum systems.
            \item For any~$i\in[n]$ and any input state~$\sigma_{R_{i-1}}$, the output state~$\sigma_{R_i O_i S_i}=\mathcal{M}_{i} (\sigma_{R_{i-1}})$ has the property that the classical value~$Q_i$ can be measured from the marginal~$\sigma_{O_i S_i}$ without changing the state. That is, for the map~$\mathcal{T}_i : O_i S_i \rightarrow O_i S_i Q_i$ describing the process of deriving~$Q_i$ from~$O_i$ and~$S_i$, it holds that~$\Tr_{Q_i}\circ\mathcal{T}_i (\sigma_{O_i S_i})=\sigma_{O_i S_i}$.
            \item For any initial state~$\rho^{\text{in}}_{R_0 E}$, the final state~$\rho_{\mathbf{OSX}E}=(\Tr_{R_n}\circ\mathcal{M}_n \circ\cdots\circ\mathcal{M}_1)\otimes\mathbbm{1}_E \rho^{\text{in}}_{R_0 E}$ fulfils the Markov-chain conditions $ O_1,...,O_{i-1} \leftrightarrow S_1,...,S_{i-1},E\leftrightarrow S_i$.
        \end{enumerate}
    \end{definition}

    \begin{definition}[Min-tradeoff function]
        \label{def:min_tradeoff_function}
        Let ~$\qty{\mathcal{M}}_{i}$ be a family of EAT channels and~$\mathcal{Q}$ denote the common alphabet of~$Q_1 ,..., Q_n$. A differentiable and convex function~$f_{\text{min}}$ from the set of probability distributions~$p$ over~$\mathcal{Q}$ to the real numbers is called a~\emph{min-tradeoff function} for~$\qty{\mathcal{M}}_i$ if it satisfies
        \begin{equation*}
            f_{\text{min}} (p) \leq 
            \inf_{\sigma_{R_{i-1} R'} : \mathcal{M}_i (\sigma)_{Q_i}=p} 
            {H(O_i | S_i R')}_{\mathcal{M}_i (\sigma)}
        \end{equation*}
        for all~$i\in[n]$, where the infimum is taken over all purifications of input states of~$\mathcal{M}_i$ for which the marginal on~$Q_i$ of the output state is the probability distribution~$p$.
    \end{definition}

    \begin{definition}
        Given an alphabet~$\mathcal{Q}$, for any $\mathbf{q}\in\mathcal{Q}^n$ we define the probability distribution~$\mathrm{freq}_{\mathbf{q}}$ over~$\mathcal{Q}$ such that for any~$\tilde{q}\in\mathcal{Q}$,
        \begin{equation*}
            \mathrm{freq}_{\mathbf{q}}(\tilde{q}) = \frac{\qty|\qty{i\in[n]:q_i=\tilde{q}}|}{n} \;.
        \end{equation*}
    \end{definition}

    \begin{theorem}
    [Entropy Accumulation Theorem (EAT)]
    \label{theorem:eat}
        Let~$\mathcal{M}_i : R_{i-1} \rightarrow R_i O_i S_i Q_i$ for~$i\in[n]$ be EAT channels,
        let~$\rho$ be the final state,
        $\Omega$ an event defined over~$\mathcal{Q}^n$,
        $p_\Omega$ the probability of~$\Omega$ in~$\rho$ and~$\rho_{|\Omega}$ the final state conditioned on~$\Omega$.
        Let~$\varepsilon\in(0,1)$.
        For~$\hat{\Omega}=\qty{\mathrm{freq}_{\mathbf{q}} : \mathbf{q}\in\Omega}$ convex,
        $f_\mathrm{min}$ a min-tradeoff function for~$\qty{\mathcal{M}_i}_{i\in\qty[n]}$,
        and any~$\linearTermEAT\in\mathbb{R}$ such that~$f_{\mathrm{min}}(\mathrm{freq}_\mathbf{q}) \geq \linearTermEAT$ for any~$\mathrm{freq}_{\mathbf{q}}\in\hat{\Omega}$,
        \begin{equation}\label{eq:eat}
            H^{\varepsilon}_{\mathrm{min}} \qty(\mathbf{O} | \mathbf{S} E)_{\rho_{|\Omega}} \geq n \linearTermEAT - \mu \sqrt{n} \; ,
        \end{equation}
        where
        \begin{equation}\label{eq:second_order_term}
            \mu = 2 (\log (1+2 d_O) + \left\lceil \left\Vert \nabla f_{\mathrm{max}}\right\Vert _{\infty}\right\rceil  ) \sqrt{1 - 2\log(\varepsilon\cdot p_{\Omega})} \; .
        \end{equation}
    \end{theorem}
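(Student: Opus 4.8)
The plan is to prove Theorem~\ref{theorem:eat} by the Rényi-entropy method that underlies all known proofs of entropy accumulation: convert the smooth min-entropy into a one-parameter family of sandwiched conditional Rényi entropies $H_\alpha$, use the sequential structure of the EAT channels to bound the total $H_\alpha$ by a sum of single-round contributions, lower bound each contribution through the min-tradeoff function $f_{\min}$, and finally tune $\alpha\to 1$ so that the accumulated correction terms appear only at order $\sqrt{n}$.

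First I would pass from $\hmin{\varepsilon}(\mathbf{O}|\mathbf{S}E)_{\rho_{|\Omega}}$ to a sandwiched Rényi entropy using the standard smoothing bound: for $\alpha\in(1,2)$ one has $\hmin{\varepsilon}(\mathbf{O}|\mathbf{S}E)_{\rho_{|\Omega}} \geq H_\alpha(\mathbf{O}|\mathbf{S}E)_{\rho_{|\Omega}} - \frac{g(\varepsilon)}{\alpha-1}$ with $g(\varepsilon)$ logarithmic in $\varepsilon$. Replacing the conditioned state $\rho_{|\Omega}$ by the (subnormalized) full state is then paid for by a further term proportional to $\frac{\log p_\Omega}{\alpha-1}$, which is the origin of the $p_\Omega$-dependence inside $\mu$ in Eq.~\eqref{eq:second_order_term}.

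The technical heart is an additive chain rule for the Rényi entropy across rounds. Since Rényi entropies do not obey a naive chain rule, here I would crucially invoke the Markov-chain condition $O_1\dots O_{i-1}\leftrightarrow S_1\dots S_{i-1}E\leftrightarrow S_i$ of Definition~\ref{def:eat_channels} to control the cross terms, peeling off one channel $\mathcal{M}_i$ at a time and establishing $H_\alpha(\mathbf{O}|\mathbf{S}E)_\rho \geq \sum_{i=1}^{n} \inf_{\omega} H_\alpha(O_i|S_iR')_{\mathcal{M}_i(\omega)}$, the infimum taken over purified inputs exactly as in Definition~\ref{def:min_tradeoff_function}. I expect this super-additivity to be the main obstacle, since it is the step where the structural assumptions of the EAT (the channel form together with the Markov condition) must do essentially all of the work; the bound is false without them.

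Finally, for each round I would expand the single-round Rényi entropy around the von Neumann entropy, $\inf_\omega H_\alpha(O_i|S_iR') \geq f_{\min}(p_i) - (\alpha-1)\,V$, where $p_i$ is the marginal distribution of $Q_i$ and the coefficient $V$ is controlled by $\log(1+2d_O)$ and the gradient norm $\|\nabla f_{\max}\|_\infty$, the lower bound on $H$ coming from Definition~\ref{def:min_tradeoff_function}. Conditioning on $\Omega$ restricts the realized frequencies to the convex set $\hat{\Omega}$, on which $f_{\min}(\mathrm{freq}_{\mathbf{q}}) \geq \linearTermEAT$ by hypothesis; feeding this through the per-round bounds and using convexity of $f_{\min}$ via a Jensen-type argument (Lemma~\ref{lemma:jensen_extension}) shows that $\sum_i f_{\min}(p_i) \geq n\linearTermEAT$ at leading order. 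Collecting everything yields a lower bound of the form $n\linearTermEAT - \frac{c(\varepsilon,p_\Omega)}{\alpha-1} - (\alpha-1)nV$, and choosing $\alpha-1 \propto 1/\sqrt{n}$ balances the two error terms and reproduces precisely the $-\mu\sqrt{n}$ correction of Eq.~\eqref{eq:eat}.
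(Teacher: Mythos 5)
A preliminary but important point: the paper contains no proof of Theorem~\ref{theorem:eat} at all. It is imported verbatim from the entropy accumulation paper of Dupuis, Fawzi and Renner~\cite{dupuis2020entropy} (the text around it explicitly defers to that reference and to~\cite{arnon2020device} for exposition), so there is no internal proof to compare yours against; your sketch must be judged against the known proof. At the level of skeleton it matches that proof: smoothing to a sandwiched R\'enyi entropy at cost $O\qty(\tfrac{1}{\alpha-1}\log\tfrac{1}{\varepsilon})$, a cost $O\qty(\tfrac{1}{\alpha-1}\log\tfrac{1}{p_\Omega})$ for conditioning on $\Omega$, a R\'enyi chain rule whose validity hinges on the Markov conditions of Definition~\ref{def:eat_channels}, a per-round expansion $H_\alpha \geq H - O(\alpha-1)$, and the balancing choice $\alpha-1\propto 1/\sqrt{n}$. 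Those are indeed the right five ingredients in the right order.

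There is, however, a genuine gap at the step where the min-tradeoff function enters, and it is the step that makes the EAT nontrivial. The chain rule you invoke gives $H_\alpha(\mathbf{O}|\mathbf{S}E)_\rho \geq \sum_i \inf_\omega H_\alpha(O_i|S_iR')_{\mathcal{M}_i(\omega)}$, where the infimum runs over \emph{all} input states and therefore carries no information about the statistics observed in the actual run. Feeding Definition~\ref{def:min_tradeoff_function} into each term yields only $n\,\inf_p f_{\mathrm{min}}(p)$, the worst case over all distributions $p$, which is in general far below $n\linearTermEAT$ and useless. Your proposed repair --- ``conditioning on $\Omega$ restricts the realized frequencies to $\hat{\Omega}$, then a Jensen-type argument on $\sum_i f_{\mathrm{min}}(p_i)$'' --- does not work: conditioning on $\Omega$ is a global operation on the final state, applied \emph{after} the chain rule has already decoupled the rounds, so there is no mechanism by which the constraint $\mathrm{freq}_{\mathbf{q}}\in\hat{\Omega}$ propagates back into the per-round infima (the $p_i$ appearing there are not the conditioned statistics). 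The actual proof resolves exactly this difficulty with a construction your sketch lacks: one first replaces $f_{\mathrm{min}}$ by an affine tangent lower bound (using convexity and differentiability), then extends each channel $\mathcal{M}_i$ with a classical ``dummy'' register $D_i$ whose conditional entropy given $Q_i=q$ is calibrated to the tangent function at the point distribution $\delta_q$. For the extended channels the per-round infimum \emph{is} uniform in $p$, the chain rule applies, and conditioning on $\Omega$ at the end lets one subtract off the accumulated dummy entropy, which on $\Omega$ is at least $n\,f_{\mathrm{min}}(\mathrm{freq}_{\mathbf{q}})\geq n\linearTermEAT$. This construction is also the \emph{only} source of the term $\left\lceil \left\Vert \nabla f \right\Vert_{\infty}\right\rceil$ in Equation~\eqref{eq:second_order_term}: it bounds the size of the registers $D_i$, which then enters the second-order R\'enyi expansion alongside $\log(1+2d_O)$. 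Your sketch asserts that the gradient norm controls the expansion coefficient $V$ but provides no mechanism producing it, which is a symptom of the missing step rather than an independent oversight.
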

    
    
\subsection{Post-quantum cryptography}\label{sec:pre_post_quantum}
    \newcommand{\KF}{\mathcal{K}_{\mathcal{F}}}
    \begin{definition}[Trapdoor claw-free function family]
        For every security parameter~$\lambda\in\mathbb{N}$, Let~$\mathcal{X}\subseteq\qty{0,1}^{w},\mathcal{Y}$ and~$\KF$ be finite sets of inputs, outputs and keys respectively.
        A family of injective functions 
        \begin{equation*}\label{eq:tcf_family}
            \mathcal{F}=\qty{f_{k,b}:\mathcal{X}\rightarrow \mathcal{Y}}_{k\in\KF,b\in\qty{0,1}}
        \end{equation*}
        is said to be trapdoor claw-free (TCF) family if the following holds:
        \begin{itemize}
            \item \textbf{Efficient Function Generation:} There exists a PPT algorithm~$\text{Gen}_\mathcal{F}$ which takes the security parameter~$1^\lambda$ and outputs a key~$k\in\KF$ and a trapdoor~$t$.
            \item \textbf{Trapdoor:} For all keys~$k\in\KF$ there exists an efficient deterministic algorithm~$\text{Inv}_{k}$ such that, given~$t$, for all~$b\in\qty{0,1}$ and~$x\in\mathcal{X}$, $\text{Inv}_{k}(t,b,f_{k,b}(x))=x$.
            \item \textbf{Claw-Free:} For every QPT algorithm~$\mathcal{A}$ receiving as input~$(1^{\lambda},k)$ and outputting a pair~$(x_0,x_1)\in\mathcal{X}^2$ the probability to find a claw is negligible.
            I.e. there exists a negligible function~$\eta$ for which the following holds.
            \begin{equation*}
                \Pr_{\overset{k\leftarrow\KF}{(x_0,x_1)\leftarrow\mathcal{A}(1^\lambda,k)}}
                \qty[f_{k,0}(x_0)=f_{k,1}(x_1)] \leq \eta(\lambda).
            \end{equation*}
        \end{itemize}
    \end{definition}

    \begin{definition}[Noisy NTCF family~\cite{brakerski2021cryptographic}]\label{def:ntcf}
        For every security parameter~$\lambda\in\mathbb{N}$, Let~$\mathcal{X},\mathcal{Y}$ and~$\KF$ be finite sets of inputs, outputs and keys respectively and~$\mathcal{D}_\mathcal{Y}$ the set of distributions over~$\mathcal{Y}$. A family of functions 
        \begin{equation*}\label{eq:ntcf_family}
            \mathcal{F}=\qty{f_{k,b}:\mathcal{X}\rightarrow \mathcal{D}_{Y}}_{k\in\KF,b\in\qty{0,1}}
        \end{equation*}
        is said to be noisy trapdoor claw-free (NTCF) family if the following conditions hold:
        \begin{itemize}
            \item \textbf{Efficient Function Generation:} There exists a probabilistic polynomial time (PPT) algorithm~$\text{Gen}_\mathcal{F}$ which takes the security parameter~$1^\lambda$ and outputs a key~$k\in\KF$ and a trapdoor~$t$.
            \item \textbf{Trapdoor Injective Pair:} For all keys~$k\in\KF$, the following 2 conditions are satisfied.
            \begin{enumerate}
                \item Trapdoor: For all~$b\in\qty{0,1}$ and~$x\neq x'\in\mathcal{X},~\text{Supp}(f_{k,b}(x))\cap\text{Supp}(f_{k,b}(x'))=\emptyset$. In addition, there exists an efficient deterministic algorithm~$\text{Inv}_{k}$ such that for all~$b\in\qty{0,1},~x\in\mathcal{X}$ and~$y\in\text{Supp}(f_{k,b} (x))$, $\text{Inv}_{k}(t,b,y)=x$.
                \item Injective Pair: There exists a perfect matching relation~$\mathcal{R}_{k}\subseteq\mathcal{X}\times\mathcal{X}$ such that~$f_{k,0}(x_0)=f_{k,1}(x_1)$ if and only if~$(x_0,x_1)\in\mathcal{R}_k$.
            \end{enumerate}
            \item \textbf{Efficient Range Superposition:} For every function in the family~$f_{k,b}\in\mathcal{F}$, there exists a function~$f'_{k,b}:\mathcal{X}\rightarrow\mathcal{D}_\mathcal{Y}$ (not necessarily a member of~$\mathcal{F}$) such that the following hold.
            \begin{enumerate}
                \item For all~$(x_0,x_1)\in\mathcal{R}_k$ and~$y\in\text{Supp}(f'_{k,b}(x_b))$, $\text{Inv}_{k}(t,b,y)=x_b$ and~$\text{Inv}_{k}(t,1-b,y)=x_{1-b}$.
                \item There exists an efficient deterministic algorithm~$\text{Chk}_{k}$ such that
                \begin{equation*}
                    \text{Chk}_{k}(b,x,y)=\mathbbm{1}_{y\in\text{Supp}(f'_{k,b}(x))}.
                \end{equation*}
                \item There exists some negligible function~$\eta$ such that
                \begin{equation*}
                    \underset{x\leftarrow\mathcal{X}}{\mathbbm{E}}\qty[\textbf{H}^{2}(f_{k,b}(x),f'_{k,b}(x))] \leq \eta (\lambda)
                \end{equation*}
                where~$\textbf{H}$ the Hellinger distance for distributions defined in Definition~\ref{def:hellinger}.
                \item There exists a quantum polynomial time (QPT) algorithm~$\text{Samp}_{k,b}$ that prepares the quantum state
                \begin{equation*}
                    \ket{\psi'}=\frac{1}{\sqrt{|\mathcal{X}|}} =
                    \sum_{x\in\mathcal{X},y\in\mathcal{Y}}
                    \sqrt{(f'_{k,b}(x))(y)}\ket{x}\ket{y}.
                \end{equation*}
            \end{enumerate}
            \item \textbf{Adaptive Hardcore Bit:} For all keys~$k\in\KF$, the following holds. For some integer~$w$ that is a polynomially bounded function of~$\lambda$
            \begin{enumerate}
                    \item For all~$b\in\qty{0,1}$ and~$x\in\mathcal{X}$, there exists a set~$G_{k,b,x}\subseteq\qty{0,1}^{w}$ such that~$\Pr_{d\leftarrow\qty{0,1}^w}[d\notin G_{k,b,x}]\leq\eta(\lambda)$ for some negligible function~$\eta$. In addition, there exists a PPT algorithm that checks for membership in~$G_{k,b,x}$ given~$k,b,x$ and the trapdoor~$t$.
                    \item Let
                    \begin{align*}
                        \begin{split}
                            H_k & \coloneqq \qty{(b,x_b,d,d\cdot(x_0\oplus x_1))|b\in\qty{0,1},(x_0,x_1)\in\mathcal{R}_{k},d\in G_{k,0,x_0} \cap G_{k,1,x_1}} \\
                            \bar{H}_k & \coloneqq \qty{(b,x_b,d,e)|e\in\qty{0,1}, (b,x_b,d,1-e)\in H_{k}}
                        \end{split}
                    \end{align*}
                    then for any QPT~$\mathcal{A}$ and polynomial size (potentially inefficient to prepare) advice state $\phi$ independent of the key $k$, there exists a negligible function~$\eta'$ for which the following holds\footnote{We remark that the same computational assumption is being used in~\cite{brakerski2021cryptographic}, even though the advice state is not included explicitly in the definition therein.}
                    \begin{equation}\label{eq:adaptive_hardcore_bit}
                        \qty|
                            \Pr_{(k,t)\leftarrow Gen_{\mathcal{F}(1^\lambda)}}
                            [\mathcal{A}(k, \phi)\in H_{k}]
                            -
                            \Pr_{(k,t)\leftarrow Gen_{\mathcal{F}(1^\lambda)}}
                            [\mathcal{A}(k, \phi)\in \bar{H}_{k}]
                            |
                        \leq \eta'(\lambda) \;.
                    \end{equation}
                \end{enumerate}
        \end{itemize}
    \end{definition}

    \begin{lemma}[Informal: existence of NTCF family{~\cite[Section 4]{brakerski2021cryptographic}}]\label{lem:ntcf_exist}
        There exists an LWE-based construction of an NTCF family as in Definition~\ref{def:ntcf}.\footnote{As previously noted, ~\cite{brakerski2021cryptographic} assumes that the LWE problem is hard also when given an advice state. Thus, though not explicitly stated, the construction in~\cite[Section 4]{brakerski2021cryptographic} is secure with respect to an advice state.}
    \end{lemma}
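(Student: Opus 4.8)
The plan is to exhibit the explicit construction of~\cite[Section 4]{brakerski2021cryptographic} and verify that it meets every requirement of Definition~\ref{def:ntcf}; since the statement is informal, the proof is essentially an appeal to that construction together with a careful accounting of the advice-state formulation used here. First I would recall the LWE-based functions: a key~$k$ encodes an LWE instance (a matrix together with the associated vectors), the trapdoor~$t$ is the lattice trapdoor that enables efficient inversion, and the two branches~$f_{k,0},f_{k,1}$ are defined so that a claw~$(x_0,x_1)$ corresponds to the two LWE preimages of a common noisy output. The \emph{Efficient Function Generation} property then follows from the standard efficient trapdoor-sampling procedure for LWE, and the \emph{Trapdoor Injective Pair} property follows from the inversion algorithm~$\text{Inv}_{k}$ together with the fact that the two branches differ by a fixed secret shift, which yields the perfect matching relation~$\mathcal{R}_k$.

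Next I would treat the \emph{Efficient Range Superposition} property. Here the auxiliary function~$f'_{k,b}$ is the idealized (discrete-Gaussian) version of~$f_{k,b}$; the Hellinger-distance bound of item~(3) follows from standard tail and smoothing estimates for the discrete Gaussian, and the state-preparation algorithm~$\text{Samp}_{k,b}$ of item~(4) is the usual quantum routine that coherently prepares a Gaussian superposition over the input register and evaluates the function into the output register. The checking algorithm~$\text{Chk}_{k}$ is simply membership testing in the support of~$f'_{k,b}$, which is efficient given the public description of the function.

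The main obstacle is the \emph{Adaptive Hardcore Bit} property in Equation~\eqref{eq:adaptive_hardcore_bit}, and in particular its formulation in the presence of a polynomial-size advice state~$\phi$. The argument of~\cite{brakerski2021cryptographic} reduces any adversary that distinguishes membership in~$H_k$ from membership in~$\bar{H}_k$ to an LWE solver; the reduction treats~$\phi$ as a fixed auxiliary input and therefore goes through verbatim, \emph{provided} LWE remains hard when the distinguisher is additionally handed~$\phi$. This is precisely where the stronger ``LWE-with-advice'' assumption enters: since~$\phi$ is required to be independent of the key~$k$, sampling~$k$ after fixing~$\phi$ shows that an advantage in~\eqref{eq:adaptive_hardcore_bit} would translate into an advantage against LWE relative to the same advice state. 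As recorded in the footnote, this is exactly the assumption implicitly made in~\cite{brakerski2021cryptographic}, so the construction there already satisfies the advice-state version stated here, and no new cryptographic work is needed beyond making this dependence explicit.
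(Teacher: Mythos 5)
Your proposal is correct and takes essentially the same route as the paper: the paper offers no proof beyond citing the construction of~\cite[Section 4]{brakerski2021cryptographic} and noting in a footnote that the adaptive hardcore bit reduction implicitly requires (and tolerates) the LWE-with-advice assumption, which is precisely the structure of your argument. Your expanded account of the construction and of why the reduction goes through with a key-independent advice state is a faithful unpacking of that citation rather than a different approach.
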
   

    \section{Randomness certification}\label{sec:rand}
    Our main goal in this work is to provide a framework to lower bound the entropy accumulated during the execution of a protocol that uses a single quantum device.
The randomness generation protocol that we use is given as Protocol~\ref{protocol:simplified_accumulation}, where multiple rounds of interaction of a verifier with a quantum prover are performed.
The quantitative result of this section is a lower bound the amount of smooth min-entropy of the output of the protocol, namely, $H_{\min}^{\varepsilon}(\mathbf{\hat{\Pi}}|\mathbf{KTG}E)$.

The protocol and its security are based on the existence of an NTCF~$\mathcal{F}=\qty{f_{b,k}:\mathcal{X}\rightarrow\mathcal{Y}}$ (see Lemma~\ref{lem:ntcf_exist}).
We use the same assumption as in~\cite{brakerski2021cryptographic}, where the LWE problem~\cite{regev2009lattices} is exploited to construct an NTCF on which the protocol builds.
The definitions made in this chapter are stated implicitly with respect to~$\mathcal{F}$, a security parameter~$\lambda$ and a corresponding set of keys~$\KF$.

Before proving the validity of this protocol in the multiple round, we inspect a \emph{single} round of the protocol in Section~\ref{subsec:single_round_entropy}. 
This is done using a definition and inspection of a simplified one-round protocol and device, a reduction to single qubits and then the usage of the entropic uncertainty relation to define a min-tradeoff function. 
Following these steps, in Sections~\ref{subsec:entropy_acc_comp} and~\ref{sec:plots}, we use the results of Section~\ref{subsec:single_round_entropy} in combination with the EAT to prove a lower bound on the total amount of entropy accumulated during the execution of Protocol~\ref{protocol:simplified_accumulation}.

\begin{figure}
    \begin{protocol}{Entropy Accumulation Protocol\label{protocol:simplified_accumulation}} 
        \begin{enumerate}[label=\Alph*.]
            \item \textbf{Arguments:}
                \begin{itemize}
                    \item $D$ - untrusted device with which the verifier can interact according to the described protocol.
                    \item $\mathcal{F}$ - NTCF
                    \item $n\in\mathbb{N}$ - number of rounds
                    \item $\gamma\in(0,1]$ - expected fraction of test rounds
                    \item $\omega_{\text{exp}}$ - expected winning probability in an honest implementation
                    \item $\delta_{\text{est}}\in(0,1)$ - width of the confidence interval for parameter estimation
                    \item $\beta\in(0,1)$ - preimage test to equation test ratio
                \end{itemize}
            \item \textbf{Process:}
                \begin{enumerate}[label=\arabic*:]
                    \item Set all variables to the default value~$\perp$.
                    \item For every round~$i\in[n]$ do Steps 3-14.
                    \item \qquad Sample key and trapdoor~$(K_i, \tau_i)\leftarrow\text{Gen}_{\mathcal{F}}$ for the NTCF~$\mathcal{F}$.
                    \item \qquad Pass the key~$K_i$ to the device~$D$ which in return outputs~$Y_{i}\in\mathcal{Y}$.
                    \item \qquad Verifier samples~$G_i\leftarrow\text{Bernoulli}(1-\gamma)$.
                    \item \qquad If~$G_i = 0$: Test round
                    \item \qquad\qquad Verifier samples $~T_i\leftarrow\text{Bernoulli}(\beta)$ and passes it to the device~$D$.
                    \item \qquad\qquad If~$T_i = 0$,~$D$ outputs~$(b_i , x_i) \in \qty{0,1} \times\mathcal{X}$.
                    \item \qquad\qquad\qquad Verifier sets $\hat{\Pi}_i=(b_i - 2)\cdot\text{Chk}_{k_i}( b_i, x_i, Y_i) + 2$ and~$W_i = \mathbbm{1}_{\hat{\Pi}_i \neq 2}$.
                    \item \qquad\qquad If~$T_i = 1$,~$D$ outputs~$(u_i, d_i)\in\qty{0,1}\times\qty{0,1}^w$.
                    \item \qquad\qquad\qquad Verifier, using~$\tau_i$, sets~$\hat{M}_i=1$ if $(u_i, d_i)$ is a valid pair for the challenge (0 otherwise) and~$W_i=\hat{M}_i$.
                    \item \qquad If~$G_i = 1$: Generation round
                    \item \qquad\qquad Verifier sets~$T_i=0$ and passes~$T_i$ to the device~$D$.
                    \item \qquad\qquad\qquad $D$ outputs~$(b_i , x_i) \in \qty{0,1} \times\mathcal{X}$.
                    \item \qquad\qquad\qquad Verifier sets $\hat{\Pi}_i=(b_i - 2)\cdot\text{Chk}_{K_i}( b_i, x_i, Y_i) + 2$.
                    \item Abort if~$\sum_{j:G_j = 0} W_j < (\omega_{\text{exp}}\gamma-\delta_{\text{est}}) \cdot  n$.
                \end{enumerate}
            \item \textbf{Output:}
                $\hat{\Pi}_{j}$ for all $j$ such that $G_j=1$.
        \end{enumerate}
    \end{protocol}    
\end{figure}

\subsection{Single-round entropy}\label{subsec:single_round_entropy}

    In this Section we analyze a single round of Protocol~\ref{protocol:simplified_accumulation}, presented as Protocol~\ref{protocol:single_round}.

    \subsubsection{One round protocols and devices}

       Protocol~\ref{protocol:single_round}, roughly, describes a single round of Protocol~\ref{protocol:simplified_accumulation}.
        The goal of Protocol~\ref{protocol:single_round} can be clarified by thinking about an interaction between a verifier~$\mathcal{V}$ and a quantum prover~$\mathcal{P}$ in an independent and identically distributed (IID) scenario, in which the device is repeating the same actions in each round.
        Upon multiple rounds of interaction between the two,~$\mathcal{V}$ is convinced that the winning rate provided by~$\mathcal{P}$ is as high as expected and therefore can continue with the protocol. We will, of course, use the protocol later on without assuming that the device behaves in an IID manner throughout the multiple rounds of interaction.

      \begin{figure}
        \begin{protocol}{Single-Round Protocol \label{protocol:single_round}}
            \begin{enumerate}[label=\Alph*.]
                \item \textbf{Arguments:}
                    \begin{itemize}
                        \item Classical verifier~$\mathcal{V}$
                        \item Quantum prover~$\mathcal{P}$ represented by the device~$D=\qty(\phi,\Pi,M)$
                        \item $\mathcal{F}$ - NTCF
                        \item $\beta\in(0,1)$
                    \end{itemize}
                \item \textbf{Process:}
                    \begin{enumerate}[label=\arabic*:]
                        \item $\mathcal{V}$ samples $(k, \tau)\leftarrow\text{Gen}_{\mathcal{F}}$ for the NTCF~$\mathcal{F}$.
                        \item $\mathcal{V}$ sends the key~$k$ to the prover.
                        \item $\mathcal{P}$ sends~$y\in\mathcal{Y}$ to~$\mathcal{V}$.
                        \item $\mathcal{V}$ samples~$T\leftarrow\text{Bernoulli}( \beta )$ and gives it to~$\mathcal{P}$.
                        \item If~$T=0$
                        \item \qquad $\mathcal{P}$ returns~$(b,x)\in\qty{0,1}\times\mathcal{X}$ to~$\mathcal{V}$.
                        \item \qquad $\mathcal{V}$ sets~$\hat{\pi}=(b - 2)\cdot\text{Chk}_{k}( b, x, y) + 2$ and~$W=\text{Chk}_{k}(b,x,y)$.
                        \item If~$T=1$
                        \item \qquad $\mathcal{P}$ returns~$(u,d)\in\qty{0,1}\times\qty{0,1}^{w}$ to~$\mathcal{V}$.
                        \item \qquad $\mathcal{V}$, using~$\tau$, sets~$W=1$ if~$(u,d)$ is a valid answer to the challenge.
                    \end{enumerate}
            \end{enumerate}
        \end{protocol}
    \end{figure}

      We begin by formally defining the most general device that can be used to execute the considered single round Protocol~\ref{protocol:single_round}. 
    \begin{definition}[General device~\cite{brakerski2021cryptographic}]\label{def:general_device}
        A general device is a tuple~$D=(\phi,\Pi,M)$ that receives~$k\in\KF$ as input and specified by the following:
        \begin{enumerate}
            \item A normalized density matrix~$\phi\in\mathcal{H}_{D}\otimes\mathcal{H}_{Y}$.
            \begin{itemize}
                \item $\mathcal{H}_{D}$ is a polynomial (in $\lambda$) space, private to the device.
                \item $\mathcal{H}_{Y}$ is a space private to the device whose dimension is the same of the cardinality of~$\mathcal{Y}$.
                \item For every~$y\in\mathcal{Y}$,~$\phi_{y}$ is a sub-normalized state such that
                    \begin{equation*}
                        \phi_{y} = (\mathbbm{1}_{D}\otimes\bra{y}_{Y})\phi(\mathbbm{1}_{D}\otimes\ket{y}_{Y}).
                    \end{equation*}
            \end{itemize}
            \item For every~$y\in\mathcal{Y}$, a projective measurement~${M_{y}^{(u,d)}}$ on~$\mathcal{H}_{D}$, with outcomes~$(u,d)\in\qty{0,1}\times\qty{0,1}^w$.
            \item For every~$y\in\mathcal{Y}$, a projective measurement~${\Pi_{y}^{(b,x)}}$ on~$\mathcal{H}_{D}$, with outcomes~$(b,x)\in\qty{0,1}\times\mathcal{X}$.
            For each~$y$, this measurement has two designated outcomes~$(0,x_0),(1,x_1)$.
        \end{enumerate}
    \end{definition}

    \begin{definition}[Efficient device]\label{def:efficient_device}
        We say that a device $D=(\phi,\Pi,M)$ is efficient if:
        \begin{enumerate}
            \item The state $\phi$ is a polynomial size (in $\lambda$) ``advice state'' that is independent of the chosen keys $k\in\KF$. The state might not be producible using a polynomial time quantum circuit (we say that the state can be inefficient).
            \item The measurements $\Pi$ and $M$ can be implemented by polynomial size quantum circuit.
        \end{enumerate}
    \end{definition}

    We emphasize that the device~$D$ is computationally bounded by both time steps and memory.
    This prevents pre-processing schemes in which the device manually goes over all keys and pre-images to store a table of answers to all the possible challenges as such schemes demand exponential memory in~$\lambda$.

    The cryptographic assumption made on the device is the following.    
    Intuitively, the lemma states that due to the hardcore bit property in Equation~\eqref{eq:adaptive_hardcore_bit}, the device cannot pass both pre-image and equation tests; once it passes the pre-image test, trying to pass also the equation test results in two computationally indistinguishable state-- one in which the device also passes the equation test and one in which it does not.
    
    \begin{lemma}[Computational indistinguishability{~\cite[Lemma~7.1]{brakerski2021cryptographic}}]\label{lemma:computational_indistinguishability}
        Let~$D=(\phi, \Pi, M)$ be an efficient general device, as in Definitions~\ref{def:general_device} and~\ref{def:efficient_device}. Define a sub-normalized density matrix
        \begin{equation*}
            \tilde{\phi}_{YBXD} =
                \sum_{y\in\mathcal{Y}} \dyad{y}_{Y} \otimes
                \sum_{b\in\qty{0,1}} \dyad{b,x_b}_{BX} \otimes
                \Pi_y ^{(b, x_b)} \phi_y \Pi_y ^{(b, x_b)} \; .
        \end{equation*}
        Let
        \begin{align*}
            \begin{split}
                \sigma_0 & = 
                    \sum_{b\in\qty{0,1}} \dyad{b,x_b}_{BX} \otimes
                    \sum_{(u,d)\in V_{y,1}} \dyad{u,d}_U \otimes
                    (\mathbbm{1}_Y \otimes M_y^{(u,d)})
                    \tilde{\phi}^{(b)}_{YD}
                    (\mathbbm{1}_Y \otimes M_y^{(u,d)}) \; ,
                \\
                \sigma_1 & = 
                    \sum_{b\in\qty{0,1}} \dyad{b,x_b}_{BX} \otimes
                    \sum_{(u,d)\notin V_{y,1}} \mathbbm{1}_{d\in\hat{G}_y}\dyad{u,d}_U \otimes
                    (\mathbbm{1}_Y \otimes M_y^{(u,d)})
                    \tilde{\phi}^{(b)}_{YD}
                    (\mathbbm{1}_Y \otimes M_y^{(u,d)}) \; ,
            \end{split}
        \end{align*}
        where~$V_{y,1}$ is the set of valid answers to challenge~$1$ (equation test).
        Then,~$\sigma_0$ and~$\sigma_1$ are computationally indistinguishable.
    \end{lemma}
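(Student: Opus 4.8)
\emph{The plan is to} prove the statement by a reduction to the adaptive hardcore bit property~\eqref{eq:adaptive_hardcore_bit}. Suppose, towards a contradiction, that there is an efficient distinguisher $\mathcal{B}$ separating $\sigma_0$ from $\sigma_1$ with advantage $\epsilon(\lambda) = |\Pr[\mathcal{B}(\sigma_0)=1] - \Pr[\mathcal{B}(\sigma_1)=1]|$, and assume without loss of generality that the signed advantage is $q_0 - q_1 = +\epsilon$, where $q_c$ denotes the acceptance probability of $\mathcal{B}$ on $\sigma_c$. I would build a QPT adversary $\mathcal{A}$ for the hardcore bit game and show that a non-negligible $\epsilon$ forces $\mathcal{A}$ to win with non-negligible bias, contradicting~\eqref{eq:adaptive_hardcore_bit}.

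The construction of $\mathcal{A}(k,\phi)$ is the natural one: run the efficient device $D=(\phi,\Pi,M)$ on the key $k$ (with $\phi$ playing the role of the advice state) to produce $y$, apply the projective measurement $\Pi_y$ to obtain a designated claw outcome $(b,x_b)$, and then apply $M_y$ to obtain $(u,d)$, retaining the post-measurement residual state. By construction this sample together with its residual is distributed exactly as the sub-normalized mixture $\sigma_0 + \sigma_1$: the branch with $u = d\cdot(x_0\oplus x_1)$ reproduces $\sigma_0$, and the branch with $u \neq d\cdot(x_0\oplus x_1)$, restricted to $d\in\hat{G}_y$, reproduces $\sigma_1$. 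Since $D$ is an efficient device (Definitions~\ref{def:general_device} and~\ref{def:efficient_device}) and $\mathcal{B}$ is efficient, $\mathcal{A}$ is QPT. Then $\mathcal{A}$ runs $\mathcal{B}$ on the produced state and uses its verdict to decide the emitted equation bit: if $\mathcal{B}$ reports ``$\sigma_0$'' it outputs $(b,x_b,d,u)$, and if $\mathcal{B}$ reports ``$\sigma_1$'' it outputs $(b,x_b,d,u\oplus 1)$. Intuitively, whenever $\mathcal{B}$ identifies the branch correctly, $\mathcal{A}$ emits the correct bit $d\cdot(x_0\oplus x_1)$ and lands in $H_k$; when $\mathcal{B}$ errs it lands in $\bar{H}_k$.

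Making this quantitative is the bookkeeping step. Writing $p_0,p_1$ for the weights of the two branches, a direct computation gives $\Pr[\mathcal{A}\in H_k]-\Pr[\mathcal{A}\in \bar{H}_k] = 2p_0(q_0-q_1) + (p_0-p_1)(2q_1-1)$. The trivial adversary that simply outputs $(b,x_b,d,u)$ already yields, via~\eqref{eq:adaptive_hardcore_bit}, the bound $|p_0-p_1|\le\eta'(\lambda)$; combined with $q_0-q_1=\epsilon$ and $|2q_1-1|\le 1$ this shows $\mathcal{A}$'s bias is at least $2p_0\epsilon - \eta'$. Since~\eqref{eq:adaptive_hardcore_bit} caps this bias by another negligible $\eta'$, we conclude that $p_0\,\epsilon$ is negligible. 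If $p_0$ is non-negligible this forces $\epsilon$ to be negligible; and if $p_0$ (hence also $p_1$) is negligible, then $\sigma_0$ and $\sigma_1$ both carry negligible weight and are trivially indistinguishable. The restriction $d\in\hat{G}_y$ appearing in $\sigma_1$ is precisely what aligns the invalid branch with $\bar{H}_k\subseteq\{d\in G_{k,0,x_0}\cap G_{k,1,x_1}\}$, while the negligible probability $\Pr[d\notin G]\le\eta(\lambda)$ from the hardcore bit definition absorbs the corresponding edge terms.

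The main obstacle I anticipate is not the algebra but the careful tracking of the post-measurement residual quantum state: one must ensure that the state $\mathcal{A}$ feeds to $\mathcal{B}$ is genuinely the conditional state appearing inside $\sigma_0$ and $\sigma_1$ (rather than a classically-labeled approximation of it), and that measuring $\Pi_y$ before $M_y$ reproduces exactly the nested measurement structure defining $\tilde{\phi}$ and the $\sigma_c$. One must also verify that $\mathcal{A}$ never needs the second preimage $x_{1-b}$: membership in $H_k$ and $\bar{H}_k$ is decided using the trapdoor, so holding a single designated outcome $(b,x_b)$ suffices, which is what keeps the reduction efficient.
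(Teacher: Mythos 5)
Your reduction is correct and is essentially the same argument as the proof the paper relies on: the paper does not prove this lemma itself but imports it from~\cite[Lemma~7.1]{brakerski2021cryptographic} (adding only the footnote that the advice-state version of the assumption makes the same proof go through), and that proof is exactly your construction --- measure $\Pi_y$ then $M_y$, feed the residual state to the distinguisher $\mathcal{B}$, and output $(b,x_b,d,u)$ or $(b,x_b,d,u\oplus 1)$ according to its verdict, so that a distinguishing advantage becomes a bias in the adaptive hardcore bit game of Equation~\eqref{eq:adaptive_hardcore_bit}. Your bookkeeping is also sound, and in fact slightly more than you need: since the relevant notion for sub-normalized states is the advantage $\left|\mathrm{Tr}(D\sigma_0)-\mathrm{Tr}(D\sigma_1)\right| \leq p_0\,|q_0-q_1| + |p_0-p_1|\,q_1$, your two inequalities ($p_0\,\epsilon$ negligible, and $|p_0-p_1|\leq\eta'$ from the trivial adversary) already yield the conclusion directly, so the final case split on whether $p_0$ is negligible can be dropped.
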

        
        The proof is given in~\cite[Lemma~7.1]{brakerski2021cryptographic}.\footnote{Note that even though the proof in~\cite{brakerski2021cryptographic} does not address the potentially inefficient advice state $\phi$ explicitly, it holds due to the non-explicit definition of their computational assumption.}

    \begin{figure}
        \begin{protocol}{Simplified Single-Round Protocol \label{protocol:simplified_single_round}} 
            \begin{enumerate}[label=\Alph*.]
                \item \textbf{Arguments:}
                    \begin{itemize}
                        \item Classical verifier~$\mathcal{V}$
                        \item Quantum prover~$\mathcal{P}$ represented by the simplified device~$\tilde{D}=\qty(\phi,\tilde{\Pi},\tilde{M})$
                        \item $\beta\in(0,1)$
                    \end{itemize}
                \item \textbf{Process:}
                    \begin{enumerate}[label=\arabic*:]
                        \item $\mathcal{V}$ samples the random variable $T \leftarrow \text{Bernoulli} \qty( \beta )$ which is then passed to~$\mathcal{P}$.
                        \item If $T=0$
                        \item \qquad $\mathcal{P}$ measures~$\tilde{\Pi}$ on the state~$\phi$ for an outcome~$p\in\qty{0,1,2}$.
                        \item If $T=1$
                        \item \qquad $\mathcal{P}$ measures $\tilde{M}$ on the state~$\phi$ for an outcome~$m\in\qty{0,1}$.
                    \end{enumerate}
            \end{enumerate}
        \end{protocol}
    \end{figure}

    We proceed with a reduction of Protocol~\ref{protocol:single_round} to a simplified one, Protocol~\ref{protocol:simplified_single_round}.
    As the name suggests, it will be easier to work with the simplified protocol and devices when bounding the produced entropy.
    We remark that a similar reduction is used in~\cite{brakerski2021cryptographic}; the main difference is that we are using the reduction on the level of a single round, in contrast to the way it is used in~\cite[Section~8]{brakerski2021cryptographic} when dealing with the full multi-round protocol. Using the reduction in the single round protocol instead of the full protocol helps disentangling the various challenges that arise in the analysis of the entropy.

    \begin{definition}[Simplified device{~\cite[Definition 6.4]{brakerski2021cryptographic}}]\label{def:simplified_device}
        A simplified device is a tuple~$\tilde{D}=\qty(\phi,\tilde{\Pi},\tilde{M})$ that receives~$k\in\KF$ as input and specified by the following:
        \begin{enumerate}
            \item $\phi={\{\phi_{y}\}}_{y\in\mathcal{Y}}\subseteq\text{Pos}\qty(\mathcal{H}_D)$ is a family of positive
            semidefinite operators on an arbitrary space~$\mathcal{H}_D$ such that~$\sum_{y}\text{Tr}(\tilde{\phi}_y)\leq 1$;
            \item $\tilde{\Pi}$ and~$\tilde{M}$ are defined as the sets~$\qty{\tilde{\Pi}_{y}}_{y},\qty{\tilde{M}_{y}}_{y}$ respectively such that for each~$y\in\mathcal{Y}$, the operators,~$\tilde{M}_{y}=\qty{\tilde{M}_y^0,\tilde{M}_y^1=\mathbbm{1}-\tilde{M}_y^0}$ and $\tilde{\Pi}_{y} = \qty{\tilde{\Pi}_y^0,\tilde{\Pi}_y^1,\tilde{\Pi}_y^2=\mathbbm{1}-\tilde{\Pi}_y^0 - \tilde{\Pi}_y^1}$, are projective measurements on~$\mathcal{H}_D$.
        \end{enumerate}
    \end{definition}
    
    \begin{definition}[Simplified device construction~\cite{brakerski2021cryptographic}]\label{def:simplified_device_construction}
        Given a general device as in Definition~\ref{def:general_device} $D=(\phi,\Pi,M)$, we construct a simplified device~$\tilde{D}=(\phi,\tilde{\Pi},\tilde{M})$ in the following manner:
        \begin{itemize}
            \item The device~$\tilde{D}$ measures~$y\in\mathcal{Y}$ like the general device~$D$ would.
            \item The measurement~$\tilde{\Pi}=\{\tilde{\Pi}_{y}^{0},\tilde{\Pi}_{y}^{1},\tilde{\Pi}_{y}^{2}\}$ is defined as follows.
            \begin{itemize}
                \item Perform the measurement~$\{\Pi_{y}^{(b,x)}\}_{b\in\qty{0,1},x\in\mathcal{X}}$ for an outcome~$(b,x)$.
                \item If~$\mathrm{Chk}_k (b,x,y)=1$, the constructed device returns~$b$ corresponding to the projection~$\tilde{\Pi}_{y}^{b}\in\qty{\tilde{\Pi}_{y}^{0}, \tilde{\Pi}_{y}^{1}}$.
                \item If~$\mathrm{Chk}_k (b,x,y)=0$, the constructed device returns 2 corresponding to the projection~$\tilde{\Pi}_{y}^{2}$.
            \end{itemize}
            \item The measurement~$\tilde{M}=\{\tilde{M}^0_y,\tilde{M}^1_y\}$ is defined as follows.
            \begin{align*}
                \tilde{M}^0_y = \sum_{(u,d)\in V_{y,1}} M_{y}^{(u,d)}
                ~~~,~~~&
                \tilde{M}^1_y = \mathbbm{1} - \tilde{M}^0_y
                \; ,
            \end{align*}
            where~$V_{y,1}$ is valid answers for the equation test.
            Meaning the outcome~$\tilde{M}=0$ corresponds to a valid response~$(u,d)$ in the equation test.
        \end{itemize}
    \end{definition}

    The above construction of a simplified device maintains important properties of the general device. 
    Firstly, the simplified device fulfils the same cryptographic assumption as the general one. This is stated in the following corollary.
    
    \begin{corollary}\label{cor:comp_assump_simplified}
        Given a general efficient device $D=(\phi,\Pi,M)$, a simplified device $\tilde{D}=(\phi,\tilde{\Pi},\tilde{M})$ constructed according to Definition~\ref{def:simplified_device_construction} is also efficient. Hence, the cryptographic assumption described in Lemma~\ref{lemma:computational_indistinguishability} holds also for $\tilde{D}$ as well. 
    \end{corollary}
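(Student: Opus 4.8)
The plan is to prove the two assertions in sequence: first that $\tilde{D}$ satisfies Definition~\ref{def:efficient_device}, and then that Lemma~\ref{lemma:computational_indistinguishability} applies verbatim to it. The second assertion will follow almost immediately once efficiency is in hand, because the states appearing in Lemma~\ref{lemma:computational_indistinguishability} are built precisely from the objects defining $\tilde{D}$.

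For efficiency, I would first note that $\phi$ is literally the same state used by $D$, so condition~1 of Definition~\ref{def:efficient_device} is inherited with nothing to check. It remains to verify that $\tilde{\Pi}$ and $\tilde{M}$ are implementable by polynomial-size circuits. For $\tilde{\Pi}$, I read off the construction in Definition~\ref{def:simplified_device_construction}: one applies $\Pi$ (efficient by hypothesis) to obtain an outcome $(b,x)$, evaluates the deterministic algorithm $\text{Chk}_k(b,x,y)$, and relabels the outcome as $b$ or $2$ accordingly. Since $\text{Chk}_k$ is a PPT algorithm by the NTCF definition, $\tilde{\Pi}$ is an efficient classical post-processing of an efficient measurement, hence efficient. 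For $\tilde{M}$, the projectors $\tilde{M}^0_y=\sum_{(u,d)\in V_{y,1}} M_y^{(u,d)}$ and $\tilde{M}^1_y=\mathbbm{1}-\tilde{M}^0_y$ arise from $M$ by coarse-graining its outcomes; the underlying quantum circuit is exactly that of $M$, which is efficient, so no additional quantum resources are required.

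For the cryptographic assumption, the key observation is that $\tilde{M}^0_y$ is exactly the projector onto valid equation-test answers that already appears inside the states $\sigma_0$ and $\sigma_1$ of Lemma~\ref{lemma:computational_indistinguishability}, while $\tilde{\Pi}$ differs from $\Pi$ only through the efficient $\text{Chk}_k$ relabeling. Thus the states one would associate with $\tilde{D}$ are efficient post-processings of $\sigma_0,\sigma_1$. Since $\tilde{D}$ is efficient, the QPT adversary built from it is still QPT, so Lemma~\ref{lemma:computational_indistinguishability} applies; and because computational indistinguishability is preserved under efficient post-processing, any QPT distinguisher for the $\tilde{D}$-states would yield one for $\sigma_0$ versus $\sigma_1$, contradicting the lemma. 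Hence the same indistinguishability holds for $\tilde{D}$.

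The step I expect to require the most care is the treatment of $\tilde{M}$, whose defining coarse-graining is through $V_{y,1}$, a set whose membership is trapdoor-dependent and therefore not decidable by the device itself. The point I would make explicit is that this trapdoor dependence is harmless: it is already baked into the statement of Lemma~\ref{lemma:computational_indistinguishability}, whose states $\sigma_0,\sigma_1$ are themselves defined using $V_{y,1}$, and the efficiency we actually need is only that the quantum measurements $\Pi,M$ underlying $\tilde{\Pi},\tilde{M}$ are efficient, which is inherited from $D$. No cryptographic input beyond Lemma~\ref{lemma:computational_indistinguishability} is therefore needed.
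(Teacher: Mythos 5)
Your proposal is correct and matches the paper's own treatment: the paper states this corollary \emph{without any explicit proof}, regarding it as immediate from the construction in Definition~3.3 for exactly the reasons you give --- the state $\phi$ is unchanged, $\tilde{\Pi}$ is $\Pi$ followed by the key-only (trapdoor-free) efficient algorithm $\text{Chk}_k$, and $\tilde{M}^0_y = \sum_{(u,d)\in V_{y,1}} M_y^{(u,d)}$ is precisely the projector already appearing inside the states $\sigma_0,\sigma_1$ of Lemma~\ref{lemma:computational_indistinguishability}. Your closing point --- that the trapdoor-dependence of $V_{y,1}$ is harmless because the coarse-graining lives on the analyst's/verifier's side of the indistinguishability statement rather than in any operation the device must perform, so the efficiency actually required is only that of $\Pi$, $M$, and $\phi$ --- is exactly the subtlety the paper glosses over, and is the right way to make the claim rigorous.
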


    Secondly, the entropy produced by the simplified device in the simplified single-round protocol, Protocol~\ref{protocol:simplified_single_round}, is identical to that produced by the general device in single round protocol, Protocol~\ref{protocol:single_round}.
    A general device executing Protocol~\ref{protocol:single_round} defines a probability distribution of~$\hat{\pi}$ over~$\qty{0,1,2}$.
    Using the same general device to construct a simplified one, via Definition~\ref{def:simplified_device_construction}, leads to the same distribution for~$\tilde{\Pi}$ when executing Protocol~\ref{protocol:simplified_single_round}.
    This results in the following corollary.
    
    \begin{corollary}\label{cor:entropy_reduction}
        Given a general efficient device $D=(\phi,\Pi,M)$ and a simplified device $\tilde{D}=(\phi,\tilde{\Pi},\tilde{M})$ constructed according to Definition~\ref{def:simplified_device_construction}, we have for all $k$,
        \begin{equation*}
            H(\hat{\pi}|EY)^{\text{General}} =   H(\tilde{\Pi}|EY)^{\text{Simplified}} \;,
        \end{equation*}
        where $H(\hat{\pi}|EY)^{\text{General}}$ is the entropy produced by Protocol~\ref{protocol:single_round} using the general device $D$ and $H(\tilde{\Pi}|EY)^{\text{Simplified}}$ is the entropy produced in Protocol~\ref{protocol:simplified_single_round} using the simplified device $\tilde{D}$.
        Both entropies are evaluated on the purification of the state $\phi$.
    \end{corollary}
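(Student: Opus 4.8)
The plan is to show that the two protocols produce \emph{identical} classical--quantum states on the joint system consisting of the output register, the register $Y$, and the adversary's purifying register $E$; since the conditional von Neumann entropy $H(\,\cdot\,|EY)$ depends only on this state, equality of the entropies is then immediate. Fix the key $k$ and let $\ket{\phi}_{DYE}$ be the purification of $\phi$ on which both entropies are evaluated, with $E$ the purifying (adversarial) system. First I would note that in both protocols the register $Y$ is obtained first and in exactly the same way -- Definition~\ref{def:simplified_device_construction} stipulates that $\tilde{D}$ measures $y$ as $D$ does -- so it suffices to condition on an outcome $Y=y$, where the (sub-normalized) state on $\mathcal{H}_D\otimes\mathcal{H}_E$ is $\phi_{yE}$, and compare the residual measurement in the $T=0$ branch, which is the only branch producing $\hat{\pi}$.

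For the general device in Protocol~\ref{protocol:single_round} the verifier measures $\{\Pi_y^{(b,x)}\}$ and relabels the outcome by $\hat{\pi}(b,x,y)=(b-2)\cdot\mathrm{Chk}_k(b,x,y)+2$, so the post-measurement state is
\[
\rho^{\text{Gen}}_{\hat{\pi}E\,|\,y}=\sum_{p\in\{0,1,2\}}\dyad{p}\otimes\sum_{(b,x):\,\hat{\pi}(b,x,y)=p}\Tr_D\!\big[(\Pi_y^{(b,x)}\otimes\mathbbm{1}_E)\,\phi_{yE}\,(\Pi_y^{(b,x)}\otimes\mathbbm{1}_E)\big],
\]
whereas the simplified device of Protocol~\ref{protocol:simplified_single_round} gives
\[
\rho^{\text{Simp}}_{\hat{\pi}E\,|\,y}=\sum_{p\in\{0,1,2\}}\dyad{p}\otimes\Tr_D\!\big[(\tilde{\Pi}_y^{p}\otimes\mathbbm{1}_E)\,\phi_{yE}\,(\tilde{\Pi}_y^{p}\otimes\mathbbm{1}_E)\big].
\]
The key observation is that, by Definition~\ref{def:simplified_device_construction}, $\tilde{\Pi}_y^{p}=\sum_{(b,x):\,\hat{\pi}(b,x,y)=p}\Pi_y^{(b,x)}$ is precisely the coarse-graining of $\{\Pi_y^{(b,x)}\}$ along the fibres of the relabelling map, and is itself a projection because the $\Pi_y^{(b,x)}$ are mutually orthogonal. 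The one nontrivial ingredient I would then invoke is the elementary identity $\Tr_D[\Pi\,\phi_{yE}\,\Pi]=\Tr_D[\Pi\,\phi_{yE}]$, valid as an operator equation on $E$ for \emph{any} projection $\Pi$ on $\mathcal{H}_D$ by cyclicity of the partial trace together with $\Pi^2=\Pi$. Applying it to each $\Pi_y^{(b,x)}$ and summing over a fibre yields $\sum_{(b,x):\hat{\pi}=p}\Tr_D[\Pi_y^{(b,x)}\phi_{yE}]=\Tr_D[\tilde{\Pi}_y^{p}\phi_{yE}]$, while applying it directly to $\tilde{\Pi}_y^{p}$ gives the same expression; hence the $E$-blocks agree for every $p$, and summing over $y$ shows $\rho^{\text{Gen}}=\rho^{\text{Simp}}$, so the entropies coincide.

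The only point requiring care -- and the main obstacle, such as it is -- is the passage from the fine-grained to the coarse-grained measurement: performing $\{\Pi_y^{(b,x)}\}$ and then relabelling does \emph{not} in general leave the device register $D$ in the same state as directly measuring $\{\tilde{\Pi}_y^{p}\}$, since the former destroys coherences between distinct $(b,x)$ lying in a common fibre. What rescues the argument is that we discard $D$ entirely and keep only the classical label together with $E$; the identity $\Tr_D[\Pi\,\phi\,\Pi]=\Tr_D[\Pi\,\phi]$ shows precisely that these lost coherences do not affect the reduced state on $E$. I would therefore isolate this identity as the single substantive step and explicitly flag that orthogonality of the $\Pi_y^{(b,x)}$ is what guarantees $\tilde{\Pi}_y^{p}$ is a genuine projection, so that the same identity applies on the simplified side as well.
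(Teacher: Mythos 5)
Your proof is correct, and it rests on the same underlying idea as the paper: by Definition~\ref{def:simplified_device_construction}, the simplified measurement $\tilde{\Pi}_y$ is precisely the coarse-graining of $\{\Pi_y^{(b,x)}\}$ along the fibres of the relabelling $(b,x)\mapsto\hat{\pi}=(b-2)\cdot\mathrm{Chk}_k(b,x,y)+2$, so both protocols generate the same data. The difference is one of rigor rather than of route. The paper justifies the corollary only with the informal observation that the general device defines a distribution of $\hat{\pi}$ over $\{0,1,2\}$ and that the constructed simplified device yields the same distribution for $\tilde{\Pi}$; no formal proof is given. Strictly speaking, equality of the outcome \emph{distributions} is not enough for the stated claim, because the entropy is conditioned on the quantum register $E$: one needs equality of the joint classical--quantum states on $\hat{\Pi}YE$. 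Your argument supplies exactly this missing step, via the identity $\Tr_D\!\left[(\Pi\otimes\mathbbm{1}_E)\,\phi_{yE}\,(\Pi\otimes\mathbbm{1}_E)\right]=\Tr_D\!\left[(\Pi\otimes\mathbbm{1}_E)\,\phi_{yE}\right]$ for projections on $\mathcal{H}_D$, and you correctly isolate the only subtle point: the fine-grained measurement destroys coherences between distinct $(b,x)$ in a common fibre, but these coherences are invisible once $D$ is traced out, so the reduced states on the classical label together with $E$ coincide. In short, your write-up is a careful, fully rigorous version of the argument the paper merely sketches, and it is the version one would actually want to cite when the side information is quantum.
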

    The above corollary tells us that we can reduce the analysis of the entropy created by the general device to that of the simplified one, hence justifying its construction and the following sections.

    \subsubsection{Reduction to qubits}\label{sec:reduction_to_qubits}
    In order to provide a clear understanding of the quantum uncertainty that arises from two measurements, it is beneficial to examine the square overlap between those measurements (Definition~\ref{def:square_overlap}).
    The Bloch sphere representation, which pertains to Hilbert spaces of two dimensions, offers a lucid illustration of this concept.
    
    Throughout this section, it is demonstrated that the devices under investigation, under specific conditions, can be expressed as a convex combination of devices, each operating on a single qubit.
    Working in a qubit subspace then allows one to make definitive statements regarding the entropy of the measurement outcomes produced by the device.
    We remark that this is in complete analogy with the proof techniques used when studying DI protocols in the non-local setting, in which one reduces the analysis to that of two single qubit devices~\cite[Lemma~1]{Pironio_2009}.
    
        \begin{lemma}\label{lemma:simplified_qubit_decomposition}
            Let~$\tilde{D}=(\phi,\tilde{\Pi},\tilde{M})$ be a simplified device (Definition~\ref{def:simplified_device}), acting within a Hilbert space~$\mathcal{H}$ of a countable dimension, with the additional assumption that~$\tilde{\Pi}$ consists of only 2 outcomes and let~$\Gamma$ be a Hermitian projection that commutes with both~$\tilde{\Pi}$ and~$\tilde{M}$.
            Given an operator~$F=f(\tilde{M},\Gamma)$ constructed from some non-commutative polynomial of 2 variables~$f$, let~$S_{\tilde{D}} = \langle{f(\tilde{M},\Gamma)}_\phi \rangle$ be the expectation value of~$F$.
            Then, there exists a set of Hermitian projections~$\qty{B_{j}}_j$, acting within the same Hilbert space~$\mathcal{H}$, satisfying the following conditions
            \begin{equation}\label{eq:projector_conditions}
                \begin{array}{lllll}
                    1)~  \forall j, ~ \operatorname{Rank}\qty(B_{j}) \leq 2 \\ \\
                    2)~  \sum_{j}B_{j}=\mathbbm{1} \\ \\
                    3)~  \forall j, ~ \qty[\tilde{\Pi},B_{j}] = \qty[\tilde{M}, B_{j}] = \qty[\Gamma,B_{j}] = 0                
                \end{array}
            \end{equation}
            such that 
            \begin{equation*}
                S_{\tilde{D}} = \sum_{j} \Pr\qty(j) S_{{\tilde{D}}_{j}} \; ,
            \end{equation*}
            where~$\Pr(j) = \Tr\qty(B_{j} \phi)$ and~$S_{{\tilde{D}}_{j}}$ is the expectation value of~$F$ given the state~$\frac{B_{j}\phi B_{j}}{\Tr\qty[B_{j}\phi B_{j}]}$, corresponding to the simplified device~${\tilde{D}}_{j} = \qty(\frac{B_{j}\phi B_{j}}{\Tr\qty[B_{j}\phi B_{j}]},\tilde{\Pi},\tilde{M})$.
            That is, $ S_{{\tilde{D}}_j}=\langle f(\tilde{M},\Gamma)\rangle_{B_{j}\phi B_{j}/\Tr[B_{j}\phi B_{j}]}$.
        \end{lemma}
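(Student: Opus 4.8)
The plan is to produce the projections $\{B_j\}$ by a single application of the commuting-operator version of Jordan's lemma and then to verify the convex-decomposition identity by a short algebraic manipulation that exploits the fact that the $B_j$ commute with every operator appearing in $F$.

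First I would observe that, since $\tilde{\Pi}$ is assumed to have only two outcomes, it is fully described by the single Hermitian projection $\tilde{\Pi}^0$, and likewise $\tilde{M}$ by $\tilde{M}^0$. Together with $\Gamma$ these are three Hermitian projections, and by hypothesis $[\Gamma,\tilde{\Pi}^0]=[\Gamma,\tilde{M}^0]=0$. This is exactly the input required by Lemma~\ref{lemma:jordans_projections} (with $K=\Gamma$): it yields a decomposition $\mathcal{H}=\oplus_j \mathcal{H}_j$ into mutually orthogonal subspaces of dimension at most $2$ on which $\tilde{\Pi}^0$, $\tilde{M}^0$ and $\Gamma$ are simultaneously block diagonal. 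Defining $B_j$ as the orthogonal projection onto $\mathcal{H}_j$, conditions~(1) and~(2) of Equation~\eqref{eq:projector_conditions} are immediate, since $\operatorname{Rank}(B_j)=\dim\mathcal{H}_j\leq 2$ and the blocks tile the whole space; condition~(3) follows because block-diagonality of an operator with respect to $\{\mathcal{H}_j\}$ is equivalent to its commuting with each $B_j$.

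Next I would note that the decomposition identity only requires the $B_j$ to commute with $F$, which is automatic. Indeed $F=f(\tilde{M},\Gamma)$ is a (non-commutative) polynomial in the generators $\tilde{M}$ and $\Gamma$, each $B_j$ commutes with both generators, and commuting with each generator forces commuting with every monomial and hence with $F$; the non-commutativity of $f$ plays no role here. I would then use $\mathbbm{1}=\sum_j B_j$, $B_j^2=B_j$, $B_j B_{j'}=0$ for $j\neq j'$, and $[B_j,F]=0$ to obtain $F=\sum_j B_j F B_j$, so that $S_{\tilde{D}}=\Tr(F\phi)=\sum_j \Tr(B_j F B_j\phi)=\sum_j \Tr(F B_j\phi B_j)$, where the last equality is cyclicity of the trace.

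Finally I would match each summand to $\Pr(j)\,S_{\tilde{D}_j}$. By definition $S_{\tilde{D}_j}=\Tr(F B_j\phi B_j)/\Tr(B_j\phi B_j)$ and $\Pr(j)=\Tr(B_j\phi)=\Tr(B_j\phi B_j)$, the last step again using $B_j^2=B_j$ and cyclicity, so the normalisation cancels and $\Pr(j)\,S_{\tilde{D}_j}=\Tr(F B_j\phi B_j)$. Summing over $j$ then gives $\sum_j \Pr(j)\,S_{\tilde{D}_j}=S_{\tilde{D}}$, as required. I do not anticipate a genuine obstacle: the only point needing care is the appeal to the three-operator Jordan decomposition, where one must confirm that Lemma~\ref{lemma:jordans_projections} block-diagonalises all three projections \emph{simultaneously} rather than merely a pair, since it is this simultaneity that delivers commutation of the $B_j$ with $\tilde{\Pi}$, $\tilde{M}$ and $\Gamma$ at once, and hence both the qubit structure of conditions~(1)--(3) and the commutation $[B_j,F]=0$ that drives the computation.
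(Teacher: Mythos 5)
Your proposal is correct and follows essentially the same route as the paper: invoke Lemma~\ref{lemma:jordans_projections} (with $K=\Gamma$) to obtain simultaneous $2\times 2$ block-diagonalisation of $\tilde{\Pi}$, $\tilde{M}$ and $\Gamma$, take $B_j$ to be the block projections, and then use the fact that $F$ commutes with each $B_j$ to split $\Tr(F\phi)$ into $\sum_j \Tr\qty(F\, B_j\phi B_j)$. The paper's proof is just a terser version of yours, leaving implicit the steps you spell out (commutation of $F$ with the $B_j$ via the generators, and the cyclicity/idempotence manipulations).
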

        \begin{proof}
            As an immediate result of Lemma~\ref{lemma:jordans_projections}, there exists a basis in which~$\tilde{\Pi},\tilde{M}$ and~$\Gamma$ are~$2\times 2$ block diagonal.
            In this basis, we take the projection on every block~$j$ as~$B_j$.
            This satisfies the conditions in Conditions~\eqref{eq:projector_conditions}.
            Furthermore,
            \[
                S_{\tilde{D}} =  \tr\qty(\sum_j F B_j \phi B_j)  =  \sum_j \Pr(j) \tr\qty( F  \frac{B_j \phi B_j}{\tr(B_j \phi B_j)}) =  \sum_j \Pr(j) S_{{\tilde{D}}_j}\;.
            \] \qedhere
        \end{proof}

        Note that the simplified device~$\tilde{D}_{j}=\qty(\frac{B_{j}\phi B_{j}}{\Tr\qty[B_{j}\phi B_{j}]},\tilde{\Pi},\tilde{M})$ yields the same expectation values to those of the simplified device~$\tilde{D}_{j}=\qty(\frac{B_{j}\phi B_{j}}{\Tr\qty[B_{j}\phi B_{j}]},B_{j} \tilde{\Pi} B_{j},B_{j}\tilde{M}B_{j})$.
        The resulting operation is therefore, effectively, done in a space of a single qubit. In addition, due to the symmetry of~$\tilde{\Pi}$ and~$\tilde{M}$ in this proof, the lemma also holds for an observable constructed from~$\tilde{\Pi}$ and~$\Gamma$ instead of~$\tilde{M}$ and~$\Gamma$. I.e.~$S_{\tilde{D}} = \expval{f(\tilde{\Pi},\Gamma)}_{\phi}$.

    The simplified protocol permits the use of the uncertainty principle, appearing in Lemma~\ref{lemma:uncertainty_relations}, in a vivid way since it has a geometrical interpretation on the Bloch sphere; recall Figure~\ref{fig:uncertainty_bloch}.
    Under the assumption that~$\tilde{\Pi}$ has two outcomes (this has yet to be justified)  we can represent~$\tilde{\Pi}$ and~$\tilde{M}$ as two Bloch vectors with some angle between them that corresponds to their square overlap -- The smaller the square overlap, the closer the angle is to~$\pi/2$.
    In the ideal case, the square overlap is~$1/2$ which means that in some basis the two measurements are the standard and the Hadamard measurements.
    If one is able to confirm that~$\mathrm{Pr}(\tilde{M}=0)=1$, the only possible distribution on the outcomes of~$\tilde{\Pi}$ is a uniform one, which has the maximal entropy.
    
    $\tilde{\Pi}$, however, has 3 outcomes and not 2, rendering the reduction to qubits unjustified.
    We can nonetheless argue that the state being used in the protocol is very close to some other state which produces only the first 2 outcomes.
    The entropy of both states can then be related using Equation~\eqref{eq:entropy_continuity}.
    The ideas described here, are combined and explained thoroughly in the main proof shown in the following subsection.

\subsubsection{Conditional entropy bound}

    The main proof of this subsection is done with respect to a simplified device constructed from a standard one using Definition~\ref{def:simplified_device_construction}.
    
    Before proceeding with the proof, we define the winning probability in both challenges.

    \begin{definition}\label{def:winning_rates}
        Given a simplified device (with implicit key~$k$) $\tilde{D}=\qty(\phi, \tilde{\Pi}, \tilde{M})$, for a given~$y\in\mathcal{Y}$ we define the~${\tilde{\Pi}}_y$ and~${\tilde{M}}_y$ winning probabilities, respectively, as
        \begin{equation*}
             \omega_p^y \coloneqq \frac{\Tr((\mathbbm{1} - {\tilde{\Pi}}_{y}^{2})\phi_y)}{\Tr(\phi_y)} \qquad \;; \qquad 
             \omega_m^y \coloneqq \frac{\Tr({\tilde{M}}_{y}^{0}\phi_y)}{\Tr(\phi_y)} \; .
        \end{equation*}
        Likewise, the winning probabilities of~$\Pi$ and~$M$ as 
        \begin{equation*}
             \omega_p \coloneqq \sum_{y} \Tr((\mathbbm{1} - {\tilde{\Pi}}_{y}^{2})\phi_y) \qquad \;; \qquad 
             \omega_m \coloneqq \sum_{y} \Tr({\tilde{M}}_{y}^{0}\phi_y) \; .
        \end{equation*}
        Recall that~$\phi_y$ are sub-normalized.
    \end{definition}

    We are now ready to prove our main technical lemma. 

    \begin{lemma}\label{lemma:entropy_bound_proof}
        Let~$\tilde{D}=(\phi, \tilde{\Pi}, \tilde{M})$ be a simplified device as in Definition~\ref{def:simplified_device}, constructed from an efficient general device~$D=\left(\phi, \Pi, M\right)$ in the manner depicted in Definition~\ref{def:simplified_device_construction}. Let~$\Phi_y\in\mathcal{H}_{D}\otimes\mathcal{H}_{E}$ be a purification of~${\phi_y}$ (respecting the sub-normalization of~$\phi_y$) and $\Phi=\{\Phi_y\}_{y\in\mathcal{Y}}$. 
        For all~$c\in(1/2,1]$ and some negligible function~$\xi(\lambda)$, the following inequality holds:
        \begin{align}\label{eq:simplified_entropy_lower_bound}
            H(\tilde{\Pi}|E,Y)_{\Phi} \geq
                & \max\qty{0, 1-\sqrt{2}A(c)\sqrt[4]{1-\omega_p}-A(c)\sqrt{1-\omega_m}} \\
                & \quad \times \qty(\log_{2}\qty(1/c) - h\qty(\omega_m - 2\sqrt{1-\omega_p}-\sqrt{2}A(c)\sqrt[4]{1-\omega_p}-A(c)\sqrt{1-\omega_m})) \nonumber \\
                & \quad - \sqrt{1-\omega_p} \log_2 (3) - \qty(1+\sqrt{1-\omega_p}) h\qty(\frac{\sqrt{1-\omega_p}}{1+\sqrt{1-\omega_p}}) - \xi(\lambda) \nonumber \;,
        \end{align}
        where~$h(\cdot)$ is the binary entropy function and
        \begin{equation}\label{eq:ac}
            A(c)\coloneqq10/{(2c-1)}^2 \; .
        \end{equation}
    \end{lemma}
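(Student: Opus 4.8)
The engine of the proof is the entropic uncertainty relation (Lemma~\ref{lemma:uncertainty_relations}), which for two observables with square overlap $c$ gives $H(\tilde\Pi|E) \ge \log_2(1/c) - H(\tilde M|V)$. Three obstacles stand between this clean statement and the claimed bound: (i) $\tilde\Pi$ has three outcomes, not two, so the uncertainty relation and the qubit/Bloch-sphere picture do not apply directly; (ii) the global square overlap of $\tilde\Pi$ and $\tilde M$ on the full space $\mathcal H_D$ need not be bounded by $c$; and (iii) we need $H(\tilde M|V)$ to be small \emph{and} the overlap to actually be $\le c$, and only the computational assumption can guarantee the latter. The plan is to dispatch (i) by continuity, (ii) by Jordan's lemma together with the good-subspace projection, and (iii) by the computational indistinguishability lemma, then feed the result into the uncertainty relation.

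First I would reduce $\tilde\Pi$ to a genuine two-outcome measurement. Since $\tilde\Pi$ returns the failure symbol $2$ with probability $1-\omega_p$ (Definition~\ref{def:winning_rates}), measuring $\tilde\Pi^0+\tilde\Pi^1$ and renormalising produces a state $\Phi'$ that, by the gentle-measurement lemma, satisfies $\tfrac12\|\Phi-\Phi'\|_1 \le \sqrt{1-\omega_p}$. Applying the continuity bound (Lemma~\ref{lemma:entropy_continuity}) with outcome-alphabet size $|A|=3$ and $\epsilon=\sqrt{1-\omega_p}$ produces exactly the correction $-\sqrt{1-\omega_p}\log_2 3 - (1+\sqrt{1-\omega_p})h\!\big(\tfrac{\sqrt{1-\omega_p}}{1+\sqrt{1-\omega_p}}\big)$ appearing in \eqref{eq:simplified_entropy_lower_bound}. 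On $\Phi'$ the operator $\bar\Pi\coloneqq\tilde\Pi^0$ is a two-outcome projection, and the same trace-distance bound shifts the $\tilde M$-winning probability by at most $2\sqrt{1-\omega_p}$, which is the source of the $-2\sqrt{1-\omega_p}$ term inside $h(\cdot)$.

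Next I would apply Jordan's lemma (Lemma~\ref{lemma:jordans_projections}) to simultaneously block-diagonalise $\bar\Pi$ and $\tilde M^0$ into blocks of dimension $\le 2$, and let $\Gamma$ be the good-subspace projection onto the blocks whose square overlap $c_j$ satisfies $c_j\le c$; by Lemma~\ref{claim:eigenvalue_lemma} these are precisely the blocks where the uncertainty relation is strong. The good-subspace projection lemma (Lemma~\ref{lemma:good_subspace_projection}), with $A(c)=10/(2c-1)^2$ as in \eqref{eq:ac}, bounds the weight outside $\Gamma$ by $2\mu/(2c-1)^2 + A(c)\sqrt{1-\omega_m}$. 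Because $\Gamma$ commutes with both $\bar\Pi$ and $\tilde M$, recording whether we are inside $\Gamma$ is a classical flag that does not disturb the $\bar\Pi$-statistics; concavity of the conditional entropy then lets me lower bound $H(\bar\Pi|EY)_{\Phi'}$ by $\tr(\Gamma\Phi')$ times the entropy conditioned on being inside $\Gamma$, with the off-$\Gamma$ contribution bounded below by $0$. Inside $\Gamma$ the square overlap is at most $c$, so a single application of the uncertainty relation gives $\log_2(1/c) - H(\tilde M|V)$; and since $\tilde M$ is binary with winning probability at least $\omega_m$ minus the accumulated trace-distance corrections, $H(\tilde M|V)\le h(\cdot)$ with the argument shown in \eqref{eq:simplified_entropy_lower_bound}. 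Collecting the $\Gamma$-weight into the prefactor $\max\{0,1-\cdots\}$ (the outer $\max$ with $0$ being the trivial bound) assembles the claim; the qubit-decomposition lemma (Lemma~\ref{lemma:simplified_qubit_decomposition}) together with Jensen's inequality (Lemma~\ref{lemma:jensen_extension}) justifies passing between the block decomposition and these averaged quantities.

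The main obstacle --- and the only place where information theory alone is insufficient --- is bounding the quantity $\mu$ in the good-subspace projection, i.e.\ showing that the probability of passing the equation test after the preimage measurement is close to $\tfrac12$. This is exactly the content of the adaptive hardcore-bit property, made usable through the computational indistinguishability lemma (Lemma~\ref{lemma:computational_indistinguishability}), which applies because the simplified device is efficient (Corollary~\ref{cor:comp_assump_simplified}); it forces $\mu$ to be negligible on the valid-preimage branches and contributes the $\xi(\lambda)$ term. The delicate point is that the indistinguishability statement only controls the valid-preimage branches, so the invalid-preimage weight $1-\omega_p$ must be separated off by a Cauchy--Schwarz estimate; it is this estimate that injects the $\sqrt2\,A(c)\sqrt[4]{1-\omega_p}$ term, with its characteristic fourth root, into both the prefactor and the argument of $h$. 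I expect verifying that this cryptographic step yields a genuinely \emph{negligible} (rather than merely small) contribution, uniformly in the device, to be the most demanding part, since it is where the post-quantum assumption with a polynomial-size advice state is truly invoked.
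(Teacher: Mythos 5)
Your proposal is correct and follows essentially the same route as the paper's own proof: the paper also projects onto the $\tilde{\Pi}^0+\tilde{\Pi}^1$ subspace (its state $\Psi$ is your $\Phi'$), applies Jordan's lemma and the good-subspace projection with $\mu$ made negligible by Lemma~\ref{lemma:computational_indistinguishability}, uses gentle measurement to convert $\Psi$-quantities into $\omega_p,\omega_m$ (producing the fourth-root and $-2\sqrt{1-\omega_p}$ terms), applies the uncertainty relation on the good blocks, averages over $y$ via Jensen, and transfers back to $\Phi$ with the continuity bound. The only cosmetic differences are the ordering of the continuity step and your labeling of the $\sqrt[4]{1-\omega_p}$ estimate as Cauchy--Schwarz where the paper uses the gentle-measurement lemma plus $\sqrt{a+b}\leq\sqrt{a}+\sqrt{b}$.
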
    

    \begin{figure}
        \centering
        \begin{subfigure}[b]{0.45\textwidth}
            \includegraphics[scale=0.65]{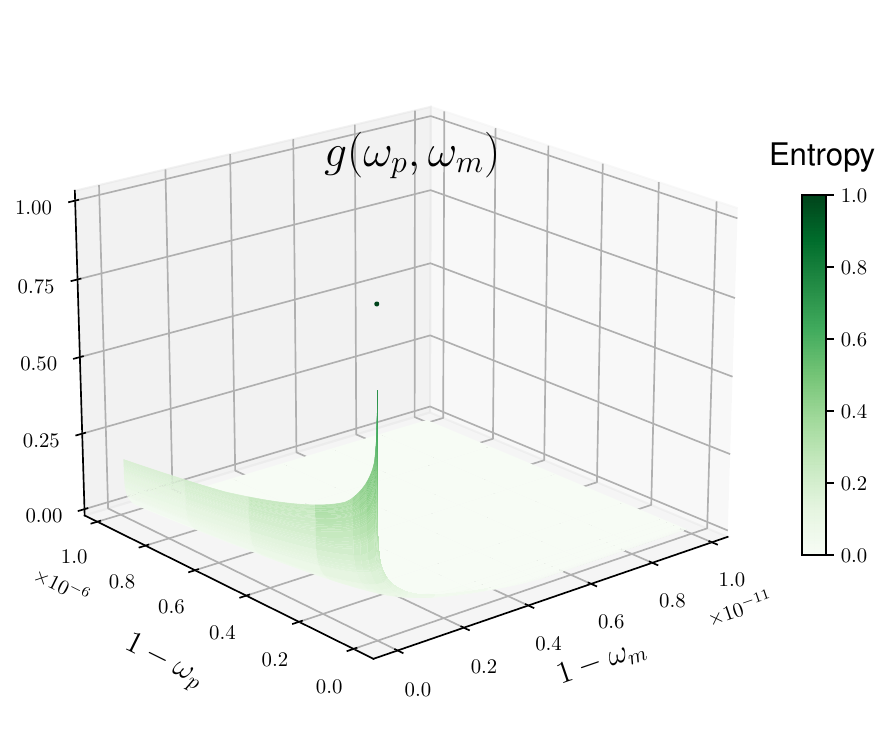}
            \caption
            {
                Plot of the function given in Equation~\eqref{eq:two_variable_bound}.
            }
            \label{sub_fig:von_neumann_2}
        \end{subfigure}
        \hfill
        \begin{subfigure}[b]{0.45\textwidth}
            \includegraphics[scale=0.55]{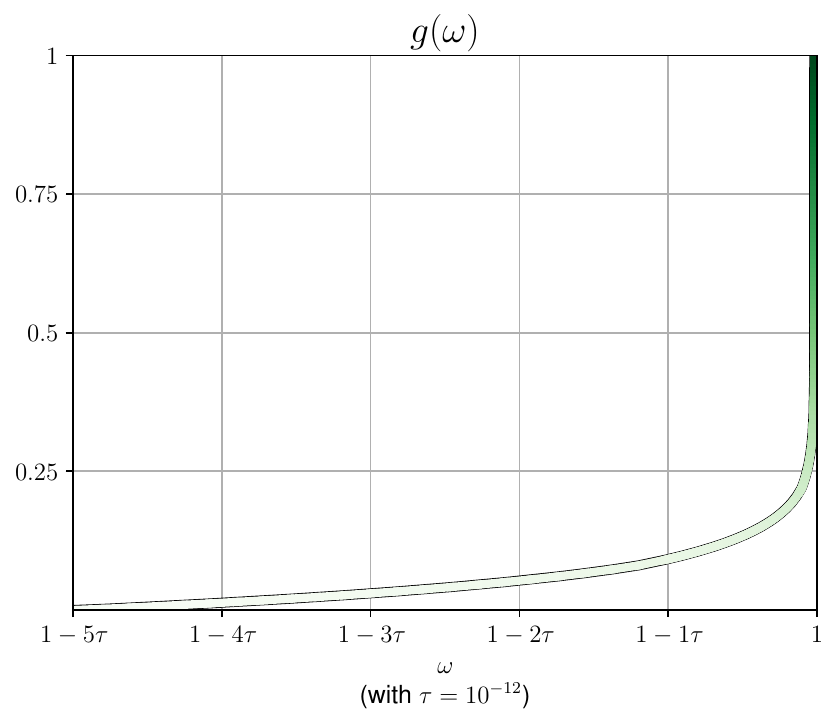}
            \caption
            {
                Plot of the function appearing in Equation~\eqref{eq:single_variable_bound} with~$\beta = 0.045$.
                The width of the curve is added for clarity.
            }
            \label{sub_fig:von_neumann}
        \end{subfigure}
        \caption
        {
            The entropy as a function of the winning probabilities, as given by Equations~\eqref{eq:two_variable_bound} and~\eqref{eq:single_variable_bound}. In the left plot, the differential at~$(\omega_p = 1, \omega_m = 1)$ diverges and therefore one does not see the entropy grows up to its optimal point (the dark green point); this is only a visual effect.
            This entropy and the divergence is seen more clearly on a slice of this graph, appearing in right plot.
        }
        \label{fig:pi_von_neumann_both}
    \end{figure}
    
    \begin{proof}
        For every $y\in\mathcal{Y}$ we introduce the state
        \begin{equation*}
            {\Psi}_{y} \coloneqq
                \frac{(\tilde{\Pi}_{y}^{0}+\tilde{\Pi}_{y}^{1}){\Phi}_{y}(\tilde{\Pi}_{y}^{0}+\tilde{\Pi}_{y}^{1})}
                {\Tr((\tilde{\Pi}_{y}^{0}+\tilde{\Pi}_{y}^{1}){\Phi}_{y})}
        \end{equation*}
        in order to reduce the problem to a convex combination of 2-dimensional ones, as described in Section~\ref{sec:reduction_to_qubits}. We provide a lower bound for the entropy~$H(\tilde{\Pi}|E,Y)_{{\Psi_y}}$ and by continuity of entropies, a lower bound for~$H(\tilde{\Pi}|E,Y)_{{\Phi_y}}$ is then derived.
        The proof proceeds in steps.
        \begin{enumerate}
            \item Let
                \begin{equation*}
                    U_{y}^{0} = \tilde{\Pi}_{y}^{0}-(\tilde{\Pi}_{y}^{1}+\tilde{\Pi}_{y}^{2}) \qquad  ;\qquad U_{y}^{1}=\tilde{M}_{y}^{0}-\tilde{M}_{y}^{1} \;.
                \end{equation*}
                Using Jordan's lemma, Lemma~\ref{lemma:jordans_extension}, there exists an orthonormal basis where the operators are $2\times 2$-block diagonal. Let~$\Gamma_y$ be the Hermitian projection on blocks where the square overlap of~$U_{y}^{0}$ and~$U_{y}^{1}$ is bound by~$c$ (good blocks).
                Note that~$\Gamma_y$ also commutes with both unitaries.
            \item Using Lemma~\ref{lemma:good_subspace_projection} we bound the probability to be in subspace where the square overlap of~$\Pi$ and~$M$ is larger than~$c$ (bad blocks).
                \begin{equation*}
                    \text{Tr}((\mathbbm{1} - \Gamma_y){\Psi}_{y}) \leq \frac{2\mu+10\sqrt{\text{Tr}(\tilde{M}^{1}_{y}{\Psi}_{y})}}{(2c-1)^2}
                    = A(c)\qty(\frac{1}{5}\mu+\sqrt{\text{Tr}(\tilde{M}^{1}_{y}{\Psi}_{y})})
                \end{equation*}
                where~$A(c)$ is given by Equation~\eqref{eq:ac} and
                \begin{align*}
                    \mu & \coloneqq \qty|\frac{1}{2} - \Tr(\tilde{M}_{y}^{0} \tilde{\Pi}_{y}^{0}{\Psi}_{y}\tilde{\Pi}_{y}^{0})-\Tr(\tilde{M}_{y}^{0} \tilde{\Pi}_{y}^{1}{\Psi}_{y}\tilde{\Pi}_{y}^{1})| \\
                    & = \frac{1}{2}
                    \qty
                    |
                        \sum_{b\in\qty{0,1}} \tr\qty(\tilde{\Pi}_{y}^{b} \Psi_{y} \tilde{\Pi}_{y}^{b}) - 
                        2\sum_{b\in\qty{0,1}} \tr\qty(\tilde{M}_{y}^{0}\tilde{\Pi}_{y}^{b} \Psi_{y} \tilde{\Pi}_{y}^{b})
                    | \\
                    & = \frac{1}{2}
                    \qty
                    |
                        \sum_{b\in\qty{0,1}} \tr\qty(\tilde{M}_{y}^{0}\tilde{\Pi}_{y}^{b} \Psi_{y} \tilde{\Pi}_{y}^{b}) - 
                        \sum_{b\in\qty{0,1}} \tr\qty((\mathbbm{1}-\tilde{M}_{y}^{0})\tilde{\Pi}_{y}^{b} \Psi_{y} \tilde{\Pi}_{y}^{b})
                    | \; .
                \end{align*}
                Due to Lemma~\ref{lemma:computational_indistinguishability} and Corollary~\ref{cor:comp_assump_simplified},~$\mu$ is negligible in the security parameter~$\lambda$.
                Thus, for some negligible function~$\eta$,
                \begin{equation}\label{eq:k_projection_original}
                    \text{Tr}((\mathbbm{1} - \Gamma_y){\Psi}_{y}) \leq A(c)\sqrt{\text{Tr}(\tilde{M}^{1}_{y}{\Psi}_{y})}+\eta(\lambda) \; .
                \end{equation}
            \item Note that $\Phi$ is the state of the device (and the adversary), not $\Psi$. Hence,  we cannot, a priori, relate~$\text{Tr}(\tilde{M}^{1}_{y}{\Psi}_{y})$ to the winning probabilities of the device.
            We therefore want to translate Equation~\eqref{eq:k_projection_original} to quantities observed in the application of Protocol~\ref{protocol:simplified_single_round} when using the simplified device~$\tilde{D}$.
            That is, we would like to use the values~$\omega_p, \omega_m$ given in Definition~\ref{def:winning_rates} in our equations:
                \begin{align*}
                    \sqrt{\text{Tr}(\tilde{M}^{1}_{y}{\Psi}_{y})} & = \sqrt{\text{Tr}(\tilde{M}^{1}_{y}({\Psi}_{y} - {\Phi}_y))+\text{Tr}(\tilde{M}^{1}_{y}{\Phi}_y)} \\ 
                    & \leq
                        \sqrt{\text{Tr}(|{\Psi}_{y} - {\Phi}_y|)} +
                        \sqrt{\text{Tr}(\tilde{M}^{1}_{y}{\Phi}_y)}
                        \\
                    & \leq
                            \sqrt{2\sqrt{\text{Tr}(\tilde{\Pi}_y^2 {\Phi}_y)}} +
                            \sqrt{\text{Tr}(\tilde{M}^{1}_{y} {\Phi}_y)} 
                            \\
                    & = 
                        \sqrt{2\sqrt{1-\omega_p^y}} +
                        \sqrt{1-\omega_m^y}
                        \;,
                \end{align*}
            where in the last inequality we used the gentle measurement lemma~\cite[Lemma 9.4.1]{wilde2013gentle}.
            The last equation, combined with Equation~\eqref{eq:k_projection_original}, immediately yields
            \begin{equation}\label{eq:good_subspace_bound}
                \text{Tr}(\Gamma_y {\Psi}_{y}) \geq 1 - A(c)\qty(\sqrt{2}\sqrt[4]{1-\omega_p^y} +\sqrt{1-\omega_m^y}) - \eta(\lambda) \; .
            \end{equation}
            \item In a similar manner, for later use, we must find a lower bound on~$\text{Tr}(\tilde{M}_y^0 {\Psi}_{y})$ using quantities that can be observed from the simplified protocol. To that end, we again use the gentle measurement lemma:
                \begin{align*}
                    \text{Tr}(\tilde{M}_y^0 {\Phi}_y) & = \text{Tr}(\tilde{M}_y^0( {\Phi}_{y} - {\Psi}_{y})) + \text{Tr}(\tilde{M}_y^0 {\Psi}_{y}) \\
                    & \leq 2\sqrt{(\mathbbm{1} - {\tilde{\Pi}}_y^2) {\Phi}_y} + \text{Tr}(\tilde{M}_y^0 {\Psi}_{y}) \; .
                \end{align*}
                Using the definitions of~$\omega_m^y,\omega_p^y$,
                \begin{flalign}
                    & \phantom{~\Rightarrow} \omega_m^y  \leq 2\sqrt{1-\omega_p^y} + \text{Tr}(\tilde{M}_y^0 {\Psi}_{y}) \nonumber \\
                    & \label{eq:mw_bound}\Rightarrow \text{Tr}(\tilde{M}_y^0 {\Psi}_{y}) \geq \omega_m^y - 2\sqrt{1-\omega_p^y}
                \end{flalign}
            \item We proceed by providing a bound on the conditional entropy, given that~$\Gamma_y =0$, using the uncertainty principle in Lemma~\ref{lemma:uncertainty_relations}.
            Conditioned on being in a good subspace (which happens with probability~$\Pr(\Gamma_y = 0)$), the square overlap of~$\tilde{\Pi}$ and~$\tilde{M}$ is upper bounded by~$c$. We can then bound the entropy of~$\tilde{\Pi}$ using the entropy of~$\tilde{M}$:
            \begin{align*}
                H(\tilde{\Pi}|E,Y=y,\Gamma=0)_{{\Psi}_{y}} &
                \geq \log_2 (1/c) - H(\tilde{M}|Y=y,\Gamma=0)_{{\Psi}_{y}} \\
                & = \log_2 (1/c) - h\qty(\text{Pr}(\tilde{M}_{y} = 0| \Gamma_y = 0)_{\Psi_y}).
            \end{align*}
            Since we have a lower bound on~$\text{Tr}(\tilde{M}_y^0 {\Psi}_{y})$, we proceed by working in the regime where the binary entropy function is strictly 
            decreasing. To that end, it is henceforth assumed that the argument of the binary entropy function, and all of its subsequent lower bounds, are larger than~$1/2$.  
            By using the inequality
            \begin{equation}\label{eq:barbaric_ineq}
                \text{Pr}(\tilde{M}_{y} = 0| \Gamma_y = 0) \geq \text{Pr}(\tilde{M}_{y} = 0) + \text{Pr}(\Gamma_y = 0) - 1
            \end{equation}
            we obtain
            \begin{equation*}
                - h\qty(\text{Pr}(\tilde{M}_{y} = 0| \Gamma_y = 0)) \geq 
                - h\qty(\text{Pr}(\tilde{M}_{y} = 0) + \text{Pr}(\Gamma_y = 0) - 1) \; .
            \end{equation*}
            Therefore, 
            \begin{equation}\label{eq:good_subspace_entropy}
                H(\tilde{\Pi}|E,Y=y,\Gamma=0)_{{\Psi}_{y}} \geq \log_2 (1/c) - h\qty(\text{Pr}(\tilde{M}_{y} = 0) + \text{Pr}(\Gamma_y = 0) - 1) \; .
            \end{equation}
            \item We now want to bound the value~$H(\tilde{\Pi}| E, Y)_{{\Psi}_{y}}$, i.e., without conditioning on the event~$\Gamma_y=0$.
            We write, 
            \begin{align*}
                H(\tilde{\Pi}| E, Y=y)_{{\Psi}_{y}} & \geq H(\tilde{\Pi}|E,Y=y,\Gamma_y)_{{\Psi}_{y}}\\
                & = \Pr(\Gamma_y=0)H(\tilde{\Pi}|E,Y=y,\Gamma_y=0)_{{\Psi}_{y}} + \Pr(\Gamma_y=1)H(\tilde{\Pi}|E,Y=y,\Gamma_y=1)_{{\Psi}_{y}} \\
                & \geq \Pr(\Gamma_y=0)H(\tilde{\Pi}|E,Y=y,\Gamma_y=0)_{{\Psi}_{y}} \; .
            \end{align*}
            This allows us to use Equation~\eqref{eq:good_subspace_bound} and Equation~\eqref{eq:good_subspace_entropy} to bound each term, respectively, on the right hand side of the last inequality with
            \begin{align*}
                \begin{split}
                    H(\tilde{\Pi}| E,Y=y)_{{\Psi}_{y}}
                        & \geq
                            \qty
                                (
                                    1 - A(c)\qty(\sqrt{2}\sqrt[4]{1-\omega_p^y} +\sqrt{1-\omega_m^y}) - \eta(\lambda)
                                )
                    \\
                        & \phantom{\geq ~}
                            \times \qty
                                (
                                    \log_2 (1/c) - h\qty(\text{Pr}(\tilde{M}_{y} = 0) + \text{Pr}(\Gamma_y = 0) - 1)
                                )
                        \; .
                \end{split}
            \end{align*}
            We use Equation~\eqref{eq:good_subspace_bound} a second time together with Equation~\eqref{eq:mw_bound} to lower bound both probability terms in the argument of the binary entropy function and get
            \begin{align*}
                \begin{split}
                    H(\tilde{\Pi}|E,Y=y)_{{\Psi}_y}
                        & \geq
                            \qty
                                (
                                    1 - A(c)\qty(\sqrt{2}\sqrt[4]{1-\omega_p^y} +\sqrt{1-\omega_m^y}) - \eta(\lambda)
                                )
                            \times \\
                        & \phantom{\leq\leq} \qty(\log_{2}\qty(1/c) - h\qty(\omega_m^y - 2\sqrt{1-\omega_p^y}-\sqrt{2}A(c)\sqrt[4]{1-\omega_p^y}-A(c)\sqrt{1-\omega_m^y}-\eta(\lambda))) \; .
                \end{split}
            \end{align*}
            \item Now, taking the expectation over~$y$ on both sides of the inequality and using Lemma~\ref{lemma:jensen_extension}:
            \begin{align}\label{eq:psi_tilde_bound}
                \begin{split}
                    H(\tilde{\Pi}|E,Y)_{{\Psi}}
                    & \geq \qty(1-\sqrt{2}A(c)\sqrt[4]{1-\omega_p}-A(c)\sqrt{1-\omega_m}-\eta(\lambda)) \times \\
                    & \phantom{\leq ~} \qty(\log_{2}\qty(1/c) - h\qty(\omega_m - 2\sqrt{1-\omega_p}-\sqrt{2}A(c)\sqrt[4]{1-\omega_p}-A(c)\sqrt{1-\omega_m}-\eta(\lambda))) \; .
                \end{split}
            \end{align}
            \item Using Equation~\eqref{eq:binary_entropy_egligible_function_arg}, we can extract~$\eta(\lambda)$ from the argument of the binary entropy function in Equation~\eqref{eq:psi_tilde_bound} such that for some negligible function~$\xi(\lambda)$,
            \begin{align}\label{eq:psi_tilde_bound_negligible_outside}
                \begin{split}
                    H(\tilde{\Pi}|E,Y)_{{\Psi}}
                    & \geq \qty(1-\sqrt{2}A(c)\sqrt[4]{1-\omega_p}-A(c)\sqrt{1-\omega_m}) \times \\
                    & \phantom{\leq ~} \qty(\log_{2}\qty(1/c) - h\qty(\omega_m - 2\sqrt{1-\omega_p}-\sqrt{2}A(c)\sqrt[4]{1-\omega_p}-A(c)\sqrt{1-\omega_m}))-\xi(\lambda) \; .
                \end{split}
            \end{align}
            \item Using the continuity bound in Equation~\eqref{lemma:entropy_continuity} with~$\|{\Psi}-{\Phi}\|_{1} /2 \leq \sqrt{1-\omega_p}$ yields
                \begin{equation}\label{eq:psi_phi_continuity}
                    |H(\tilde{\Pi}|E,Y)_{{\Psi}} - H(\tilde{\Pi}|E,Y)_{{\Phi}}| \leq
                    \sqrt{1-\omega_p} \log_2 (3) + (1+\sqrt{1-\omega_p}) h\qty(\frac{\sqrt{1-\omega_p}}{1+\sqrt{1-\omega_p}}) \; .
                \end{equation}
            \item Combining Equation~\eqref{eq:psi_tilde_bound_negligible_outside} with Equation~\eqref{eq:psi_phi_continuity}, we conclude that the lemma holds. \qedhere
        \end{enumerate}
    \end{proof}    
    
    For the sake of brevity, denote the bound in Equation~\eqref{eq:simplified_entropy_lower_bound} as 
    \begin{equation*}
        H(\tilde{\Pi}|Y,E) \geq g(\omega_p, \omega_m, c) - \xi(\lambda) \; .
    \end{equation*}
    Seeing this inequality holds for all values~$c\in(1/2,1]$, for each winning probability pair~$(\omega_p, \omega_m)$ we can pick an optimal value of~$c$ to maximize the inequality. We do this implicitly and rewrite the bound as
    \begin{equation}\label{eq:two_variable_bound}
        g(\omega_p, \omega_m) \coloneqq \max_{c\in(1/2,1]} g(\omega_p, \omega_m, c)
    \end{equation}
    Doing so yields the graph in Figure~\ref{sub_fig:von_neumann_2}.

    In the protocol, later on, the verifier chooses whether to abort or not, depending on the overall winning probability~$\omega$:
    \begin{align*}
        \begin{split}
            \omega&\coloneqq\Pr(W=1) \\ &
            =\Pr(T=0)\Pr(W=1|T=0) + \Pr(T=1)\Pr(W=1|T=1)\\ &
            =  \underset{1-\beta}{\underbrace{\Pr(T=0)}}\cdot\omega_p +
            \underset{\beta}{\underbrace{\Pr(T=1)}}\cdot\omega_m
        \end{split}
    \end{align*}
    We therefore define for every~$\beta\in(0,1)$, another bound on the entropy, that depends only on $\omega$ (assuming~$\omega\geq1/2$) as
    \begin{equation}\label{eq:single_variable_bound}
        g(\omega; \beta)
        \coloneqq \min_{\omega_p}
        g\qty
        (
            \omega_p, \omega_m = \frac{\omega - (1-\beta)\omega_p}{\beta}
        ) \; .
    \end{equation}
    The optimal~$\beta$ can be found numerically. We plot the bound in Equation~\eqref{eq:single_variable_bound} in Figure~\ref{sub_fig:von_neumann} (Neglecting the negligible element~$\xi(\lambda)$).

\subsection{Entropy accumulation}\label{subsec:entropy_acc_comp}
    Combining Lemma~\ref{lemma:entropy_bound_proof} with Corollary~\ref{cor:entropy_reduction}, we now hold a lower bound for the von Neumann entropy of Protocol~\ref{protocol:simplified_single_round}--
    we can connect any winning probability~$\omega$ to a lower bound on the entropy of the pre-image test.
    This allows us to proceed with the task of entropy accumulation.
    To lower bound the total amount of smooth min-entropy accumulated throughout the entire execution of Protocol~\ref{protocol:simplified_accumulation}, we use the Entropy Accumulation Theorem (EAT), stated as Theorem~\ref{theorem:eat}.

    To use the EAT, we  need to first define the channels corresponding to Protocol~\ref{protocol:simplified_accumulation} followed by a proof that they are in fact EAT channels.
    In the notation of Definition~\ref{def:eat_channels}, we make the following choice of channels:
    \begin{equation*}
        \mathcal{M}_i:R_{i-1} \rightarrow
        R_i ~
        \underset{O_i}{\underbrace{\hat{\Pi}_i \hat{M}_i}} ~
        \underset{S_i}{\underbrace{K_i T_i G_i}} ~
    \end{equation*}
    and set $Q_i=W_i$.
    \begin{lemma}
        The channels~$\qty{\mathcal{M}_i:R_{i-1} \rightarrow R_i \hat{\Pi}_i \hat{M}_i K_i T_i G_i}_{i\in [n]}$ defined by the CPTP map describing the i-th round of Protocol~\ref{protocol:simplified_accumulation} as implemented by the computationally bounded untrusted device~$\tilde{D}$ and the verifier are EAT channels according to Definition~\ref{def:eat_channels}.
    \end{lemma}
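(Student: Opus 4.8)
The plan is to check, in turn, the three requirements of Definition~\ref{def:eat_channels} for the assignment $O_i=\hat{\Pi}_i\hat{M}_i$, $S_i=K_iT_iG_i$ and $Q_i=W_i$, where the map $\mathcal{M}_i$ is the composition of the verifier's sampling of $(K_i,\tau_i)$, $G_i$ and $T_i$, the classical communication with the device, the device's measurements, and the verifier's classical post-processing that produces $\hat{\Pi}_i$, $\hat{M}_i$ and $W_i$. Each of these is a CPTP map, so the composition $\mathcal{M}_i:R_{i-1}\rightarrow R_i\hat{\Pi}_i\hat{M}_iK_iT_iG_i$ is a valid quantum channel. The trapdoor $\tau_i$, the device's reply $Y_i$, and the device's residual quantum state are all carried forward inside the memory register $R_i$.

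The first two requirements I would dispatch by inspection of Protocol~\ref{protocol:simplified_accumulation}. For requirement~1, the output alphabets are finite: $\hat{\Pi}_i\in\qty{0,1,2,\perp}$ and $\hat{M}_i\in\qty{0,1,\perp}$ (with $\perp$ the default value of an unset variable), so $O_i$ is a classical system of a fixed finite dimension $d_O$; $W_i\in\qty{0,1,\perp}$ is a finite classical random variable; and $S_i=K_iT_iG_i$ with $K_i\in\mathcal{K}_{\mathcal{F}}$ finite and $T_i,G_i\in\qty{0,1}$ is a finite (in fact classical) system. For requirement~2, I would write $W_i$ explicitly as a deterministic function of the classical registers already contained in $O_iS_i$: $W_i=\perp$ when $G_i=1$, $W_i=\mathbbm{1}_{\hat{\Pi}_i\neq 2}$ when $G_i=0,T_i=0$, and $W_i=\hat{M}_i$ when $G_i=0,T_i=1$. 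Since computing and copying a function of classical data does not disturb it, the derivation map $\mathcal{T}_i$ satisfies $\Tr_{Q_i}\circ\mathcal{T}_i(\sigma_{O_iS_i})=\sigma_{O_iS_i}$.

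The only requirement carrying real content is the Markov chain $O_1,\dots,O_{i-1}\leftrightarrow S_1,\dots,S_{i-1},E\leftrightarrow S_i$. The plan is to observe that $S_i=K_iT_iG_i$ is drawn entirely from fresh local randomness of the verifier in round~$i$: $K_i$ from an independent call to $\text{Gen}_{\mathcal{F}}(1^\lambda)$ and $T_i,G_i$ from independent Bernoulli trials. This randomness is, by construction, independent of the adversary's register $E$ and of everything generated in rounds $1,\dots,i-1$, so the global state factorizes as the fresh classical register $S_i$ tensored with the state of all earlier registers. Consequently $S_i$ is independent of $(O_1,\dots,O_{i-1})$ even after conditioning on $(S_1,\dots,S_{i-1},E)$, which is precisely the asserted Markov condition.

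I expect the substantive Markov step to be easy here, in contrast to its usual role as the delicate ingredient of EAT-based proofs, precisely because $S_i$ is pure verifier-side randomness. The one place that calls for care is the bookkeeping that makes the above tensor factorization manifest: one must ensure that the device's reply $Y_i$ (sent before $T_i$ and $G_i$ are revealed) and its quantum memory are routed into $R_i$ and not into $S_i$, so that the side information released each round is genuinely only the verifier's freshly sampled $K_iT_iG_i$. With the round map organized this way, requirement~3 follows immediately and requirements~1 and~2 reduce to reading off the protocol's alphabets and its rule for $W_i$.
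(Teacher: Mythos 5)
Your proposal is correct and follows essentially the same route as the paper's proof: verifying the three conditions of Definition~\ref{def:eat_channels} by noting the finite classical alphabets, that $W_i$ is a deterministic function of the classical registers $\hat{\Pi}_i\hat{M}_iK_iT_iG_i$, and that the Markov conditions hold because $K_i,T_i,G_i$ are freshly sampled verifier randomness independent of everything else. Your additional bookkeeping (routing $Y_i$, the trapdoor, and the device's memory into $R_i$, and including $\perp$ in the output alphabets) only makes explicit what the paper leaves implicit.
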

    
    \begin{proof}
        To prove that the constructed channels~$\qty{\mathcal{M}_{i}}_{i\in [n]}$ are EAT channels we need to show that the conditions are fulfilled:
        \begin{enumerate}
            \item $\qty{O_i}_{i\in[n]}=\qty{\hat{\Pi}_i \hat{M}_i}_{i\in[n]}$,~$\qty{S_i}_{i\in[n]}=\qty{K_i T_i G_i}_{i\in[n]}$ and~$\qty{Q_i}_{i\in[n]}=\qty{W_i}_{i\in[n]}$ are all finite-dimensional classical systems. $\qty{R_i}_{i\in[n]}$ are arbitrary quantum systems. Finally, we have
            \begin{equation*}
                d_O =
                d_{\hat{\Pi}_i} \cdot d_{\hat{M}_i} =
                \qty|\qty{0,1,2}| \cdot \qty|\qty{0,1}| = 6 < \infty \;.
            \end{equation*}
            \item For any~$i\in[n]$ and any input state~$\sigma_{R_{i-1}}$, $W_i$ is a function of the classical values~$ \hat{\Pi}_i \hat{M}_i K_i T_i G_i$. Hence, the marginal~$\sigma_{O_i S_i}$ is unchanged when deriving~$W_i$ from it.
            \item For any initial state~$\rho^{in}_{R_0 E}$ and the resulting final state~$\rho_{_{\mathbf{OSQ}E}}=\rho_{_{\mathbf{\Pi M K T G W}E}}$ the Markov-chain conditions
            \begin{equation*}
                {(\hat{\Pi} \hat{M})}_{1}, \cdots ,{(\hat{\Pi} \hat{M})}_{i-1} \leftrightarrow {(K T G)}_{1}, \cdots ,{(K T G)}_{i-1}, E \leftrightarrow {(K T G)}_{i}
            \end{equation*}
            trivially hold for all~$i\in [n]$ as~$K_i,T_i$ and~$G_i$ are chosen independently from everything else.\footnote{For a reader interested in randomness expansion protocols, note that in order to use less randomness as an initial resource one could reuse the keys $K_i$ in some of the rounds (similarly to what happens in standard DI randomness expansion protocols). The Markov-chain condition also holds when the keys are being reused and so one can still follow our proof technique.} \qedhere
        \end{enumerate}
    \end{proof}

    \begin{lemma}
        Let~$\beta,\omega,\gamma\in(0,1)$ and $g(\omega;\beta)$ be the function in Equation~\eqref{eq:single_variable_bound}.
        Let~$p$ be a probability distribution over~$\mathcal{W}=\qty{\perp, 0, 1}$ such that~$\gamma=1-p(\perp)$ and $\omega=p(1)/\gamma$. Define~$\Sigma(p) = \qty{\sigma_{R_{i-1}R'} : \mathcal{M}_i(\sigma)_{W_i} = p}$.
        Then, there exists a negligible function~$\xi(\lambda)$ such that
        \begin{equation}\label{eq:comp_min_tradeoff}
            (g(\omega; \beta)-\xi(\lambda)) (1 - \beta\gamma) \leq
            \inf_{\sigma_{R_{i-1} R'} \in \Sigma(p)}
            H{(\hat{\Pi}_i \hat{M}_i | K_i T_i G_i R')}_{\mathcal{M}_i (\sigma)} \; .
        \end{equation}
        In particular, this implies that for every~$\beta\in(0,1)$ the function
        \begin{equation}\label{eq:f_min}
                f_{\text{min}}(p) \coloneqq \qty(g(\omega; \beta) - \xi(\lambda)) \qty(1 - \beta\gamma)
        \end{equation}
            satisfies Definition~\ref{def:min_tradeoff_function} and is therefore a min-tradeoff function.
    \end{lemma}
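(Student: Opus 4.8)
The plan is to prove the substantive inequality~\eqref{eq:comp_min_tradeoff} by decomposing the conditional entropy along the classical register $T_i$, isolating the contribution of the preimage‑test measurement, and then invoking the single‑round bound of Lemma~\ref{lemma:entropy_bound_proof} through Corollary~\ref{cor:entropy_reduction}. Since $K_i,T_i,G_i$ are classical and are sampled by the verifier independently of the device, I would first expand
\[
    H(\hat\Pi_i \hat M_i \mid K_i T_i G_i R')_{\mathcal{M}_i(\sigma)} = \sum_{t\in\{0,1\}} \Pr(T_i = t)\, H(\hat\Pi_i\hat M_i \mid K_i G_i R', T_i = t)\;.
\]
When $T_i=1$ (equation test) the register $\hat\Pi_i$ carries its default value, so the summand collapses to $H(\hat M_i\mid\cdots)\geq 0$ and may be discarded; when $T_i=0$ the register $\hat M_i$ is trivial, so the surviving summand is $H(\hat\Pi_i\mid K_i G_i R', T_i=0)$. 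Since $T_i=1$ occurs only in test rounds and there with conditional probability $\beta$, we have $\Pr(T_i=0)=1-\beta\gamma$, which already produces the prefactor $(1-\beta\gamma)$.

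Next I would argue that, conditioned on $T_i=0$, the value of $G_i$ is immaterial: in both a preimage test round and a generation round the device applies the \emph{same} measurement $\tilde\Pi$, so $G_i$ is an independent classical label and $H(\hat\Pi_i\mid K_i G_i R', T_i=0)$ equals the single‑round quantity $H(\tilde\Pi\mid K_i R')$ on the device state. I would then identify the input purification $\sigma_{R_{i-1}R'}$ of round $i$ with the single‑round state: $R_{i-1}$ plays the role of the device space on which $\tilde\Pi,\tilde M$ act, while $R'$ plays the role of the purifying adversary $E$. Dropping the announced image $Y_i$ from the conditioning can only decrease entropy, so $H(\tilde\Pi\mid K_i R')\geq H(\tilde\Pi\mid K_i R' Y_i)=H(\tilde\Pi\mid E,Y)_\Phi$, and Corollary~\ref{cor:entropy_reduction} together with Lemma~\ref{lemma:entropy_bound_proof} lower‑bounds the latter by $g(\omega_p,\omega_m)-\xi(\lambda)$, where $\omega_p,\omega_m$ are the actual winning probabilities of the device defined by $\sigma$. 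This step requires that $\sigma$ specify an \emph{efficient} device in the sense of Definition~\ref{def:efficient_device} (polynomial‑size, possibly inefficiently‑prepared advice state, efficient measurements), which is exactly the regime where the computational‑indistinguishability assumption of Lemma~\ref{lemma:computational_indistinguishability} can be applied.

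The constraint $\sigma\in\Sigma(p)$ fixes only the marginal distribution of $W_i$, hence only the combination $(1-\beta)\omega_p+\beta\omega_m=p(1)/\gamma=\omega$, leaving $\omega_p$ free; this is precisely why the single‑variable bound~\eqref{eq:single_variable_bound} minimizes over $\omega_p$. Therefore $g(\omega_p,\omega_m)\geq \min_{\omega_p'} g\!\big(\omega_p',(\omega-(1-\beta)\omega_p')/\beta\big)=g(\omega;\beta)$ uniformly over $\sigma\in\Sigma(p)$. Chaining the three estimates yields $H(\hat\Pi_i\hat M_i\mid K_iT_iG_iR')_{\mathcal{M}_i(\sigma)}\geq (1-\beta\gamma)\big(g(\omega;\beta)-\xi(\lambda)\big)$ for every $\sigma\in\Sigma(p)$, and taking the infimum gives~\eqref{eq:comp_min_tradeoff}. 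The ``in particular'' claim then follows because~\eqref{eq:comp_min_tradeoff} is exactly the defining inequality of Definition~\ref{def:min_tradeoff_function}; the requisite differentiability and convexity of $p\mapsto f_{\min}(p)$ are inherited from the explicit form of $g$, and if $g$ fails to be convex one simply replaces it by a convex differentiable lower bound, which preserves the inequality and hence remains a legitimate min‑tradeoff function.

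The main obstacle I anticipate is the bookkeeping of the conditioning registers: justifying rigorously that marginalizing $G_i$ and dropping $Y_i$ leaves the single‑round quantity $H(\tilde\Pi\mid E,Y)$ intact, so that the non‑IID multi‑round channel $\mathcal{M}_i$ genuinely reduces to the single‑round lemma, and confirming that the round‑$i$ input state qualifies as an admissible polynomial‑size advice state so that the computational assumption applies round by round.
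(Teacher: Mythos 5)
Your proposal is correct and follows essentially the same route as the paper's proof: the paper also reduces to the single-round bound of Lemma~\ref{lemma:entropy_bound_proof} (via Equation~\eqref{eq:single_variable_bound} and Corollary~\ref{cor:entropy_reduction}), uses the decomposition over the classical $T_i$ with $\Pr(T_i=0)=1-\beta\gamma$ and non-negativity of the discarded branch, drops/adds $Y_i$ by monotonicity of conditioning, and notes that $G_i$ carries no extra information once $T_i$ is given. The only difference is cosmetic — the paper writes the chain of inequalities upward from $g(\omega;\beta)-\xi(\lambda)$ to $\frac{1}{1-\beta\gamma}H(\hat{\Pi}_i\hat{M}_i|K_iT_iG_iR')$, whereas you decompose the target entropy downward — and your handling of the implicit minimization over $\omega_p$ and of the efficiency/advice-state requirement is consistent with (indeed slightly more explicit than) the paper's.
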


    \begin{proof}
        Due to Lemma~\ref{lemma:entropy_bound_proof} and the consequent Equation~\eqref{eq:single_variable_bound}, the following holds for any polynomial sized state~$\sigma$ (not necessarily efficient):
        \begin{align*}
            g(\omega; \beta) - \xi(\lambda) & \leq H{(\hat{\Pi}_i | Y_i K_i, T_i = 0, R')}_{\mathcal{M}_i (\sigma)} \\
            \begin{split}
                & \leq \frac{1}{\Pr(T_i=0)} \left[\Pr(T_i=0) H{(\hat{\Pi}_i | Y_i K_i, T_i = 0, R')}_{\mathcal{M}_i (\sigma)}  \right. \\
                & \phantom{= \frac{1}{\Pr(T_i=0)} ~~} + \left. \Pr(T_i=1)H{(\hat{\Pi}_i | Y_i K_i, T_i = 1, R')}_{\mathcal{M}_i (\sigma)}\right]
            \end{split}
            \\
            & = \frac{1}{\Pr(T_i=0)} H{(\hat{\Pi}_i | Y_i K_i T_i R')}_{\mathcal{M}_i (\sigma)} \\
            & = ~~~~ \frac{1}{1 - \beta\gamma} H{(\hat{\Pi}_i | Y_i K_i T_i R')}_{\mathcal{M}_i (\sigma)} \\
            & \leq ~~~~ \frac{1}{1 - \beta\gamma} H{(\hat{\Pi}_i \hat{M}_i | K_i T_i R')}_{\mathcal{M}_i (\sigma)} \\
            &  = ~~~~ \frac{1}{1 - \beta\gamma} H{(\hat{\Pi}_i \hat{M}_i | K_i T_i G_i R')}_{\mathcal{M}_i (\sigma)}\;.
        \end{align*}
        For the last equality, note that the device only knows which test to perform while being unaware if it is for a generation round or not.
        Therefore, once~$T$ is given,~$G$ does not provide any additional information. \qedhere
    \end{proof}

    Using Theorem~\ref{theorem:eat}, we can bound the smooth min-entropy resulting in application of Protocol~\ref{protocol:simplified_accumulation}.
    \begin{equation}\label{eq:fin}
        \hmin{\varepsilon_s} \qty(\mathbf{\hat{\Pi} \hat{M}}| \mathbf{KTG})_{\rho _{| \Omega}} 
        \geq
        n f_{\text{min}}
        - \mu \sqrt{n}
        \;,
    \end{equation}
    where~$f_{\text{min}}$ is given by Equation~\eqref{eq:f_min}.
    We simplify the right-hand side in Equation~\eqref{eq:fin} to a single entropy accumulation rate,~$\mu_\mathrm{opt}$, and a negligible reduction~$\xi(\lambda)$.
    \begin{equation}\label{eq:mu_opt_def}
        \hmin{\varepsilon}
        \qty(\mathbf{\hat{\Pi} \hat{M}}| K T G)_{\rho _{| \Omega}}
        \geq
        n (\mu_{\text{opt}} (n, \omega, \gamma, \varepsilon_s, p_\Omega; \beta) 
        - \xi(\lambda) ) \; .
    \end{equation}
    Note that the differential of~$f_{\mathrm{min}}$ is unbounded.
    This prevents us from using the EAT due to Equation~\eqref{eq:second_order_term}.
    This issue is addressed by defining a new min-tradeoff function~$\tilde{f}_{\mathrm{min}}$ such that
    \begin{equation}
        \tilde{f}_\mathrm{min} (\omega; \omega_0) =
        \begin{cases}
            f_\mathrm{min} (\omega) & \omega \leq \omega_0 \\
            \frac{d}{d\omega}\left.f_\mathrm{min} (\omega)\right|_{\omega=\omega_0} (\omega - \omega_0) + f_\mathrm{min} (\omega_0) & \mathrm{otherwise}
        \end{cases}
        \; .
    \end{equation}
    We provide a number of plots for~$\mu_\text{opt}$ as a function of~$\omega$ for various values of~$n$ in Figure~\ref{fig:mu_opt_plots}. We remark that we did not fully optimize the code to derive the plots and one can probably derive tighter plots.

\subsection{Randomness rates}\label{sec:plots}
    In the previous section we derived a lower bound on $\hmin{\varepsilon}
        (\mathbf{\hat{\Pi} \hat{M}}| K T G)_{\rho _{| \Omega}}$. 
    One is, however, interested in a bound on $  \hmin{\varepsilon_s}(\mathbf{\hat{\Pi}}|\mathbf{KTG}E)_{\rho_{|\Omega}} $ instead. 
    To derive the desired bound we follow similar steps to those taken in the proof of~\cite[Lemma 11.8]{arnon2020device}.
    
    \begin{theorem}\label{lemma:entropy_isolation}
    Under the same conditions for deriving Equation~\eqref{eq:fin}, the following holds:
        \begin{align*}
            \begin{split}
                \hmin{\varepsilon_s}(\mathbf{\hat{\Pi}}|\mathbf{KTG}E)_{\rho_{|\Omega}} 
                & \geq
                n \qty(\mu_{\text{opt}} (n, \omega, \gamma, \varepsilon_s/4, p_\Omega; \beta)
                - \xi(\lambda) - \gamma) \\
                & \phantom{\leq ~ }
                - 2\log(7) \sqrt{1 - 2 \log(\frac{\varepsilon_s}{4} \cdot p_{\Omega} )}- 3\log(1 - \sqrt{1-{(\varepsilon_s/4)}^2}) \;.
            \end{split}
        \end{align*}
    \end{theorem}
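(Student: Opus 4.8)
The plan is to take the entropy accumulation bound~\eqref{eq:mu_opt_def}, already established for the joint string $\mathbf{\hat{\Pi}}\mathbf{\hat{M}}$, and convert it into a bound on $\mathbf{\hat{\Pi}}$ alone, mirroring the argument of~\cite[Lemma~11.8]{arnon2020device}. First I would re-run the entropy accumulation step with the smaller smoothing parameter $\varepsilon_s/4$ in place of $\varepsilon_s$, obtaining
\[
    \hmin{\varepsilon_s/4}(\mathbf{\hat{\Pi}}\mathbf{\hat{M}}\,|\,\mathbf{KTG}E)_{\rho_{|\Omega}} \geq n\left(\mu_{\text{opt}}(n,\omega,\gamma,\varepsilon_s/4,p_\Omega;\beta)-\xi(\lambda)\right),
\]
reserving the remaining smoothing budget of $3\varepsilon_s/4$ for the manipulations below. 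The only gap between this quantity and the target $\hmin{\varepsilon_s}(\mathbf{\hat{\Pi}}\,|\,\mathbf{KTG}E)_{\rho_{|\Omega}}$ is the presence of the equation-test register $\mathbf{\hat{M}}$ in the output.

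Second, I would remove $\mathbf{\hat{M}}$ via a chain rule for smooth min-entropy, in the form
\[
    \hmin{\varepsilon_s}(\mathbf{\hat{\Pi}}\,|\,\mathbf{KTG}E) \geq \hmin{\varepsilon_s/4}(\mathbf{\hat{\Pi}}\mathbf{\hat{M}}\,|\,\mathbf{KTG}E) - \hmax{\varepsilon_s/4}(\mathbf{\hat{M}}\,|\,\mathbf{\hat{\Pi}}\,\mathbf{KTG}E) - 3\log\left(1-\sqrt{1-(\varepsilon_s/4)^2}\right),
\]
where each auxiliary smoothing parameter is chosen to be $\varepsilon_s/4$ so that the combined smoothing on the left-hand side is $\varepsilon_s$. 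The last term collects the purified-distance penalties incurred when converting between the smoothing balls of the chain rule, reproducing the $-3\log(1-\sqrt{1-(\varepsilon_s/4)^2})$ of the statement. Note that conditioning the max-entropy on $\mathbf{\hat{\Pi}}$, rather than leaving it out, only decreases it, so this choice is safe.

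Third, and this is where the real work lies, I would bound the max-entropy of $\mathbf{\hat{M}}$. The structural fact to exploit is that $\hat{M}_i$ is a classical symbol over $\{0,1,\perp\}$ which equals its default value $\perp$ whenever $T_i\neq 1$, while the selection register $\mathbf{T}$ sits inside the conditioning system $\mathbf{KTG}$. Hence, conditioned on $\mathbf{T}$, the string $\mathbf{\hat{M}}$ carries content only on the $\#\{i:T_i=1\}$ equation-test rounds. Since $T_i=1$ occurs with probability $\beta\gamma<\gamma$ per round, this count concentrates strictly below $\gamma n$, so smoothing away the exponentially unlikely event $\{\#\{i:T_i=1\}>\gamma n\}$ and applying a fully-classical max-entropy estimate with per-symbol alphabet size $d_{\hat{M}}=3$ should yield
\[
    \hmax{\varepsilon_s/4}(\mathbf{\hat{M}}\,|\,\mathbf{\hat{\Pi}}\,\mathbf{KTG}E)_{\rho_{|\Omega}} \leq n\gamma + 2\log(7)\sqrt{1-2\log\left(\frac{\varepsilon_s}{4}\cdot p_\Omega\right)},
\]
in which $1+2d_{\hat{M}}=7$ supplies the $\log(7)$ factor and the $\sqrt{1-2\log(\frac{\varepsilon_s}{4}p_\Omega)}$ is the usual second-order fluctuation term under conditioning on $\Omega$. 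Chaining the three displays then gives the claimed inequality. I expect the main obstacle to be precisely this third step: establishing the max-entropy estimate with the correct first-order rate $n\gamma$ and the correct dimension-dependent second-order term while conditioning on the non-aborting event $\Omega$ (whose probability $p_\Omega$ re-weights the smoothing), and simultaneously keeping the four smoothing parameters consistent so that the chain rule closes at smoothing level $\varepsilon_s$.
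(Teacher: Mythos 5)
Your first two steps coincide with the paper's proof: it likewise feeds the accumulation bound~\eqref{eq:mu_opt_def} at smoothing parameter $\varepsilon_s/4$ into the smooth-entropy chain rule with both auxiliary parameters equal to $\varepsilon_s/4$, producing the same $-3\log(1-\sqrt{1-(\varepsilon_s/4)^2})$ term. One caution on your chain-rule variant: the paper's version subtracts $\hmax{\varepsilon_s/4}(\mathbf{\hat{M}}|\mathbf{KTG}E)_{\rho_{|\Omega}}$, whereas yours subtracts the \emph{smaller} quantity $\hmax{\varepsilon_s/4}(\mathbf{\hat{M}}|\mathbf{\hat{\Pi}}\mathbf{KTG}E)_{\rho_{|\Omega}}$. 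That is a strictly stronger inequality, so it cannot be justified by the monotonicity argument you give ("conditioning only decreases it"); it must be invoked as its own chain rule, which does exist in the literature. Since you immediately relax the conditioning anyway, this does not affect the outcome, but the stated justification is backwards.

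The genuine divergence, and the gap, is in your third step. The paper does not use a concentration argument at all: after dropping conditioning systems to reach $\hmax{\varepsilon_s/4}(\mathbf{\hat{M}}|\mathbf{T}E)_{\rho_{|\Omega}}$, it applies the EAT a \emph{second} time, in its max-entropy form, with $\mathbf{O}\rightarrow\mathbf{\hat{M}}$ and $\mathbf{S}\rightarrow\mathbf{T}$. Because $\hat{M}_i=\perp$ deterministically when $T_i=0$ and is binary when $T_i=1$, every input state satisfies $H(\hat{M}_i|T_iR')\le\Pr[T_i=1]\le\gamma$, so the constant function $f_{\max}(p)=\gamma$ is a valid max-tradeoff function with $\Vert\nabla f_{\max}\Vert_\infty=0$; the term $2\log(7)\sqrt{1-2\log(\frac{\varepsilon_s}{4}p_\Omega)}$ is then precisely the second-order coefficient of Equation~\eqref{eq:second_order_term} with $d_O=3$, and the bound holds for every $n$. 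Your Chernoff-plus-support-counting substitute has two concrete problems. First, the event $\{\#\{i:T_i=1\}>\gamma n\}$ can be smoothed away within budget $\varepsilon_s/4$ only when (roughly) $e^{-2((1-\beta)\gamma)^2 n}/p_\Omega\lesssim(\varepsilon_s/4)^2$ --- conditioning on $\Omega$ re-weights probabilities by $1/p_\Omega$ and excising an event of probability $\delta$ costs about $\sqrt{\delta}$ in purified distance --- which imposes $n\gtrsim((1-\beta)\gamma)^{-2}\log\bigl(16/(\varepsilon_s^2 p_\Omega)\bigr)$, a restriction absent from the theorem and from the paper's EAT route. Second, a support-size argument yields $\log(\text{support size})\le\gamma n$ with \emph{no} additive term; the factor $\log(7)=\log(1+2d_{\hat{M}})$ you quote is an artifact of the EAT's second-order term and cannot emerge from counting, so here you have pattern-matched the target formula rather than derived it. Your route can be repaired by cutting off at $\beta\gamma n+O\bigl(\sqrt{n\log(1/(\varepsilon_s^2 p_\Omega))}\bigr)$, which is dominated by the stated bound, but the clean, unconditional, and quantitatively matching argument is the second EAT application. (Incidentally, both your bound and the paper's write the second-order term without the factor $\sqrt{n}$ that Theorem~\ref{theorem:eat} attaches to it; that appears to be a typo inherited from the theorem statement.)
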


    \begin{proof}
        We begin with entropy chain rule~\cite[Theorem 6.1]{tomamichel2016}
        \begin{align*}
            \begin{split}
                \hmin{\varepsilon_s}(\mathbf{\hat{\Pi}}|\mathbf{KTG}E)_{\rho_{|\Omega}} & 
                \geq \hmin{\frac{\varepsilon_s}{4}}(\mathbf{\hat{\Pi} \hat{M}}|\mathbf{KTG}E)_{\rho_{|\Omega}}
                - \hmax{\frac{\varepsilon_s}{4}}(\mathbf{\hat{M}}|\mathbf{KTG}E)_{\rho_{|\Omega}} \\
                & ~~~ - 3\log(1 - \sqrt{1-{(\varepsilon_s/4)}^2})    \;.
            \end{split}
        \end{align*}
        The first term on the right-hand side is given in Equation~\eqref{eq:mu_opt_def}; it remains to find an upper bound for the second term. Let us start from
        \begin{equation*}
            \hmax{\frac{\varepsilon_s}{4}}(\mathbf{\hat{M}}|\mathbf{KGT}E)_{\rho_{|\Omega}}
            \leq \hmax{\frac{\varepsilon_s}{4}}(\mathbf{\hat{M}}|\mathbf{T}E)_{\rho_{|\Omega}} \;.
        \end{equation*}
        We then use the EAT again in order to bound $\hmax{\frac{\varepsilon_s}{4}}(\mathbf{\hat{M}}|\mathbf{T}E)_{\rho_{|\Omega}}$.
        We identify the EAT channels with~$\mathbf{O}\rightarrow\mathbf{\hat{M}},\mathbf{S}\rightarrow\mathbf{T}$ and~$E \rightarrow E$.
        The Markov conditions then trivially hold and the max-tradeoff function reads
        \begin{equation*}
            f_{\text{max}}(p) \geq
            \sup_{\sigma_{R_{i-1}R'}:\mathcal{M}_{i}(\sigma)_{W_i}=\omega}
            H(\hat{M}_i|T_i R')_{\mathcal{M}_i (\sigma)}\;.
        \end{equation*}
        Since the following distributions are satisfied for all~$i\in\qty[n]$
        \begin{align*}
            \Pr[\hat{M}_i =\perp|T_i =0] & = 1, \\
            \Pr[\hat{M}_i \in \qty{0,1}{|T_i =1}] & = 1, \\
            \Pr[T_i =1] & = \gamma,
        \end{align*}
        the max-tradeoff function is simply~$f_{\text{max}}(p)=\gamma$ (thus $\left\Vert \nabla f_{\text{max}} \right\Vert _{\infty} = 0)$.
        We therefore get
        \begin{equation*}
            \hmax{\frac{\varepsilon_s}{4}}
            \qty(\mathbf{\hat{M}}|\mathbf{T}E)
            _{\rho_{|\Omega}}
            \leq \gamma n + 2 \log (7)
            \sqrt{1 - 2 \log(\frac{\varepsilon_s}{4} \cdot p_{\Omega} )} \; .
            \qedhere
        \end{equation*}
    \end{proof}

        We wish to make a final remark. One could repeat the above analysis under the assumption that~$Y$ and~$X$ are leaked, to get a stronger security statement (this is however not done in previous works). In this case, $Y_i$ and $X_i$ should be part of the output $O_i$ of the channel.
        Then, the dimension of~$O_i$ scales exponentially with~$\lambda$, which worsens the accumulation rate~$\mu_{\text{opt}}$ since this results in exponential factors in Equation~\eqref{eq:eat}.
        In that case,~$\mu_{\text{opt}}$ becomes a monotonically decreasing function of~$\lambda$ and we get a certain tradeoff between the two elements in the following expression:
        \begin{equation*}
            \mu_{\text{opt}}(n, \omega, \gamma, \varepsilon_s, p_\Omega, \lambda; \beta) - \xi(\lambda) \; .
        \end{equation*}
        This means that there is an optimal value of~$\lambda$ that maximizes the entropy and to find it one requires an explicit bound on $\xi(\lambda)$.

    \begin{figure}
        \centering
        \includegraphics[scale=0.6]{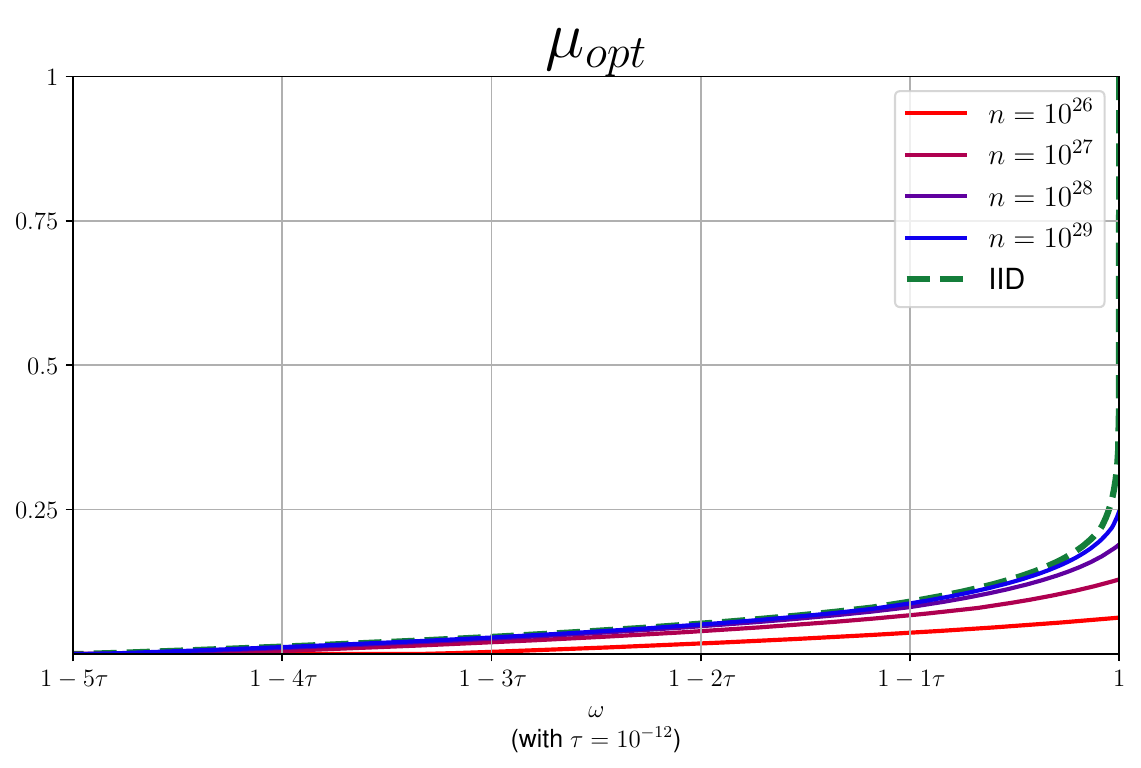}
        \caption
        {
            Plots of the entropy accumulation rates $\mu_{\text{opt}}(n, \omega, \gamma=1, \varepsilon_s = {10}^{-5}, p_\Omega = {10}^{-5})$ as a function of~$\omega$ for various values of~$n$. For reference, we include the accumulation rate of the IID case as a function of~$\omega$ (Fig~\ref{sub_fig:von_neumann}), to which all rates converge in the limit~$n\rightarrow\infty$.
            We neglect the negligible element in Equation~\eqref{eq:mu_opt_def}.
        }
        \label{fig:mu_opt_plots}
    \end{figure}   

    \section{Conclusion and outlook}\label{sec:conc}
    By utilizing a combination of results from quantum information theory and post-quantum cryptography, we have shown that entropy accumulation is feasible when interacting with a single device.
While this was previously done in~\cite{brakerski2021cryptographic} using ad hoc techniques, we provide a flexible framework that builds on well-studied tools and follows similar steps to those used in DI protocols based on the violation of Bell inequalities using two devices~\cite{arnon2019simple}. Prior to our work, it was believed that such an approach cannot be taken (see the discussion in~\cite{brakerski2021cryptographic}).
We remark that while we focused on randomness \emph{certification} in the current manuscript, one could now easily extend the analysis to randomness \emph{expansion, amplification} and \emph{key distribution} using the same standard techniques applied when working with two devices~\cite{arnon2019simple,kessler2020device}.

Furthermore, even though we carried out the proof here specifically for the computational challenge derived from a  NTCF, the methods that we establish are modular and can be generalized to other protocols with different cryptographic assumptions. 
For example, in two recent works~\cite{natarajan2023bounding,brakerski2023simple}, the winning probability in various ``computational games'' are  tied to the anti-commutator of the measurements used by the device that plays the game (see~\cite[Lemma 38]{natarajan2023bounding} and~\cite[Theorem 4.7]{brakerski2023simple} in particular). Thus, their results can be used to derive a bound on the conditional von Neumann entropy as we do in Lemma~\ref{lemma:entropy_bound_proof}.
From there onward, the final bound on the accumulated smooth min-entropy is derived exactly as in our work.

Apart from the theoretical contribution, the new proof method allows us to derive explicit bounds for a finite number of rounds of the protocol, in contrast to asymptotic statements. Thus, one can use the bounds to study the practicality of DI randomness certification protocols based on computational assumptions. 

For the \emph{current protocol}, in order to get a positive rate the number of repetitions~$n$ required, as seen in Figure~\ref{fig:mu_opt_plots}, is too demanding for actual implementation. In addition, the necessary observed winning probability is extremely high.
We pinpoint the ``source of the problem'' to the min-tradeoff function presented in Figure~\ref{fig:pi_von_neumann_both} and provide below a number of suggestions as to how one might improve the derived bounds. 
In general, however, we expect that for other protocols (e.g., those suggested in~\cite{natarajan2023bounding,brakerski2023simple}) one could derive better min-tradeoff functions that will bring us closer to the regime of experimentally relevant protocols. 
In fact, our framework allows us to compare different protocols via their min-tradeoff functions and thus can be used as a tool for bench-marking new protocols. 

We conclude with several open questions.
\begin{enumerate}
    \item In both the original analysis done in~\cite{brakerski2021cryptographic} and our work, the cryptographic assumption needs to hold even when the (efficient) device is getting an (inefficient) ``advice state''. Including the advice states is necessary when using the entropy accumulation in its current form, due to the usage of a min-tradeoff function (see Equation~\eqref{eq:comp_min_tradeoff}). 
    One fundamental question is therefore whether this is \emph{necessary} in all DI protocols based on post-quantum computational assumptions or not. 
    \item As mentioned above, what makes the protocol considered here potentially unfeasible for experimental implementations is its min-tradeoff function. Ideally, one would like to both decrease the winning probability needed in order to certify entropy as well as the \emph{derivative} of the function, which is currently too large. The derivative impacts the second-order term of the accumulated entropy and this is why we observe the need for a large number of rounds in the protocol-- many orders of magnitude more than in the DI setup with two devices. 
    Any improvement of Lemma~\ref{lemma:entropy_bound_proof} may be useful; when looking into the details of the proof, there is indeed some room for it. 
    \item Once one is interested in non-asymptotic statements and actual implementations, the unknown negligible function~$\xi(\lambda)$ needs to be better understood. The assumption is that $\xi(\lambda)=0$ as $\lambda\rightarrow\inf$ but in any given execution one does fix a finite $\lambda$. Some more explicit statements should then be made regarding $\xi(\lambda)$ and incorporated into the final bound. 
    \item In the current manuscript we worked with the EAT presented in~\cite{dupuis2020entropy}. A generalized version, that allows for potentially more complex protocols, appears in~\cite{metger2022generalised}. All of our lemmas and theorems can also be derived using~\cite{metger2022generalised} without any modifications. An interesting question is whether there are DI protocols with a single device that can exploit the more general structure of~\cite{metger2022generalised}. 
\end{enumerate}

    \begin{acknowledgments}
    The authors would like to thank Zvika Brakerski, Tony Metger, Thomas Vidick and Tina Zhang for useful discussions.
    This research was generously supported by the Peter and Patricia Gruber Award, the Daniel E. Koshland Career Development Chair, the Koshland Research Fund, the Karen Siem Fellowship for Women in Science and the Israel Science Foundation (ISF) and the Directorate for Defense Research and Development (DDR\&D), grant No. 3426/21.
\end{acknowledgments}

\bibliographystyle{alpha}
\bibliography{main.bbl}

\newcommand{\etalchar}[1]{$^{#1}$}
\begin{thebibliography}{BGKM{\etalchar{+}}23}

\bibitem[AF20]{arnon2020device}
Rotem Arnon-Friedman.
\newblock {\em Device-Independent Quantum Information Processing: A Simplified
  Analysis}.
\newblock Springer Nature, 2020.

\bibitem[AFDF{\etalchar{+}}18]{arnon2018practical}
Rotem Arnon-Friedman, Fr{\'e}d{\'e}ric Dupuis, Omar Fawzi, Renato Renner, and
  Thomas Vidick.
\newblock Practical device-independent quantum cryptography via entropy
  accumulation.
\newblock {\em Nature communications}, 9(1):459, 2018.

\bibitem[AFRV19]{arnon2019simple}
Rotem Arnon-Friedman, Renato Renner, and Thomas Vidick.
\newblock Simple and tight device-independent security proofs.
\newblock {\em SIAM Journal on Computing}, 48(1):181--225, 2019.

\bibitem[AH23]{aaronson2023certified}
Scott Aaronson and Shih-Han Hung.
\newblock Certified randomness from quantum supremacy, 2023.

\bibitem[BCC{\etalchar{+}}10]{berta2010uncertainty}
Mario Berta, Matthias Christandl, Roger Colbeck, Joseph~M Renes, and Renato
  Renner.
\newblock The uncertainty principle in the presence of quantum memory.
\newblock {\em Nature Physics}, 6(9):659--662, 2010.

\bibitem[BCM{\etalchar{+}}21]{brakerski2021cryptographic}
Zvika Brakerski, Paul Christiano, Urmila Mahadev, Umesh Vazirani, and Thomas
  Vidick.
\newblock A cryptographic test of quantumness and certifiable randomness from a
  single quantum device.
\newblock {\em Journal of the ACM (JACM)}, 68(5):1--47, 2021.

\bibitem[BCP{\etalchar{+}}14]{brunner2014bell}
Nicolas Brunner, Daniel Cavalcanti, Stefano Pironio, Valerio Scarani, and
  Stephanie Wehner.
\newblock Bell nonlocality.
\newblock {\em Reviews of modern physics}, 86(2):419, 2014.

\bibitem[BGKM{\etalchar{+}}23]{brakerski2023simple}
Zvika Brakerski, Alexandru Gheorghiu, Gregory~D. Kahanamoku-Meyer, Eitan Porat,
  and Thomas Vidick.
\newblock Simple tests of quantumness also certify qubits.
\newblock {\em arXiv preprint}, 2023.

\bibitem[BKVV20]{brakerski2020simpler}
Zvika Brakerski, Venkata Koppula, Umesh Vazirani, and Thomas Vidick.
\newblock Simpler proofs of quantumness.
\newblock {\em arXiv preprint arXiv:2005.04826}, 2020.

\bibitem[DF19]{dupuis2019entropy}
Fr{\'e}d{\'e}ric Dupuis and Omar Fawzi.
\newblock Entropy accumulation with improved second-order term.
\newblock {\em IEEE Transactions on information theory}, 65(11):7596--7612,
  2019.

\bibitem[DFR20]{dupuis2020entropy}
Frederic Dupuis, Omar Fawzi, and Renato Renner.
\newblock Entropy accumulation.
\newblock {\em Communications in Mathematical Physics}, 379(3):867--913, 2020.

\bibitem[DPVR12]{de2012trevisan}
Anindya De, Christopher Portmann, Thomas Vidick, and Renato Renner.
\newblock Trevisan's extractor in the presence of quantum side information.
\newblock {\em SIAM Journal on Computing}, 41(4):915--940, 2012.

\bibitem[GMP22]{gheorghiu2022quantum}
Alexandru Gheorghiu, Tony Metger, and Alexander Poremba.
\newblock Quantum cryptography with classical communication: parallel remote
  state preparation for copy-protection, verification, and more.
\newblock {\em arXiv preprint arXiv:2201.13445}, 2022.

\bibitem[GV19]{gheorghiu2019computationally}
Alexandru Gheorghiu and Thomas Vidick.
\newblock Computationally-secure and composable remote state preparation.
\newblock In {\em 2019 IEEE 60th Annual Symposium on Foundations of Computer
  Science (FOCS)}, pages 1024--1033. IEEE, 2019.

\bibitem[KAF20]{kessler2020device}
Max Kessler and Rotem Arnon-Friedman.
\newblock Device-independent randomness amplification and privatization.
\newblock {\em IEEE Journal on Selected Areas in Information Theory},
  1(2):568--584, 2020.

\bibitem[KMCVY22]{kahanamoku2022classically}
Gregory~D Kahanamoku-Meyer, Soonwon Choi, Umesh~V Vazirani, and Norman~Y Yao.
\newblock Classically verifiable quantum advantage from a computational bell
  test.
\newblock {\em Nature Physics}, 18(8):918--924, 2022.

\bibitem[LG22]{liu2022depth}
Zhenning Liu and Alexandru Gheorghiu.
\newblock Depth-efficient proofs of quantumness.
\newblock {\em Quantum}, 6:807, 2022.

\bibitem[Mah18]{mahadev2018classical}
Urmila Mahadev.
\newblock Classical verification of quantum computations.
\newblock In {\em 2018 IEEE 59th Annual Symposium on Foundations of Computer
  Science (FOCS)}, pages 259--267. IEEE, 2018.

\bibitem[MDCAF21]{metger2021device}
Tony Metger, Yfke Dulek, Andrea Coladangelo, and Rotem Arnon-Friedman.
\newblock Device-independent quantum key distribution from computational
  assumptions.
\newblock {\em New Journal of Physics}, 23(12):123021, 2021.

\bibitem[MFSR22]{metger2022generalised}
Tony Metger, Omar Fawzi, David Sutter, and Renato Renner.
\newblock Generalised entropy accumulation.
\newblock In {\em 2022 IEEE 63rd Annual Symposium on Foundations of Computer
  Science (FOCS)}, pages 844--850. IEEE, 2022.

\bibitem[MV21]{metger2021self}
Tony Metger and Thomas Vidick.
\newblock Self-testing of a single quantum device under computational
  assumptions.
\newblock {\em Quantum}, 5:544, 2021.

\bibitem[NZ23]{natarajan2023bounding}
Anand Natarajan and Tina Zhang.
\newblock Bounding the quantum value of compiled nonlocal games: from {CHSH} to
  {BQP} verification.
\newblock {\em arXiv preprint arXiv:2303.01545}, 2023.

\bibitem[PAB{\etalchar{+}}09]{Pironio_2009}
Stefano Pironio, Antonio Ac{\'{\i} }n, Nicolas Brunner, Nicolas Gisin, Serge
  Massar, and Valerio Scarani.
\newblock Device-independent quantum key distribution secure against collective
  attacks.
\newblock {\em New Journal of Physics}, 11(4):045021, apr 2009.

\bibitem[PGT{\etalchar{+}}22]{primaatmaja2022security}
Ignatius~W Primaatmaja, Koon~Tong Goh, Ernest Y-Z Tan, John T-F Khoo, Shouvik
  Ghorai, and Charles C-W Lim.
\newblock Security of device-independent quantum key distribution protocols: a
  review.
\newblock {\em arXiv preprint arXiv:2206.04960}, 2022.

\bibitem[Reg09]{regev2009lattices}
Oded Regev.
\newblock On lattices, learning with errors, random linear codes, and
  cryptography.
\newblock {\em Journal of the ACM (JACM)}, 56(6):1--40, 2009.

\bibitem[RK05]{renner2005universally}
Renato Renner and Robert K{\"o}nig.
\newblock Universally composable privacy amplification against quantum
  adversaries.
\newblock In {\em Theory of Cryptography: Second Theory of Cryptography
  Conference, TCC 2005, Cambridge, MA, USA, February 10-12, 2005. Proceedings
  2}, pages 407--425. Springer, 2005.

\bibitem[Sca19]{scarani2019bell}
Valerio Scarani.
\newblock {\em Bell nonlocality}.
\newblock Oxford University Press, 2019.

\bibitem[SCR{\etalchar{+}}22]{stricker2022towards}
Roman Stricker, Jose Carrasco, Martin Ringbauer, Lukas Postler, Michael Meth,
  Claire Edmunds, Philipp Schindler, Rainer Blatt, Peter Zoller, Barbara Kraus,
  et~al.
\newblock Towards experimental classical verification of quantum computation.
\newblock {\em arXiv preprint arXiv:2203.07395}, 2022.

\bibitem[TCR10]{tomamichel2010duality}
Marco Tomamichel, Roger Colbeck, and Renato Renner.
\newblock Duality between smooth min-and max-entropies.
\newblock {\em IEEE Transactions on information theory}, 56(9):4674--4681,
  2010.

\bibitem[Tom16]{tomamichel2016}
Marco Tomamichel.
\newblock {\em Quantum Information Processing with Finite Resources}.
\newblock Springer International Publishing, 2016.

\bibitem[VZ21]{vidick2021classical}
Thomas Vidick and Tina Zhang.
\newblock Classical proofs of quantum knowledge.
\newblock In {\em Advances in Cryptology--EUROCRYPT 2021: 40th Annual
  International Conference on the Theory and Applications of Cryptographic
  Techniques, Zagreb, Croatia, October 17--21, 2021, Proceedings, Part II},
  pages 630--660. Springer, 2021.

\bibitem[Wil13]{wilde2013gentle}
Mark~M. Wilde.
\newblock {\em Quantum Information Theory}.
\newblock Cambridge University Press, 2013.

\bibitem[Win16]{winter2016continuity}
Andreas Winter.
\newblock Tight uniform continuity bounds for quantum entropies: Conditional
  entropy, relative entropy distance and energy constraints.
\newblock {\em Communications in Mathematical Physics}, 347(1):291--313, mar
  2016.

\bibitem[ZKML{\etalchar{+}}21]{zhu2021interactive}
Daiwei Zhu, Gregory~D Kahanamoku-Meyer, Laura Lewis, Crystal Noel, Or~Katz,
  Bahaa Harraz, Qingfeng Wang, Andrew Risinger, Lei Feng, Debopriyo Biswas,
  et~al.
\newblock Interactive protocols for classically-verifiable quantum advantage.
\newblock {\em arXiv preprint arXiv:2112.05156}, 2021.

\bibitem[ZvLAF{\etalchar{+}}23]{zapatero2023advances}
V{\'\i}ctor Zapatero, Tim van Leent, Rotem Arnon-Friedman, Wen-Zhao Liu, Qiang
  Zhang, Harald Weinfurter, and Marcos Curty.
\newblock Advances in device-independent quantum key distribution.
\newblock {\em npj Quantum Information}, 9(1):10, 2023.

\end{thebibliography}

\end{document}